\documentclass[pdflatex,sn-mathphys-ay]{sn-jnl}

\usepackage[utf8]{inputenc}
\usepackage[T1]{fontenc}
\usepackage[english]{babel}
\usepackage{comment}
\usepackage{graphicx}%
\usepackage{tabularx}
\usepackage{amsmath,amssymb,amsfonts}%
\usepackage{amsthm}%
\usepackage{mathrsfs}%
\usepackage[title]{appendix}%
\usepackage{xcolor}%
\usepackage{textcomp}%
\usepackage{manyfoot}%
\usepackage{booktabs}%
\usepackage{hyperref,float,array,multirow}
\usepackage{bbold} 
\usepackage{colortbl}
\usepackage{makecell}
\usepackage{booktabs}
\usepackage{float}
\usepackage{hhline}
\usepackage{makeidx,geometry}
\usepackage{listings}%
\usepackage{algorithm,algorithmic}

\newtheorem{rem}{Remark}

\newtheorem{theo}{Theorem}
\newtheorem{cor}{Corollary}
\newtheorem{lem}{Lemma}

\newtheorem{assump}{Assumption}
\newtheorem{nota}{Notation}

\newcommand{\inas}{\overset{\scriptstyle a.s.}{\longrightarrow}}

\newcommand{\inprob}{\overset{\scriptstyle \mathbb{P}}{\longrightarrow}}

\theoremstyle{thmstyleone}%
\newtheorem{theorem}{Theorem}
\newtheorem{proposition}[theorem]{Proposition}%

\theoremstyle{thmstyletwo}%

\theoremstyle{thmstylethree}%

\begin{document}
	
	\title[Article Title]{Existence and consistency of weighted maximum
		likelihood estimator for extreme quantile regression.}
	
	
	\author*[1]{\fnm{Lucien M. } \sur{Vidagbandji}}\email{mahutin-lucien.vidagbandji@univ-lehavre.fr}
	\author[1]{\fnm{Alexandre} \sur{Berred}}\email{alexandre.berred@univ-lehavre.fr}
%
	\author[2]{\fnm{Cyrille} \sur{Bertelle}}\email{cyrille.bertelle@univ-lehavre.fr}
	\author[2]{\fnm{Laurent} \sur{Amanton}}\email{laurent.amanton@univ-lehavre.fr}
%
%

	\abstract{Estimating extreme conditional quantiles faces two major challenges: understanding complex nonlinear relationships between variables and accurately extrapolating into the tails of the distribution, where data are sparse. To address both issues, we introduce a novel approach that combines the theoretical framework of block maxima with the predictive power of generalized random forests. The conditional distribution of maxima is modeled by a generalized extreme value (GEV) distribution whose parameters depend on the covariates and are estimated via a weighted maximum likelihood procedure, with weights derived from generalized random forests to capture complex high-dimensional structures. The conditional quantile estimator follows from the inversion of the estimated GEV distribution function. We establish theoretical results ensuring the existence and consistency of the proposed weighted maximum likelihood estimator. The method provides a robust and flexible framework for extreme quantile regression. The performance of the proposed approach is illustrated using simulated data, as well as an application to financial portfolio losses from stocks listed on the NYSE, AMEX, and NASDAQ.
	}
	
	\keywords{ 	Extreme quantile regression, Generalized random forest, Maximum likelihood estimator, Extreme value theory.}

	\maketitle
 \section{Introduction }
Quantile regression is a statistical method based on the estimation of conditional quantiles. More precisely, given a response variable $Y \in \mathcal{Y} \subset \mathbb{R}$ depending on a covariate $X \in \mathcal{X} \subset \mathbb{R}^p$, the objective of quantile regression is to estimate the conditional quantile $\mathcal{Q}_{\tau}(Y \mid X = x)$ of the response variable $Y$ given $X = x$, for any $\tau \in (0,1)$ (\cite{koenker_regression_1978}).   Although this method provides more comprehensive information on the distribution of $Y$ as a function of the predictor variable $X$ than the conditional mean alone, it faces two main challenges. The first arises when the relationship between the quantile function $\mathcal{Q}_{\tau}(Y \mid X = x)$ and the covariate $x$ is complex and highly nonlinear, and when the dimension $p$ of the predictor space $\mathcal{X}$ is large. The second difficulty arises when we want to estimate a conditional quantile corresponding to an extreme probability level, that is, a quantile of level $\tau_n \to 1$ such that $n(1 - \tau_n)$ remains finite when $n \to \infty$, where $n$ denotes the size of the available sample. This situation poses a real challenge, as the estimation requires extrapolation in the tail of the distribution, which is often problematic due to the scarcity of data in this region. In this paper, we propose a method for estimating the conditional quantile that simultaneously addresses these two issues.

It should be noted that the methods proposed in the literature for estimating extreme conditional quantiles can broadly be classified into several groups
(\cite{wang_estimation_2013}).  
The first consists in directly applying quantile regression to model extreme conditional quantiles, without imposing any assumptions on the conditional distribution $Y \mid X = x$. However, this approach suffers from a lack of reliability when extrapolating in the tail of the distribution, which corresponds to an area of low data density (\cite{chernozhukov_extremal_2005}, \cite{bremnes_probabilistic_2004}, \cite{hallock_quantile_2001}).  The second extends extreme value theory to the regression framework by adopting a local estimation method, based on neighborhoods of $x$, for the conditional quantile of $Y$ given $X = x$. The performance of this approach therefore depends heavily on the density of the data around the considered point $x$ (\cite{gardes_estimating_2015}, \cite{daouia_kernel_2013-1}, \cite{gardes_integrated_2019}).  
The third combines quantile regression and extreme value theory, making certain assumptions about the behavior of the tails. These assumptions, which are often restrictive, such as linearity (\cite{chernozhukov_inference_2011}, \cite{wang_estimation_2012}) or tail equivalence (\cite{wang_estimation_2013}), may fail in real-world applications, particularly when the shape of the distribution varies according to the covariates. Finally, the fourth class is based on parametric models, under the assumption that the distribution tail follows either a generalized Pareto distribution (GPD) or a generalized extreme value distribution (GEV), with parameters depending on covariates, either parametrically (\cite{wang_estimation_2013}, \cite{wang_tail_2009}) or nonparametrically (\cite{gnecco_extremal_2024}, \cite{pasche_neural_2024}, \cite{velthoen_gradient_2023}, \cite{farkas_generalized_2024}). However, this approach requires delicate choices, such as the selection of thresholds or block sizes.   The conditional quantile of order $\tau$ is defined by $\mathcal{Q}_{\tau}(Y \mid X = x) = \inf \{ y \,: \, F_{Y \mid X = x}(y) \geq \tau \},$   and, when the conditional distribution $F_{Y \mid X = x}(\cdot)$ is continuous, this amounts to:
\begin{equation}\label{quant_invForm}
	\mathcal{Q}_{\tau}(Y|X=x)= F^{-1}_{Y|X=x}(\tau),
\end{equation}
for all $x \in \mathcal{X}$ and $\tau \in (0,1)$. For simplicity of notation, we will denote in the following $ \mathcal{Q}_{x}(\tau)$ to refer to the conditional quantile function $\mathcal{Q}_{\tau}(Y \mid X=x)$.   Extreme value theory facilitates extrapolation into the tail of the distribution and is widely used in the literature to address the second challenge of quantile regression mentioned above. For a detailed review, see for example \cite{beyerlein_quantile_2014-1} and \cite{chernozhukov_extremal_2017}.   Regarding the first challenge, several models based on statistical learning methods have been proposed (\cite{meinshausen_quantile_2006}, \cite{athey_generalized_2019}, \cite{chaudhuri_nonparametric_2002}). More recent works have attempted to address both challenges simultaneously, mainly by combining the Peak Over Threshold (POT) approach from extreme value theory with different statistical learning methods. Among these are models based on gradient boosting (\cite{velthoen_gradient_2023}), generalized additive models (\cite{youngman_generalized_2019}), neural networks (\cite{pasche_neural_2024}), and models using generalized random forests (\cite{gnecco_extremal_2024}).   The method proposed in this work is an extension of these approaches within the block maxima (BM) framework, which is preferred in many fields when modeling extremes, particularly in meteorology \cite{boudrissa_modelling_2017}, risk analysis \cite{calabrese_estimating_2015}, and hydrology.
More specifically, in order to facilitate extrapolation into the tail of the distribution and address the second challenge of quantile regression, we use the generalized extreme value (GEV) distribution, with parameters depending on the covariate $x$, to model block maxima. These parameters are then estimated via weighted likelihood, with weights obtained from the generalized random forest (grf) method introduced by \cite{athey_generalized_2019}.   Thus, the distribution $F_{Y \mid X=x}(\cdot)$ in equation~(\ref{quant_invForm}) is approximated by the GEV distribution, explicitly defined by:
\begin{equation*}\label{gev_condiDistribution}
	G_{(\xi(x),\mu(x),\sigma(x))}(z)=\begin{cases}
		\exp\left(-\left(1+\xi(x) \dfrac{z-\mu(x)}{\sigma(x)}\right)_{+}^{-\frac{1}{\xi(x)}}\right) &if \quad\xi(x) \neq0,\\
		\exp \left(-\exp \left(-\frac{z-\mu(x)}{\sigma(x)}\right)\right)& if \quad\xi(x)=0, \\
	\end{cases} 
\end{equation*}
defined on  $\Big\{z\in \mathbb{R}: 1+\xi(x) \dfrac{z-\mu(x)}{\sigma(x)}>0\Big\}$ for all   $ x \in \mathcal{X}$ with  $a_{+}=\max\{0,a\}.$  We estimate $\theta(x)=\left(\mu(x),\sigma(x),\xi(x)\right)$ by 
$$\hat{\theta}_n(x)\in \arg\max_{\theta \in \Theta} L_n(\theta;x)$$
where $L_n(\theta;x)$ is given by
$	L_n(\theta,x)=\sum_{i=1}^{n}w_i(x)\ell_{\theta}(z_i)$	
with  $\ell_{\theta}(z)=\log{\left(\dfrac{dG_{ \mu, \sigma, \xi}}{dz_i}(z)\right)},$ and $w_i(x)$ for all $i=1,\cdots,n,$ representing weights obtained through the grf method. These weights are used to address the first challenge, namely capturing the complex structure in the data and facilitating estimation when the dimension $p$ of the covariate space $\mathcal{X}$ is large. Based on this estimation, the conditional quantile is obtained by plugging the estimated GEV distribution with parameter $\hat{\theta}_n(x)$ into equation~(\ref{quant_invForm}), thereby yielding the conditional quantile estimator. 

The main contribution of this paper is therefore theoretical. Under a conditional max-domain-of-attraction assumption and suitable regularity conditions on the grf weights, the distribution of the covariates associated with the block maxima, and the block size, we establish the existence of a local weighted maximum likelihood estimator and prove its consistency. More specifically, for a fixed covariate value $x,$ we show that the estimated shape parameter converges in probability to the conditional extreme value index, while the location and scale parameters converge to the corresponding normalizing sequences. As a consequence, the conditional extreme quantile estimator obtained by inversion of the fitted GEV distribution is also consistent. The proof combines the asymptotic arguments developed for GEV maximum likelihood estimation by \cite{dombry_existence_2015} with properties of generalized random forest weights, particularly their localization and boundedness properties. This theoretical result provides a foundation for the use of grf-weighted GEV estimation in conditional extreme quantile regression. It also complements recent work on generalized random forest methods for extreme quantile regression (see \cite{vidagbandji_generalized_2025} and \cite{vidagbandji_penalized_2026-1}), which focused primarily on methodological and applied aspects. In contrast, the present paper focuses on the theoretical properties of the underlying unpenalized estimator.

The remainder of the paper is organized as follows. Section~\ref{section_background} introduces the notation and provides the necessary background on generalized random forests and block maxima approach from extreme value theory,  which are essential for establishing the consistency result of the proposed estimator, which will be presented in Section~\ref{section_mainContribution}. The Section~\ref{section_proof} is devoted to the proof of the existence and  the consistency theorem. Finally, Sections~\ref{section_simulation} and~\ref{section_realData} assess the performance of our approach through simulations and real data, comparing it with several existing methods.

\section{Proposed method}\label{section_background} 
We first introduce the statistical framework underlying the proposed estimator. We briefly recall generalized random forests for conditional quantile estimation and the block maxima approach from extreme value theory. These ingredients are then combined to construct a locally weighted GEV likelihood for conditional extreme quantile estimation.
\subsection{Quantile regression and Generalized random forest}\label{section_QRandGRF}
Let \((X_1,Y_1), \ldots, (X_n,Y_n)\)  be an $i.i.d.$ sample drawn form the joint distribution of  $(X,Y)$, where \(X_i \in\mathcal{X}   \subset  \mathbb{R}^p\) denotes a vector of covariates, and  \(Y_i \in \mathcal{Y} \subset \mathbb{R}\) the response variable. In quantile regression, the objectives is to estimate the conditional quantile of probability level $\tau_n \in (0,1),$ defined as \begin{equation}\label{quantile_infForm}
	\mathcal{Q}_{x}(\tau_n) = \inf \{ y : F_{Y|X=x}(y) \geq \tau_n \}
\end{equation} 
where  $F_{Y|X=x}(.)$ denotes the conditional distribution function of   $Y$ given $X = x.$
Another characterization of the conditional quantile is based on the following optimization problem: 
\[
\mathcal{Q}_{x}(\tau_n)  = \arg\min_q \, \mathbb{E}\left[\rho_{\tau_n}(Y, q) \mid X = x\right],
\]
where  $\rho_{\tau_n}(., .) $ is the asymmetric loss function defined as
\begin{equation}\label{quantile_loss}
	\rho_{\tau_n}(Y,q) = \begin{cases}
		\tau_n (Y-q),&   Y-q\geq 0 \\
		(1-\tau_n)(Y-q) ,&  Y-q<0.
	\end{cases}
\end{equation}
A variety of conditional quantile estimators have been proposed based on this formulation, under both parametric and nonparametric assumptions on the quantile function (\cite{chernozhukov_extremal_2005}, \cite{chernozhukov_extremal_2017}, \cite{buhai_quantile_2005}, \cite{wang_estimation_2012}, \cite{wang_estimation_2013}, \cite{fakoor_flexible_2023}).
Based on this characterization and  without parametric assumptions on the conditional distribution, a natural weighted approximation of $\mathcal{Q}_{x}(\tau)$ is
\[
\widehat{\mathcal{Q}}_{x}(\tau_n) = \arg\min_q \sum_{i=1}^n w_i(x) \rho_{\tau_n}(Y_i, q),
\]
where \(w_i(x)\) is a localized weight function reflecting the influence of the observation \(i\) in a neighborhood of \(x\).

Using the characterization in equation~(\ref{quant_invForm}), \cite{meinshausen_quantile_2006} proposed a method based on the random forests introduced by \cite{breiman_random_2001} to construct an empirical estimator of the conditional distribution function :
$\hat{F}(y \mid X = x) = \sum_{i=1}^n w_i(x) \mathbb{1}_{\{Y_i \leq y\}}.
$ This estimated distribution function is then plugged into equation~(\ref{quant_invForm}) to obtain an estimate of the conditional quantile. The weights $w_i(x)$ are obtained via the random forest method, which is based on conditional mean estimation ; hence $w_i(x)$ tends to be large for   $i \in \{1,\cdots,n\}$ whenever $\mathbb{E}\left(Y|X=x\right)~\approx~\mathbb{E}\left(Y|X=X_i\right)$ (\cite{lin_random_2006}, \cite{meinshausen_quantile_2006}). However, the situation differs when estimating conditional quantiles. As shown by  \cite{athey_generalized_2019} and \cite{gnecco_extremal_2024}, the weight $w_i(x)$ may not be large for
$i\in \{1,\cdots,n\},$ even if   
$\mathcal{Q}_{x}(\tau)~\approx ~\mathcal{Q}_{X_i}(\tau).$ Consequently, the classical random forest method of \cite{breiman_random_2001} fails to capture the complex structure of the quantile function. The generalized random forest, introduced by \cite{athey_generalized_2019} for quantile regression problems, overcomes this limitation and provides a more flexible and targeted estimation of heterogeneous effects, particularly for conditional quantiles.

The random forest method relies on aggregating the predictions of \( B \) 
decision trees to produce an estimator that is more stable, robust, and accurate than any single tree considered in isolation \cite{breiman_random_2001}. For an observation 
$ x \in \mathcal{X},$    the prediction $\eta_b(x)$ 
from the  $ b^{\text{th}} $ tree in the regression context is defined as the average of the responses  $Y_i$ corresponding to observations  $X_i$ that fall in the same terminal region $R_b(x)$ as $x:$ 
\[
\eta_b(x) = \sum_{i=1}^{n} \frac{ \mathbb{1}_{\{X_i \in R_b(x)\}} Y_i }{ \left| \{i : X_i \in R_b(x)\} \right| }, \quad b = 1, \ldots, B,
\]
where \( R_b(x) \subset \mathbb{R}^p \) denotes the leaf or terminal region containing  \( x \) in the tree \( b, \) and $|E|$ represents the cardinality of the set \( E \).
The final random forest estimator is obtained by averaging the predictions across all $B$ trees:
\[
\eta(x) = \sum_{i=1}^{n} w_i(x) Y_i,
\]
where the similarity weights \( w_i(x) \) are given by 
\begin{equation} \label{grf_weight}
	w_i(x) = \frac{1}{B} \sum_{b=1}^{B} \frac{ \mathbb{1}_{\{X_i \in R_b(x)\}} }{ \left| \{i : X_i \in R_b(x)\} \right| }.
\end{equation}
These weights  \( w_i(x) \) reflect the proximity between the observation 
\( x \) and  each observation \( X_i, \) as induced by the structure of the trees in the forest. The generalized random forest  approach extends the classical method by constructing trees based on a customizable loss function suited to the learning task. In the case of quantile estimation, the loss function is that given in equation~\eqref{quantile_loss}. The grf similarity weights  \( w_i(x) \) retain the same structure as in classical random forests, with the key difference being the splitting criterion used to partition the data. We employ this learning method to capture the complex structure of the quantile function and to facilitate estimation when the dimension of the covariate space is high. The specific properties of this method, which are essential for studying the consistency of the proposed estimator, will be recalled in Section~\ref{section_mainContribution}.

\subsection{Extreme quantile regression based on the BM approach}

The block maxima  method is a widely used approach for modeling extreme values across various fields. In this work, it will be employed to facilitate extrapolation in the tail of the distribution, thereby allowing the estimation of quantiles at probability levels $\tau_n \to 1$ such that $n(1-\tau_n)$ remains finite as $n \to +\infty$. This approach addresses the challenge of estimating conditional quantiles at extreme probability levels. For a more in-depth understanding of the extreme value theory (EVT) results presented in this section, we refer to \cite{de_haan_extreme_2006} and \cite{coles_introduction_2001-1}. We first recall some classical results in the unconditional case, before showing how this approach can be combined with the generalized random forest method to simultaneously tackle the issues identified in quantile regression.

The BM method is based on the limiting theorem describing the asymptotic behavior of the maximum of a sequence of independent and identically distributed random variables $Y_1, \dots, Y_m$ with an unknown probability distribution $F.$ This fundamental result originates from the pioneering work of \cite{fisher_limiting_1928} and \cite{gnedenko_sur_1943},  who established conditions such that there exist normalization sequences 
$a_m > 0$ and $b_m \in \mathbb{R}$ such that
\begin{equation} \label{Maxdomain_attracUnivar}
	\lim_{m \to +\infty} F^{m}\left(a_m y + b_m \right)= G_{\xi}(y), \quad \text{for all } y \in \mathbb{R},
\end{equation}
where \( G_{\xi} \) is a non-degenerate probability distribution defined by
\[
G_{\xi}(y) = \exp\left( - (1 + \xi y)^{-1/\xi} \right), \quad \text{ for  } 1 + \xi y > 0.
\] 
This law is known as the Generalized Extreme Value (GEV) distribution.  
A distribution function \(F\) that satisfies relation \eqref{Maxdomain_attracUnivar} is said to belong to the maximum domain of attraction of \(G_{\xi}\)  (\cite{de_haan_extreme_2006}), which is denoted by \(F \in \mathcal{D}(G_{\xi})\).  

To  illustrate the block maxima approach, consider a sequence of independent and identically distributed random variables \(Y_1, \dots, Y_N\), following a probability distribution \(F\) that belongs to the maximum domain of attraction \(\mathcal{D}(G_{\xi_0})\).  
The BM method consists in partitioning these data into \(n\) disjoint blocks of approximately equal size \(m > 1\), defined as:  
\[
B_{k,m} = \{Y_{(k-1)m+1}, \dots, Y_{km}\}, \quad k = 1, \dots, n.
\]  
For each block, we consider the maximum statistic:   $Z_k = \max_{i \in B_{k,m}} Y_i.$  
Since the \(Y_i\) are i.i.d., the distribution of the block maximum \(Z_k\) of size \(m\) is given by \(F^m\). According to the classical limit theorem for maxima, this distribution converges, after normalization, to the GEV distribution with parameters \((a_m, b_m, \xi_0)\).   The BM method assumes that the block maxima \(Z_1, \dots, Z_n\) follow a GEV distribution and are independent. The general form of the GEV distribution is given by:
\begin{equation} \label{gev_dist}
	G_{\mu, \sigma, \xi}(z) =
	\begin{cases}
		\exp\left( -\left(1 + \xi \dfrac{z - \mu}{\sigma} \right)_{+}^{-1/\xi} \right), & \xi \neq 0, \\
		\exp\left( -\exp\left( -\dfrac{z - \mu}{\sigma} \right) \right), & \xi = 0,
	\end{cases}
\end{equation} defined for $1 + \xi \dfrac{z - \mu}{\sigma}>0.$ The parameters $\mu \in  \mathbb{R},$ $\sigma>0$ and   $\xi \in \mathbb{R}$ are respectively the location, scale, and shapes parameters. The following section outlines how this approach is applied in the context of extreme quantile regression.

\section{Main  results}\label{section_mainContribution}
We combine the BM approach, which facilitates extrapolation in the tail of the distribution, with the generalized random forest method, which captures complex structures in the data and improves estimation when the covariate space is high-dimensional. To illustrate the proposed method, consider a sequence of i.i.d. random vectors \((X_1, Y_1), \dots, (X_N, Y_N)\) distributed as the random vector $(X, Y)$, where $X \in \mathcal{X} \subset \mathbb{R}^p$ and $Y \in \mathcal{Y} \subset \mathbb{R}$. We assume that the sample size is $N = n \times m$, and that we form $n$ blocks of size $m$, so that the $k^{\text{th}}$ block, for $k = 1, \dots, n$, is given by
\begin{equation*}
	B_{k,m}=\{(X_{(k-1)m+1},Y_{(k-1)m+1}),\cdots,(X_{km},Y_{km})\}
\end{equation*}	
Let us define
\begin{equation}\label{max_Z_km}
	Z_{k,m}=\max \{ Y_i: (X_i,Y_i)\in B_{k,m}\}	
\end{equation}
and  $X_{k,m}$ denote the covariate $X_i$ corresponding to the maximizing $Y_i$ in $\{ Y_i : (X_i, Y_i) \in B_{k,m} \}$.  
Thus, $(X_{k,m}, Z_{k,m})$ for $k = 1, \dots, n$ forms the block maxima sample with respect to the response variable $Y.$  For all $x \in \mathcal{X}$, let $F_x(\cdot)$ denote the conditional distribution of $Y$ given $X = x.$ 
\begin{assump}[Conditional max-domain of attraction]\label{assump1}
	For every $x \in \mathcal{X}$, the conditional distribution function 
	$F_x(.)$ 
	belongs to the max-domain of attraction of an extreme value distribution 
	$G_{\xi_0(x)}.$
	Equivalently, there exist normalizing functions $a_m(x) > 0$ and 
	$b_m(x) \in \mathbb{R}$ such that, for all  $z\in\mathbb{R}$,
	\begin{equation}\label{max_domConv}
		\lim_{m \to +\infty}
		F_x^m\left(a_m(x)z + b_m(x)\right)
		=
		G_{\xi_0(x)}(z),
	\end{equation}
	where
	\[
	G_{\xi_0(x)}(z)
	=
	\exp\left(
	- \big(1 + \xi_0(x) z\big)^{-1/\xi_0(x)}
	\right),
	\qquad
	1 +  z\xi_0(x) > 0.
	\]
	In the case $\xi_0(x)=0$, the distribution $G_{\xi_0(x)}$ is understood as 
	the continuous limit of $G_{\xi_0}(z)$ as $\xi_0 \to 0.$
\end{assump}
This assumption is the conditional version of the convergence in (\ref{Maxdomain_attracUnivar}) and implies that the normalized block maxima
converge in distribution to $G_{\xi_0(x)}$ as $m \to +\infty$.
\begin{rem}\label{remarqDefi_Fm} Suppose $F_x\in \mathcal{D}(G_{\xi_0(x)}) $ and  denote the  corresponding normalization sequences $a_m(x)$ and $b_m(x)$. 
	The   block maximum  $Z_{k,m}$ given $X_{k,m}=x$ has distribution function $F_x^m(z)$ which we shall denote by $F_m(z|x)$ for notational simplicity. In other words, for any $m\geq 1,$ we have
	\[
	F_m(z \mid x)
	:=
	F_x^m\big(z\big).
	\]
	Note that $F_m(\cdot \mid X_{k,m})$ is the true conditional distribution function of $Z_{k,m}$ given $X_{k,m}.$ 	Under Assumption~\ref{assump1}, we have the  convergence
	\begin{equation}\label{limit_of_Fm}
		\lim_{m \to +\infty}
		F_m\left(a_m(x) z + b_m(x) \mid x\right)
		=
		G_{\xi_0(x)}(z),
		\qquad z \in \mathbb{R}.
	\end{equation}
\end{rem} 

We model the conditional distribution of $Z_{k,m}$ given $X_{k,m}=x$ by a generalized  extreme values  distribution with parameters $(\mu(x),\sigma(x),\xi(x)),$ when $m\to \infty$.
In the conditional setting, the parameters $\mu$, $\sigma$, and $\xi$ of the GEV distribution depend on $x$ and are described by the functions  $\mu : \mathcal{X} \to \mathbb{R}$, $\sigma : \mathcal{X} \to \mathbb{R}^*_+$, and $\xi : \mathcal{X} \to \mathbb{R}$.   To address the extrapolation problem in the tail, we approximate the conditional distribution in (\ref{quant_invForm}) by this conditional GEV distribution.   We then define the estimator $\hat{\theta}_n(x)=\left(\hat{\mu}_n(x), \hat{\sigma}_n(x), \hat{\xi}_n(x)\right)$ 
of the conditional GEV parameters as the value of $\theta$ that maximizes the local weighted log-likelihood function defined by
\begin{equation}\label{log_vraisemblancePond}
	L_n(\theta;x)=\sum_{k=1}^{n}w_k(x)\ell_{\theta}(z_{k,m}), \quad x \in \mathcal{X},	
\end{equation}
where  $\ell_{\theta}(z)$ is equal to : 
$$\ell_{\theta}(z)=\begin{cases}
	-\log(\sigma)-(1+\frac{1}{\xi})\log \left(1+\xi\frac{z-\mu}{\sigma}\right)-\left(1+\xi\frac{z-\mu}{\sigma}\right)^{-\frac{1}{\xi}}&  \text{if}\quad  1+\xi\frac{z-\mu}{\sigma}>0,\\
	-\infty & \text{otherwise,}
\end{cases}$$ when  $\xi \neq 0,$ and
$$\ell_{\theta}(z)= -\log(\sigma)-\left(\frac{z-\mu}{\sigma}\right)-\exp{\left(-\frac{z-\mu}{\sigma}\right)}$$ when $\xi= 0.$ 
The weights $w_k(x)$ are obtained using the generalized random forest method  briefly described in Section~\ref{section_QRandGRF}. The weights $w_k(x)$ depend on the sample size~$n$, since they are derived from the generalized random forest built from the entire dataset.    
The conditional quantile of probability level $\tau_n$, with $\tau_n$ close to $1$, is then obtained by substituting the estimated GEV distribution into equation~(\ref{quant_invForm}) and is given by \begin{equation*}
	\widehat{\mathcal{Q}}^Z_{x}(\tau_n)= \begin{cases}
		\hat{\mu}(x) + \dfrac{\hat{\sigma} (x)}{\hat{\xi} (x)}\left(\left(-\log(\tau_n)\right)^{-\hat{\xi} (x)}-1\right)  & \text{if}\quad  \hat{\xi} (x)\neq 0,\\
		\hat{\mu}(x) - \hat{\sigma}(x)\log \left( -\log\left(\tau_n\right)\right)  & \text{if}\quad \hat{\xi}(x)= 0. 
\end{cases}\end{equation*} The conditional quantiles of the original data are therefore given, for all $x \in \mathcal{X},$ by
\begin{equation}\label{GEVestiquantile}
	\widehat{\mathcal{Q}}^Y_{x}(\tau_n)= \begin{cases}
		\hat{\mu}(x) + \dfrac{\hat{\sigma} (x)}{\hat{\xi} (x)}\left(\left(-\log(\tau_n^m)\right)^{-\hat{\xi} (x)}-1\right)  & \text{if}\quad  \hat{\xi} (x)\neq 0,\\
		\hat{\mu}(x) - \hat{\sigma}(x)\log \left( -\log\left(\tau_n^m\right)\right)  & \text{if}\quad \hat{\xi}(x)= 0. 
\end{cases}\end{equation}

The support of the GEV distribution is given by 
$\Big\{ z \in \mathbb{R} : 1 + \xi \tfrac{z - \mu}{\sigma} > 0 \Big\}, \quad \text{if } \xi \neq 0,$
and by $\mathbb{R}$ if $\xi = 0$. This support depends on the parameters of the GEV distribution, which are unknown in practice. This makes it difficult to establish the usual regularity conditions required to guarantee the classical asymptotic properties of the maximum likelihood estimator (MLE).  It should be noted that the weighted likelihood defined in equation~(\ref{log_vraisemblancePond}) does not admit a global solution. Indeed, \cite{smith_maximum_1985} showed that the MLE does not exist if $\xi \leq -1$, that it is asymptotically normal when $\xi > -0.5$, and that it nevertheless remains consistent for $\xi > -1$. Along the same lines, \cite{dombry_existence_2015} proved the existence and consistency of the MLE of the GEV parameters for all $\xi > -1$, under a first-order extreme value condition. Recently, \cite{dombry_maximum_2019} provided a formal proof of the asymptotic normality of the GEV MLE.  We say that $\hat{\theta}(x)$ is a local weighted maximum likelihood estimator if it maximizes (\ref{log_vraisemblancePond}) over $\Theta$, where  $\Theta \subset \mathbb{R} \times (0,+\infty) \times (-1,+\infty).$ This local estimator is defined by
\begin{equation}\label{MLE}
	\hat{\theta}_n(x)\in \arg\max_{\theta \in \Theta} L_n(\theta;x), \quad \forall x\in \mathcal{X}.
\end{equation}
In order to establish the consistency of this estimator, we introduce a set of assumptions on the weights $w_k(\cdot)$ that ensure the convergence of $\hat{\theta}_n(x)$. 
Assumptions~\ref{assump_diam_convProba} and~\ref{assump_wk} gather the regularity conditions on the forest weights required in our proof. 
These assumptions are adapted from \cite{gnecco_extremal_2024} and \cite{athey_generalized_2019}.
\begin{assump} \label{assump_diam_convProba}
	Let $b = 1, \dots, B$ denote a tree in the forest, and let $x \in \mathcal{X}$ be a fixed predictor point. 	Define the diameter of the leaf $R_b(x)$ by
	\[
	\mathrm{diam}(R_b(x)) := \sup \{ \|y - x\|_2 : y \in R_b(x) \}.
	\]
	Let $s < n$ denote the number of observations used to grow the tree. 
	We assume that the diameter of the leaf $R_b(x)$ converges in probability to zero, that is, for every $\varepsilon > 0$,
	$$
	\mathbb{P}[\mathrm{diam}(R_b(x)) > \varepsilon] \to 0 \quad \text{as } s \to \infty.
	$$	
\end{assump}

\begin{assump}\label{assump_wk} \label{Honesty}
	For every $x\in\mathcal{X}$,	the weights satisfy $\sum_{k=1}^{n} w_k(x) = 1$ and, for all $k = 1, \dots, n$, we have
	\[
	\max_{1 \leq k \leq n} w_k(x) \leq \frac{s}{n}.
	\]	
	where $s$ denotes the subsample size used in the construction of the trees, with $s \to \infty$ and $ s/n \to 0$ as $n \to \infty$. 
	We assume that the forest is honest, following the framework of \cite{wager_estimation_2018} and \cite{athey_generalized_2019}.  Thus, for each $k$, the weight $w_k(x)$ is independent of $Z_{k,m}$ conditional on $X_{k,m}=x$. 
\end{assump} 	
Finally, we introduce assumptions regarding the distribution of the covariate $X_{k,m}$ concomitant to the maximum, as well as the conditional distribution $F_m(\cdot \mid X_{k,m} = x)$. These conditions are essential for the proof of Lemma~\ref{lemme2} and Lemma~\ref{lemma4} and are analogous to those considered by \cite{gnecco_extremal_2024} and \cite{meinshausen_quantile_2006}. Note that $X_{k,m}$ is the covariate associated with the block maximum, which implies that its distribution might differ from the marginal distribution of $X.$ We assume in the assumption below that it still admits a well-behaved density.

\begin{assump}
	\label{assump_cond_dist_Fm}
	Assume that the predictor space $\mathcal X\subset\mathbb R^p$ is compact and that the distribution of the covariate $X_{k,m}$ associated with the block maximum admits a density on $\mathcal X$ that is uniformly bounded away from zero and infinity in $m$.
	Moreover, assume that there exists a constant $L>0$ such that $F_m(y\mid X=x)$ is Lipschitz continuous in $x$, with Lipschitz constant $L$, uniformly for all sufficiently large $m$. That is, for all $x,x'\in\mathcal X$,
	$$
	\sup_{y\in\mathbb R}
	\left|
	F_m(y\mid X=x)-F_m(y\mid X=x')
	\right|
	\leq L\|x-x'\|_2.
	$$
	Assume that, for every $x\in\mathcal X$, the conditional distribution function $F_m(y\mid X=x)$ is strictly increasing in $y$.
	Assume that the family of functions \(\{f_m\}_{m \geq 1}\) defined by
	$$
	f_m(x) = \mathbb{E}\!\left[ Y_{k,m} \mid X_{k,m} = x \right], 
	\qquad x \in \mathcal{X},
	$$
	is equicontinuous on \(\mathcal{X}\), where
	$$
	Y_{k,m}
	=
	\ell_{\theta_0(x)}
	\left(
	\frac{Z_{k,m}-b_m(x)}{a_m(x)}
	\right),
	\qquad
	\theta_0(x)=(0,1,\xi_0(x)).
	$$	
	Finally, assume that there exists a constant $C>0$ such that, for all $m$ and all $k=1,\ldots,n$,	
	\[
	\mathbb{E}\Bigl[\bigl|Y_{k,m}\bigr|^2 \;\Big|\;X_{k,m}\Bigr]\le C. \tag{H}\label{Hypothese_H}
	\]
	
\end{assump}

In the BM approach, the block size $m$ must be chosen carefully: if \( m \) is too large, the variance of the estimator increases ; if it is too small, the bias becomes more significant. Therefore, a bias–variance trade-off must be considered when selecting $ m.$ \cite{dombry_existence_2015} proved consistency under the condition \(\tfrac{m}{\log(n)} \to +\infty\), or equivalently \(\tfrac{n \log(n)}{N} \to 0\). The presence of the logarithmic factor implies that, in practice, the number of observations effectively used in the block maxima approach is of a smaller order than that required by the POT method, which only requires \(\tfrac{n}{N} \to 0\), where \( n \) denotes in this setting the number of threshold exceedances (\cite{zhou_existence_2009}, \cite{gnecco_extremal_2024}).   This condition is crucial for establishing the consistency of the estimator proposed in this work, which is stated below. Furthermore, we assume that the block size $m$ depends on the sample size $n$ of the block maxima, that is, $ m = m(n).$	
\begin{theo}[Existence and Consistency of $\hat{\theta}_n(x)$] \label{theo_consistency} 
	Let \( x \in \mathcal{X} \). Suppose $
	F_x \in \mathcal{D}(G_{\xi_0(x)})$  and 
	assume that Assumptions \ref{assump1} to \ref{assump_cond_dist_Fm} hold with \(\xi_0(x) > -1\) and  
	$\lim\limits_{n \to +\infty} \frac{m(n)}{\log(n)} = +\infty.$  
	Then, there exists a sequence of estimators \((\hat{\mu}_n(x), \hat{\sigma}_n(x), \hat{\xi}_n(x))\) and an integer \( N \geq 1 \) such that  
	\begin{equation}\label{equation_existence}
		\mathbb{P}\left[ (\hat{\mu}_n(x), \hat{\sigma}_n(x), \hat{\xi}_n(x)) \text{ is a  local MLE for all } n \geq N \right] = 1,
	\end{equation}
	and  
	\begin{equation}\label{equation_consistence}
		\hat{\xi}_n(x) \inprob \xi_0(x), 
		\qquad \frac{\hat{\mu}_n(x) - b_m(x)}{a_m(x)} \inprob 0, 
		\qquad \frac{\hat{\sigma}_n(x)}{a_m(x)} \inprob 1, 
		\quad \text{as } n \to +\infty.
	\end{equation}		
\end{theo}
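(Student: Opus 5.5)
We follow the strategy of \cite{dombry_existence_2015}, with the empirical average replaced by the grf-weighted average. The first step is to reparametrize. Set $\tilde Z_{k,m}=(Z_{k,m}-b_m(x))/a_m(x)$ and $\tilde\theta=((\mu-b_m(x))/a_m(x),\,\sigma/a_m(x),\,\xi)$. The GEV family is location--scale invariant, so
\[
L_n(\theta;x)=\sum_{k=1}^n w_k(x)\ell_{\tilde\theta}(\tilde Z_{k,m})-\log a_m(x),
\]
using $\sum_k w_k(x)=1$ from Assumption~\ref{assump_wk}. Maximizing $L_n$ in $\theta$ is therefore equivalent to maximizing $\tilde L_n(\tilde\theta)=\sum_k w_k(x)\ell_{\tilde\theta}(\tilde Z_{k,m})$. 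The statement (\ref{equation_consistence}) then reduces to showing that a local maximizer $\tilde\theta_n$ of $\tilde L_n$ exists with probability tending to one and converges in probability to $\theta_0(x)=(0,1,\xi_0(x))$.

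The core is a weighted law of large numbers at a finite family of parameter points. The target is
\[
\sum_{k=1}^n w_k(x)\ell_{\theta}(\tilde Z_{k,m})\inprob \mathbb{E}_{G_{\xi_0(x)}}[\ell_\theta(Z)],
\]
for each $\theta$ in a finite set on a small sphere around $\theta_0(x)$, together with the centre $\theta_0(x)$ itself. We split the error into three terms:
\begin{itemize}
\item[(i)] the variance term $\sum_k w_k(x)\big(Y_{k,m}^\theta-\mathbb{E}[Y_{k,m}^\theta\mid X_{k,m}]\big)$;
\item[(ii)] the localization bias $\sum_k w_k(x)\big(f^\theta_m(X_{k,m})-f^\theta_m(x)\big)$;
\item[(iii)] the extreme-value bias $f^\theta_m(x)-\mathbb{E}_{G_{\xi_0(x)}}[\ell_\theta]$.
\end{itemize}

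\textbf{Term (i).} By honesty (Assumption~\ref{assump_wk}), conditionally on the weights and the covariates the summands are independent and centered. Hence the conditional variance is bounded by $C\sum_k w_k^2\le C\max_k w_k\le Cs/n\to0$, using (\ref{Hypothese_H}), and Chebyshev's inequality gives term (i) $\to0$.

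\textbf{Term (ii).} We use equicontinuity of $f_m$ together with Assumption~\ref{assump_diam_convProba}: the weights only charge points in leaves of vanishing diameter. The density condition in Assumption~\ref{assump_cond_dist_Fm} ensures the leaves are non-degenerate. Together these give term (ii) $\to0$ in probability.

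\textbf{Term (iii).} Convergence (\ref{limit_of_Fm}) gives $\tilde Z\mid x\Rightarrow G_{\xi_0(x)}$. The second-moment bound (H) provides uniform integrability, so expectations converge. For parameters $\theta\neq\theta_0$, (H) is only assumed at $\theta_0$, so we would either extend it or use the one-sided bounds on $\ell_\theta$ from \cite{dombry_existence_2015}. Since $\ell_\theta$ can equal $-\infty$, an upper semicontinuity/Fatou argument for the $\limsup$ suffices for those points.

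Existence and consistency then follow by the standard Wald/sphere argument. Because $\xi_0(x)>-1$, the GEV likelihood satisfies strict identifiability: the Kullback--Leibler gap $\mathbb{E}_{G_{\xi_0}}[\ell_{\theta_0}]-\mathbb{E}_{G_{\xi_0}}[\ell_\theta]>0$ for $\theta\ne\theta_0$, as used in \cite{dombry_existence_2015}. For any small $\varepsilon$, we cover the sphere $\|\theta-\theta_0\|=\varepsilon$ by finitely many small balls. On each ball we control $\sup_\theta\ell_\theta$ by a dominating function whose weighted average also converges by the LLN above. This gives, with probability tending to one,
\[
\tilde L_n(\theta_0)>\sup_{\|\theta-\theta_0\|=\varepsilon}\tilde L_n(\theta).
\]
Continuity of $\tilde L_n$ on the closed ball then yields a local maximizer inside the ball. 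Taking $\varepsilon_j\downarrow0$ and a diagonal choice gives $\tilde\theta_n\inprob\theta_0$, and transforming back yields (\ref{equation_consistence}).

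The almost-sure statement (\ref{equation_existence}) for all $n\ge N$ needs summable error probabilities, to which we apply Borel--Cantelli. This is where $m/\log n\to\infty$ enters. Following \cite{dombry_existence_2015}, it controls the probability that some $\tilde Z_{k,m}$ falls outside a neighbourhood of the support of $G_{\xi_0}$; the union bound over $n$ blocks requires the tail approximation error to be $o(1/n)$, which the condition guarantees. Near the support endpoint, where $\xi$ is close to $-1$, this also guarantees that the sample lies inside the support for parameters near $\theta_0$.

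The main obstacle is the uniform control in the sphere argument under weights rather than uniform averages. The sup over a neighbourhood of $\ell_\theta$ involves $\log(1+\xi z)$ near the support boundary, and its weighted average must be bounded using only (H)-type moments. I would first try to dominate $\sup_{\theta\in B}\ell_\theta(z)$ by an affine combination of $\log(1+\xi_0z)$ and $(1+\xi_0 z)^{-1/\xi_0}$, whose moments are covered by (H), and then apply the three-term decomposition to that dominating function.
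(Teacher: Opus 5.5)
Your architecture is the paper's. The location--scale reduction is Lemma~\ref{lemma1}. Your terms (i) and (ii)+(iii) are the $T_n$/$V_n$ split of Lemmas~\ref{lemma6} and~\ref{lemma4}. The finite covering is the Wald argument of Proposition~\ref{propo1}, which covers $\{\theta\in K:\|\theta-\theta_0\|\ge\delta\}$ for one fixed compact $K$, so no diagonal choice over $\varepsilon_j$ is needed. Two steps, however, do not go through as you describe them.

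\textbf{Control on the balls.} A two-sided weighted law of large numbers at $\theta\neq\theta_0$ is neither available nor needed, since (H) and the equicontinuity of $f_m$ are assumed only at $\theta_0$. Your fallback is to dominate $\ell_B=\sup_{\theta\in B}\ell_\theta$ by an affine combination $h$ of $\log(1+\xi_0 z)$ and $(1+\xi_0 z)^{-1/\xi_0}$. This does not rescue the covering argument. It gives only $\limsup_n \mathbb{P}_n^x[\ell_B]\le G_{\xi_0(x)}[h]$, and nothing ensures $G_{\xi_0(x)}[h]<G_{\xi_0(x)}[\ell_{\theta_0}]$. The strict gap is known for $G_{\xi_0(x)}[\ell_B]$ itself on small balls (Lemmas~5--6 of \cite{dombry_existence_2015}), not for a crude majorant. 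In addition, term (ii) for $h$ would require an equicontinuity condition that is not among the assumptions.

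The paper uses the route you mention only in passing. Since $\ell_{B_i}$ is upper semicontinuous and bounded above, $\limsup_n\mathbb{P}_n^x[\ell_{B_i}]\le G_{\xi_0(x)}[\ell_{B_i}]$ in probability, with no moment condition at all. This is Lemma~\ref{lemma3}, proved via the quantile representation of $\mathbb{P}_n^x$ and Fatou's lemma along almost surely convergent sub-subsequences. The law of large numbers is then needed only at $\theta_0$ (Lemma~\ref{lemma4}), where (H) applies. The ingredient your proposal lacks for this is Lemma~\ref{lemme2}, the uniform convergence in probability of the weighted empirical distribution function $\mathbb{F}_n^x$ to $G_{\xi_0(x)}$. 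You can get it from your own decomposition applied to the indicators $\mathbb{1}_{\{\tilde Z_{k,m}\le t\}}$. There $\sum_k w_k^2(x)\le s/n$ handles term (i). The Lipschitz condition on $F_m(y\mid\cdot)$ in Assumption~\ref{assump_cond_dist_Fm}, together with Lemma~\ref{lemme_sum_Wk_Liptchi_cond}, handles term (ii).

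\textbf{The almost-sure existence claim.} Your Borel--Cantelli plan cannot be completed under the hypotheses. Controlling the event that some $\tilde Z_{k,m}$ falls outside the support is not enough, because every error term needs summable tail probabilities. Term (i) is controlled only through a second-moment bound of order $s_n/n$, and $\sum_n s_n/n=\infty$ since $s_n\ge1$. Term (ii) comes with no rate whatsoever, as Assumption~\ref{assump_diam_convProba} only gives convergence in probability of the leaf diameters.

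The paper does not close this step either. Its proof passes from $\tilde\theta_n^K(x)\inprob\theta_0(x)$ to ``$\tilde\theta_n^K(x)\in\mathrm{Int}(K)$ for $n$ sufficiently large''. That yields only $\mathbb{P}[\hat\theta_n(x)\text{ is a local MLE}]\to1$, not an almost-sure statement for all $n\ge N$. The paper also says $m/\log n\to\infty$ is required for Lemma~\ref{lemma4}, but its proof of that lemma uses only (H), equicontinuity and Lemma~\ref{lemme_sum_Wk_Liptchi_cond}. In effect, (H) directly supplies the control at $\theta_0$ for which \cite{dombry_existence_2015} needs the growth condition. What either argument genuinely delivers is existence of a local maximizer with probability tending to one, together with (\ref{equation_consistence}).
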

The condition  $\lim\limits_{n \to +\infty} \frac{m(n)}{\log(n)} = +\infty$  
comes from the work of \cite{dombry_existence_2015} and is required in the proof of Lemma~\ref{lemma4}.  The assumption \(\xi_0(x) > -1\) is crucial and aligns with the results of \cite{dombry_existence_2015} in the unconditional case, as it ensures the first-order condition. That is, the maximum likelihood estimator must satisfy  
\[
\left( 
\frac{\partial Ln}{\partial \mu}(\theta; x), \, 
\frac{\partial Ln}{\partial \sigma}(\theta; x), \, 
\frac{\partial Ln}{\partial \xi}(\theta; x) 
\right) = (0,0,0).
\]  
Indeed, we have
\begin{eqnarray}
	\notag
	\frac{\partial L_n}{\partial \mu}(\theta; x) 
	&=& \sum_{k=1}^{n} w_k(x)\frac{\partial \ell_{\theta}}{\partial \mu}(Z_{k,m}) \\ \notag
	&=& -\dfrac{1}{\sigma} \sum_{k=1}^{n} w_k(x) \left[ 
	-\left( \dfrac{1+\xi}{1+\xi \frac{Z_{k,m}-\mu}{\sigma}} \right) 
	+ \left(1+\xi \frac{Z_{k,m}-\mu}{\sigma}\right)^{-1-\frac{1}{\xi}} 
	\right] \\ \notag
	&=& -\dfrac{1}{\sigma} \sum_{k=1}^{n} w_k(x) \,
	g'_{\xi}\!\left(\dfrac{Z_{k,m}-\mu}{\sigma}\right),
\end{eqnarray}  
with  
\[
g_{\xi}(y) = -\left(1+\frac{1}{\xi}\right)\log\!\left(1+\xi y\right) 
- \left(1+\xi y\right)^{-\frac{1}{\xi}}.
\]  
According to Proposition~1 of \cite{dombry_existence_2015}, when \(\xi \leq -1\), the function \(g_{\xi}\) is strictly increasing on its domain of definition. Consequently, for any \(x \in \mathcal{X}\), we have  
\[
\frac{\partial Ln}{\partial \mu}(\theta, x) < 0 \quad \text{whenever } \xi \leq -1.
\]  
It follows that the condition \(\xi > -1\) is necessary for the existence of a local extremum.
\begin{rem}
	Although empirical evidence in Section~\ref{expe1} supports the asymptotic normality of the weighted maximum likelihood estimator, a rigorous theoretical proof demands a deeper asymptotic investigation. Adapting the framework of \cite{dombry_maximum_2019} to our forest-weighted setup is left for future research.
\end{rem}

\begin{cor}	\label{cor:quantile_consistency}	
	Let $x\in\mathcal X$ be fixed, and let $\tau\in(0,1)$ be fixed and close to $1$. The estimated conditional GEV quantile at probability level $\tau$ is defined by
	\begin{equation*}
		\widehat{\mathcal{Q}}^Z_{x}(\tau)= \begin{cases}
			\hat{\mu}(x) + \dfrac{\hat{\sigma} (x)}{\hat{\xi} (x)}\left(\left(-\log(\tau)\right)^{-\hat{\xi} (x)}-1\right)  & \text{if}\quad  \hat{\xi} (x)\neq 0,\\
			\hat{\mu}(x) - \hat{\sigma}(x)\log \left( -\log\left(\tau\right)\right)  & \text{if}\quad \hat{\xi}(x)= 0. 
	\end{cases}\end{equation*}	
	Under the assumptions of Theorem~\ref{theo_consistency}, it holds that
	$$
	\dfrac{\widehat{\mathcal{Q}}^Z_{x}(\tau)-\mathcal{Q}^Z_{x}(\tau)}{a_m(x)}
	\inprob
	0,
	$$
	where $\mathcal{Q}^Z_{x}(\tau)$ denotes the $\tau$-quantile associated to the true conditional GEV distribution with parameters $\bigl(a_m(x),b_m(x),\xi_0(x)\bigr)$.
	
\end{cor}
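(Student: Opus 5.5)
The plan is to prove the corollary directly from the consistency statement \eqref{equation_consistence} of Theorem~\ref{theo_consistency} by a continuous mapping argument. First, normalize the quantile. For a GEV law with parameters $(\mu,\sigma,\xi)$ the $\tau$-quantile equals $\mu+\sigma\,q_\tau(\xi)$, where
\[
q_\tau(\xi)=\frac{(-\log\tau)^{-\xi}-1}{\xi}\quad(\xi\neq 0),\qquad q_\tau(0)=-\log(-\log\tau).
\]
Hence $\mathcal{Q}^Z_{x}(\tau)=b_m(x)+a_m(x)\,q_\tau(\xi_0(x))$ and $\widehat{\mathcal{Q}}^Z_{x}(\tau)=\hat\mu_n(x)+\hat\sigma_n(x)\,q_\tau(\hat\xi_n(x))$. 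This gives the exact decomposition
\[
\frac{\widehat{\mathcal{Q}}^Z_{x}(\tau)-\mathcal{Q}^Z_{x}(\tau)}{a_m(x)}
=\frac{\hat\mu_n(x)-b_m(x)}{a_m(x)}
+\frac{\hat\sigma_n(x)}{a_m(x)}\,q_\tau(\hat\xi_n(x))
-q_\tau(\xi_0(x)).
\]

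The key step is to check that $q_\tau$ is continuous on $\mathbb{R}$, including at $\xi=0$. Writing $c=-\log(-\log\tau)$, so that $(-\log\tau)^{-\xi}=e^{\xi c}$, we get $q_\tau(\xi)=(e^{\xi c}-1)/\xi$ for $\xi\neq0$. This tends to $c=q_\tau(0)$ as $\xi\to0$; in fact $q_\tau$ is real-analytic in $\xi$. Since $\tau\in(0,1)$ is fixed, $c$ is a finite constant, and $q_\tau$ does not depend on $n$ or $m$. This is why a fixed level $\tau$ poses no difficulty here, in contrast with the extreme levels $\tau_n^m$ appearing in \eqref{GEVestiquantile}.

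Next apply Theorem~\ref{theo_consistency}. On the event of probability one described in \eqref{equation_existence}, the estimator is well defined for all $n\ge N$. By \eqref{equation_consistence} we have $\hat\xi_n(x)\inprob\xi_0(x)$, so the continuous mapping theorem gives $q_\tau(\hat\xi_n(x))\inprob q_\tau(\xi_0(x))$. Combined with $\hat\sigma_n(x)/a_m(x)\inprob1$, and using that products and sums of sequences converging in probability converge in probability (Slutsky), the second term tends to $q_\tau(\xi_0(x))$ and cancels the third. The first term tends to $0$ by \eqref{equation_consistence}. Therefore the whole expression converges to $0$ in probability.

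The only point needing care is the piecewise definition of $\widehat{\mathcal{Q}}^Z_x(\tau)$ at $\hat\xi=0$. This is handled by the continuity of $q_\tau$ established above, so no case distinction on the value of $\hat\xi_n(x)$ or $\xi_0(x)$ is needed.
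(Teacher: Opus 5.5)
Your proposal is correct and follows the paper's proof essentially step for step. Both use the reduction $\mathcal{Q}^Z_x(\tau)=b_m(x)+a_m(x)\,q_\tau(\xi_0(x))$, the same three-term decomposition, continuity of $q_\tau$ at $\xi=0$, and the continuous mapping theorem together with the three consistency statements of Theorem~\ref{theo_consistency}; your explicit appeal to Slutsky for the product term $\frac{\hat\sigma_n(x)}{a_m(x)}\,q_\tau(\hat\xi_n(x))$ spells out a step the paper leaves implicit.
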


\begin{proof}
	Let $x \in \mathcal{X}$ and $\tau \in (0,1)$ be fixed. Define the function $g : (-1, +\infty) \to \mathbb{R}$ by:
	\begin{equation*}
		g(\xi) = \begin{cases}
			\dfrac{\left(-\log(\tau)\right)^{-\xi} - 1}{\xi} & \text{if } \xi \neq 0, \\[10pt]
			-\log\left(-\log(\tau)\right) & \text{if } \xi = 0.
		\end{cases}
	\end{equation*}	
	By a Taylor expansion of $\xi \mapsto \left(-\log(\tau)\right)^{-\xi} = \exp\left(-\xi \log\left(-\log(\tau)\right)\right)$ around $0$, we have $\lim_{\xi \to 0} g(\xi) = g(0)$. Thus, $g$ is continuous on $(-1, +\infty)$.
	Using this notation, the true conditional GEV quantile $\mathcal{Q}^Z_x(\tau)$ associated with parameters $\bigl(a_m(x), b_m(x), \xi_0(x)\bigr)$ and the estimated conditional quantile $\widehat{\mathcal{Q}}^Z_x(\tau)$ can be expressed respectively as:
	\begin{align*}
		\mathcal{Q}^Z_x(\tau) &= b_m(x) + a_m(x) g(\xi_0(x)), \\[5pt]
		\widehat{\mathcal{Q}}^Z_x(\tau) &= \hat{\mu}_n(x) + \hat{\sigma}_n(x) g(\hat{\xi}_n(x)).
	\end{align*}	
	Dividing $\widehat{\mathcal{Q}}^Z_x(\tau) -\mathcal{Q}^Z_x(\tau)$ by the normalizing sequence $a_m(x) > 0$ yields
	\begin{align*}
		\frac{\widehat{\mathcal{Q}}^Z_x(\tau)-\mathcal{Q}^Z_x(\tau)}{a_m(x)} &= \frac{\hat{\mu}_n(x) - b_m(x)}{a_m(x)} + \left(\frac{\hat{\sigma}_n(x)}{a_m(x)}\right) g\left( \hat{\xi}_n(x)\right)- g\left(\xi_0(x)\right). 
	\end{align*}	
	Under the assumptions of Theorem~\ref{theo_consistency}, as $n \to +\infty$ (and $m(n) \to +\infty$), we have
	\begin{enumerate}
		\item[(i)] $\dfrac{\hat{\mu}_n(x) - b_m(x)}{a_m(x)} \inprob 0$;
		\item[(ii)] $\dfrac{\hat{\sigma}_n(x)}{a_m(x)} \inprob 1$;
		\item[(iii)] $\hat{\xi}_n(x) \inprob \xi_0(x)$, which implies $g(\hat{\xi}_n(x)) \inprob g(\xi_0(x))$ by the Continuous Mapping Theorem.
	\end{enumerate}
	Consequently, we obtain
	\begin{equation*}
		\frac{\widehat{\mathcal{Q}}^Z_x(\tau)-\mathcal{Q}^Z_x(\tau)}{a_m(x)} \inprob 0.
	\end{equation*}
	This completes the proof of Corollary \ref{cor:quantile_consistency}.
	
\end{proof}

\section{Proofs}\label{section_proof}
The proof of Theorem~\ref{theo_consistency} builds on the work of \cite{dombry_existence_2015}. 
We first introduce some notation and recall a few preliminary results that will be used in the proof.
\begin{nota}
	Let $x \in \mathcal{X}$. We normalize the block maxima as follows 
	$$ \tilde{Z}_{k,m} = \frac{Z_{k,m} - b_m(x)}{a_m(x)}, \quad \text{where $Z_{k,m}$ is defined in equation (\ref{max_Z_km})}.$$ 
	The associated weighted log-likelihood is given by 
	\begin{equation}\label{log_vraisemblance_normalise}
		\tilde{L}_n(\theta;x) = \sum_{k=1}^{n} w_k(x)\,\ell_{\theta}(\tilde{Z}_{k,m}),	
	\end{equation}
	where $w_k(x),$ for all $x \in \mathcal{X},$ denotes the weight assigned to each \( X_k \), with \( k = 1, \dots, n \), obtained using the GRF method, whose explicit form is provided in relation~(\ref{grf_weight}). 
	We therefore introduce
	$\varTheta = \mathbb{R} \times (0, +\infty) \times (-1, +\infty),$
	where a generic element is denoted by $\theta = (\mu, \sigma, \xi).$   The restriction of \(\tilde{L}_n(\cdot;x)\) to \(\varTheta\), that is, \(\tilde{L}_n(\cdot;x): \varTheta \to [-\infty, +\infty)\), is a continuous function. Hence, for any compact set \(K \subset \varTheta\), this restriction is bounded and reaches its maximum on \(K\).   Denoting this maximum by 
	$\tilde{\theta}_n^K(x) = \left(\tilde{\mu}_n^K(x), \tilde{\sigma}_n^K(x), \tilde{\xi}_n^K(x)\right),
	$ it is defined for all $x \in \mathcal{X}$ as follows
	\begin{equation}\label{theta_K_n}
		\tilde{\theta}_n^K(x)=\arg\max_{\theta\in K}\tilde{L}_n(\theta;x).
	\end{equation}
	For $x \in \mathcal{X}$, let us denote by $\mathbb{P}_n^x$ the empirical distribution defined as
	\[
	\mathbb{P}_n^x = \sum_{k=1}^{n} w_k(x) \, \delta_{\tilde{Z}_{k,m}},
	\]
	where $\delta_x$ represents the Dirac measure at the point $x \in \mathbb{R}$. The measure $P_n^x$ is a random probability measure conditional on $(X_1, \ldots, X_n)$, 
	since the weights satisfy $\sum_{k} w_k(x) = 1$.
	For any measurable function $f: \mathbb{R} \to [-\infty, +\infty)$, we define
	\[
	\mathbb{P}_n^x[f] = \sum_{k=1}^{n} w_k(x) \, f(\tilde{Z}_{k,m}).
	\]
	With these notations, the weighted log-likelihood defined in (\ref{log_vraisemblance_normalise}) can be written as
	\[
	\tilde{L}_n(\theta;x) = \mathbb{P}_n^x[\ell_{\theta}],
	\]
	and the associated empirical distribution function is given by
	\[
	\mathbb{F}_n^x(t) = \mathbb{P}_n^x((-\infty, t]) = \sum_{k=1}^{n} w_k(x) \, \mathbb{1}_{\{\tilde{Z}_{k,m} \leq t\}}, \quad \text{for all } t \in \mathbb{R}.
	\]
	We now present some essential results that will be used to establish the proof of Theorem~\ref{theo_consistency}.
\end{nota}

\begin{lem}\label{lemma1} 
	Let $x \in \mathcal{X}$. The triplet $(\hat{\mu}_n(x),\hat{\sigma}_n(x),\hat{\xi}_n(x))$ is a weighted maximum likelihood estimator  if and only if $\tilde{L}_n$ has a local maximum at
	$	\left(\frac{\hat{\mu}_n(x)-b_m(x)}{a_m(x)}, \frac{\hat{\sigma}_n(x)}{a_m}, \hat{\xi}_n(x)\right).$
\end{lem}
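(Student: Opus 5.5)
The plan is to exploit the location--scale equivariance of the GEV log-likelihood. Fix $x\in\mathcal{X}$ and write $a=a_m(x)>0$, $b=b_m(x)$. The weights $w_k(x)$ do not depend on $\theta$, so the claim reduces to an identity between the two weighted log-likelihoods $L_n(\cdot;x)$ and $\tilde{L}_n(\cdot;x)$. The first step is to introduce the affine reparametrization
\[
T:\varTheta\to\varTheta,\qquad T(\mu,\sigma,\xi)=\Bigl(\frac{\mu-b}{a},\frac{\sigma}{a},\xi\Bigr).
\]
This map is a homeomorphism of $\varTheta=\mathbb{R}\times(0,+\infty)\times(-1,+\infty)$ onto itself, with inverse $(\mu,\sigma,\xi)\mapsto(a\mu+b,a\sigma,\xi)$, and it is in fact a diffeomorphism.

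The second step is the pointwise identity for the log-density. For every $z\in\mathbb{R}$ and $\theta=(\mu,\sigma,\xi)$, put $\tilde z=(z-b)/a$. Then
\[
\frac{z-\mu}{\sigma}=\frac{\tilde z-(\mu-b)/a}{\sigma/a}.
\]
Hence the support conditions $1+\xi\frac{z-\mu}{\sigma}>0$ and $1+\xi\frac{\tilde z-\tilde\mu}{\tilde\sigma}>0$ coincide. Substituting into the explicit formula for $\ell_\theta$ shows that the terms involving $(z-\mu)/\sigma$ are identical. The only remaining term is $-\log\sigma=-\log(\sigma/a)-\log a$. This gives
\[
\ell_{\theta}(z)=\ell_{T(\theta)}(\tilde z)-\log a,
\]
with the convention $-\infty=-\infty$ outside the support. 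The same check applies in the Gumbel case $\xi=0$.

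The third step is to multiply this identity by $w_k(x)$ and sum over $k$. Using $\sum_k w_k(x)=1$ from Assumption~\ref{assump_wk}, we obtain
\[
L_n(\theta;x)=\tilde{L}_n(T(\theta);x)-\log a_m(x)\qquad\text{for all }\theta\in\varTheta.
\]
The constant shift is harmless even without normalization of the weights, since it equals $-\log a\sum_k w_k(x)$, which is free of $\theta$. Consequently $L_n(\cdot;x)$ has a local maximum at $\hat\theta_n(x)$ if and only if $\tilde{L}_n(\cdot;x)$ has a local maximum at $T(\hat\theta_n(x))$. Indeed, $T$ maps neighbourhoods of $\hat\theta_n(x)$ bijectively onto neighbourhoods of $T(\hat\theta_n(x))$, and composing with a homeomorphism and adding a constant preserves local maximizers. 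The same reasoning shows that the equivalence also holds for critical points of the score equations, because $T$ is a diffeomorphism.

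No step here is a real obstacle. The only point needing care is the bookkeeping of the $-\infty$ values off the support, which is settled by the equality of the two support conditions noted above.
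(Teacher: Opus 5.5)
Your proposal is correct and follows the paper's argument. The location--scale identity $\ell_\theta(z)=\ell_{T(\theta)}\bigl((z-b_m)/a_m\bigr)-\log a_m$, which the paper takes from Lemma~1 of \cite{dombry_existence_2015} and you verify directly, is summed against the weights to give $L_n(\theta;x)=\tilde{L}_n(T(\theta);x)-\log a_m(x)$, so local maximizers correspond under the homeomorphism $T$. Your side remark that $\sum_k w_k(x)=1$ is not needed, because the shift $-\log a_m(x)\sum_k w_k(x)$ does not depend on $\theta$, is valid.
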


\begin{proof}
	Let $x \in \mathcal{X}$. According to Lemma~1 in \cite{dombry_existence_2015}, we have
	\[
	\ell_{(\mu, \sigma, \xi)}\Big(\frac{z-b}{a}\Big) = \ell_{(a\mu+b, a\sigma, \xi)}(z) + \log(a), \quad \text{for all } z \in \mathbb{R}.
	\]
	Hence,
	\begin{align*}
		\tilde{L}_n\Big(\Big(\frac{\mu-b_m}{a_m}, \frac{\sigma}{a_m}, \xi\Big); x\Big) 
		&= \sum_{k=1}^{n} w_k(x) \, \ell_{(\frac{\mu-b_m}{a_m}, \frac{\sigma}{a_m}, \xi)}\Big(\frac{Z_{k,m}-b_m}{a_m}\Big) \\
		&= \sum_{k=1}^{n} w_k(x) \, \ell_{(\mu, \sigma, \xi)}(Z_{k,m}) + \log(a_m), \quad \text{since } \sum_{k=1}^{n} w_k(x) = 1 \\
		&= L_n(\theta; x) + \log(a_m).
	\end{align*}
	It follows that
	\[
	L_n((\mu, \sigma, \xi); x) = \tilde{L}_n\Big(\Big(\frac{\mu-b_m}{a_m}, \frac{\sigma}{a_m}, \xi\Big); x\Big) - \log(a_m).
	\]
	Therefore, the local maxima of $L_n$ and $\tilde{L}_n$ correspond directly, which proves Lemma~\ref{lemma1}.
\end{proof}

\begin{lem}\label{lemme2} 
	Let $x \in \mathcal{X}$. Suppose that Assumptions~\ref{assump1} to~\ref{assump_cond_dist_Fm} hold and $\lim\limits_{n\to +\infty} m(n) = +\infty$. Then,
	\[
	\sup_{t \in \mathbb{R}}|\mathbb{F}_n^x(t)- G_{\xi_0(x)}(t)| \inprob 0. 
	\]
\end{lem}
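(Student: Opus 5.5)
We will prove Lemma~\ref{lemme2} by reducing uniform convergence to pointwise convergence and splitting the error into a stochastic part and a bias part. Since $G_{\xi_0(x)}$ is continuous and both $\mathbb{F}_n^x$ and $G_{\xi_0(x)}$ are nondecreasing with limits $0$ and $1$, a Pólya-type argument applies. Fix $\varepsilon>0$ and choose a finite grid $t_0<t_1<\dots<t_J$ with $G_{\xi_0(x)}(t_{j+1})-G_{\xi_0(x)}(t_j)<\varepsilon$, $G_{\xi_0(x)}(t_0)<\varepsilon$ and $1-G_{\xi_0(x)}(t_J)<\varepsilon$. It then suffices to show $\mathbb{F}_n^x(t)\inprob G_{\xi_0(x)}(t)$ for each fixed $t\in\mathbb{R}$.

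For fixed $t$, we decompose
\[
\mathbb{F}_n^x(t)-G_{\xi_0(x)}(t)=\sum_{k=1}^n w_k(x)\bigl(\mathbb{1}_{\{\tilde Z_{k,m}\le t\}}-F_m(a_m(x)t+b_m(x)\mid X_{k,m})\bigr)
\]
\[
+\sum_{k=1}^n w_k(x)\bigl(F_m(a_m(x)t+b_m(x)\mid X_{k,m})-F_m(a_m(x)t+b_m(x)\mid x)\bigr)+\bigl(F_m(a_m(x)t+b_m(x)\mid x)-G_{\xi_0(x)}(t)\bigr).
\]
We use $\sum_k w_k(x)=1$ here. The last term tends to $0$ deterministically by (\ref{limit_of_Fm}) in Remark~\ref{remarqDefi_Fm}, since $m(n)\to\infty$.

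\textbf{First term (variance).} We condition on the covariates and the forest weights. By honesty (Assumption~\ref{assump_wk}), each $w_k(x)$ is independent of $Z_{k,m}$ given $X_{k,m}$, and the blocks are independent. The summands are therefore conditionally centered, bounded by $1$, and uncorrelated. The conditional second moment is then at most
\[
\sum_k w_k(x)^2\le \max_k w_k(x)\sum_k w_k(x)\le s/n\to0.
\]
Chebyshev's inequality gives convergence to $0$ in probability. A rigorous version needs the usual GRF device in which the weights are built from a separate half-sample; we would invoke that structure, as in \cite{wager_estimation_2018} and \cite{gnecco_extremal_2024}.

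\textbf{Second term (bias).} By the Lipschitz condition of Assumption~\ref{assump_cond_dist_Fm}, which holds uniformly in $y$ and so also at the $m$-dependent point $a_m(x)t+b_m(x)$, the term is bounded by $L\sum_k w_k(x)\|X_{k,m}-x\|_2$. Fix $\delta>0$ and split according to whether $\|X_{k,m}-x\|_2\le\delta$. The near part contributes at most $L\delta$. For the far part, note that $w_k(x)>0$ only if $X_{k,m}$ lies in some leaf $R_b(x)$. Hence
\[
\sum_k w_k(x)\mathbb{1}_{\{\|X_{k,m}-x\|>\delta\}}\le \frac1B\sum_b \mathbb{1}_{\{\mathrm{diam}(R_b(x))>\delta\}},
\]
and this bound is multiplied by $\mathrm{diam}(\mathcal{X})<\infty$ from compactness. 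Its expectation equals $\mathbb{P}[\mathrm{diam}(R_b(x))>\delta]\to0$ by Assumption~\ref{assump_diam_convProba}, because trees are exchangeable and $s\to\infty$. Markov's inequality then gives convergence in probability to $0$. Since $\delta$ is arbitrary, the bias term vanishes.

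\textbf{Uniformity and main obstacle.} Combining the three terms gives pointwise convergence at each of the finitely many grid points. Monotonicity of $\mathbb{F}_n^x$ and $G_{\xi_0(x)}$ then yields $\sup_t|\mathbb{F}_n^x(t)-G_{\xi_0(x)}(t)|\le \max_j|\mathbb{F}_n^x(t_j)-G_{\xi_0(x)}(t_j)|+2\varepsilon$ with probability tending to one. The main obstacle is the variance step: one must justify conditional independence between the weights and the responses when the forest is built from the same block-maxima sample. We would handle it with the honesty assumption, exactly as in Lemma 1 of \cite{gnecco_extremal_2024}. The density bounds in Assumption~\ref{assump_cond_dist_Fm}, which hold uniformly in $m$, ensure that the leaf-diameter result applies to the concomitant covariates $X_{k,m}$.
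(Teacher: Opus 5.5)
Your proof is correct, but it is organized differently from the paper's.

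The paper passes through the probability integral transform $U_{k,m}=F_m(Z_{k,m}\mid X_{k,m})$ and centers at $F_m(y\mid x)$ rather than at $F_m(y\mid X_{k,m})$. Its stochastic term is then the weighted uniform empirical process $\sup_{z\in[0,1]}\bigl|z-\sum_k w_k(x)\mathbf{1}\{U_{k,m}\le z\}\bigr|$. This is controlled uniformly by Meinshausen's Bonferroni argument once $\sum_k w_k^2(x)\to 0$. The swap term $\sum_k w_k(x)D_{k,m}$ is replaced by its mean and bounded through the Lipschitz condition and Lemma~\ref{lemme_sum_Wk_Liptchi_cond}, which bounds the weighted distance by the average leaf diameter.

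You instead center each indicator at its own conditional mean $F_m(y\mid X_{k,m})$. All the randomness then sits in a single conditionally centered sum handled by Chebyshev, and the covariate-shift term depends only on covariates and weights. Uniformity in $t$ comes at the end from one P\'olya grid argument, which uses only the continuity of $G_{\xi_0(x)}$ and the monotonicity of $\mathbb{F}_n^x$.

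Your route buys several things:
\begin{itemize}
\item It does not need $F_m(\cdot\mid x)$ to be continuous and strictly increasing. The paper's PIT step and its identity $\sup_{y}=\sup_{z\in[0,1]}$ both require this.
\item It makes the supremum explicit. The paper proves the swap term only for fixed $y$ and defers the uniform statement to Meinshausen.
\item It avoids the implicit second variance argument behind the paper's claim that $\sum_k w_k(x)D_{k,m}$ is close to its mean.
\item Your near/far split with Markov's inequality does not use that $B$ is fixed, unlike the union bound in the appendix.
\end{itemize}
The paper's route, in exchange, isolates Lemma~\ref{lemme_sum_Wk_Liptchi_cond}, which it reuses in the proof of Lemma~\ref{lemma4}. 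It also plugs directly into the existing quantile-regression-forest proof.

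The conditional-independence issue you flag as the main obstacle is genuine. It requires the weights to be independent of the responses jointly across $k$, which is stronger than Assumption~\ref{assump_wk} as literally stated. However, the paper relies on exactly the same structure, both here and in Lemma~\ref{lemma6}, so it is not a weakness of your proposal relative to the paper.
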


\begin{proof}
	Let $x\in\mathcal{X}.$ Let $F_x \in \mathcal{D}(G_{\xi_0(x)})$, and let $a_m(x)$ and $b_m(x)$ be the corresponding normalizing sequences. 
	We write $F_m(\cdot \mid x)$ for the conditional distribution function of the maximum $Z_{k,m}$ given $X_{k,m} = x$, as introduced in Remark~\ref{remarqDefi_Fm}.
	For all $t \in \mathbb{R},$ we have 
	\begin{eqnarray}\notag
		\left|\mathbb{F}_n^x(t) - G_{\xi_0(x)}(t)\right|
		&=& \left| \sum_{k=1}^{n} w_k(x)\mathbb{1}_{\{\tilde{Z}_{k,m} \leq t\}} - F_m\left(a_m(x)t+b_m(x)\mid x\right) + F_m\left(a_m(x)t+b_m(x)\mid x\right) - G_{\xi_0(x)}(t) \right| \\\notag
		&\leq& |A_{n,m}(t)| + |B_{n,m}(t)|,
	\end{eqnarray}
	with 
	$$A_{n,m}(t)=\sum_{k=1}^{n} w_k(x)\mathbb{1}_{\{\tilde{Z}_{k,m} \leq t\}} - F_m\left(a_m(x)t+b_m(x)\mid x\right) \quad \text{and} \quad B_{n,m}(t)=F_m\left(a_m(x)t+b_m(x)\mid x\right) - G_{\xi_0(x)}(t).$$	
	Under Assumption~\ref{assump1}, and in particular from equation~\eqref{limit_of_Fm}, we obtain that for every $x \in \mathcal{X}$ and 
	$m = m(n) \to \infty$,
	\[
	\sup_{t \in \mathbb{R}} |B_{n,m}(t)| \longrightarrow 0.
	\]
	It therefore remains to prove that 
	$
	\sup_{t \in \mathbb{R}} |A_{n,m}(t)| \inprob 0.
	$ The proof of this convergence follows the same strategy as that of 
	Theorem~1 in \cite{meinshausen_quantile_2006}, which establishes the 
	uniform convergence in probability of a weighted conditional empirical 
	distribution function to the true conditional distribution in a 
	non-extreme setting, where the weights come from the classic random forest method. To this end, define the random variables $U_{k,m}$, $k=1,\dots,n$, as the 
	conditional probability integral transforms of the observations 
	$Z_{k,m}$ (the non‑normalized block maxima) given $X_{k,m}$:
	\[
	U_{k,m} = F_m(Z_{k,m} \mid X_{k,m}),
	\]
	where $F_m(\cdot \mid X_{k,m})$ denotes the true conditional distribution function of $Z_{k,m}$ given $X_{k,m}$. 
	By construction, conditionally on $X_{k,m}$, the variables $U_{k,m}$ are i.i.d. uniformly distributed on $[0,1]$, and they are independent across $k$.
	Now, for a fixed $t \in \mathbb{R}$, set $y = a_m(x)t + b_m(x)$. Then, we have 
	\[
	\{\tilde{Z}_{k,m} \le t\} = \left\{Z_{k,m} \le a_m(x)t + b_m(x)\right\} = \left\{U_{k,m} \le F_m(y \mid X_{k,m})\right\}.
	\]
	Therefore, the weighted empirical distribution function can be written as
	\begin{eqnarray}\notag
		\mathbb{F}_n^x(t) &=& \sum_{k=1}^n w_k(x) \mathbb{1}_{\{\tilde{Z}_{k,m} \le t\}} \\\notag
		&=& \sum_{k=1}^n w_k(x) \mathbb{1}_{\left\{U_{k,m} \le F_m(y \mid X_{k,m})\right\}}.
	\end{eqnarray}	
	Hence,
	\begin{align*}
		|A_{n,m}(t)|
		&=
		\Bigg|
		\sum_{k=1}^n w_k(x)
		\Big(
		\mathbb{1}_{\left\{U_{k,m} \le F_m(y \mid X_{k,m})\right\}}
		-
		\mathbb{1}_{\left\{U_{k,m} \le F_m(y \mid x)\right\}}
		\Big) \\
		&\quad
		+
		\sum_{k=1}^n w_k(x)
		\mathbb{1}_{\left\{U_{k,m} \le F_m(y \mid x)\right\}}
		-
		F_m(y \mid x)
		\Bigg| \\
		&\le
		\left|
		\sum_{k=1}^n w_k(x)
		\left(
		\mathbb{1}_{\left\{U_{k,m} \le F_m(y \mid X_{k,m})\right\}}
		-
		\mathbb{1}_{\left\{U_{k,m} \le F_m(y \mid x)\right\}}
		\right)
		\right|
		\\&\quad 
		+
		\left|
		F_m(y \mid x)
		-
		\sum_{k=1}^n w_k(x)
		\mathbb{1}_{\left\{U_{k,m} \le F_m(y \mid x)\right\}}
		\right|.
	\end{align*}	
	Taking the supremum over $t \in \mathbb{R}$ in the second term of the above inequality, we obtain
	\[
	\sup_{y \in \mathbb{R}}
	\left|
	F_m(y \mid x)
	-
	\sum_{k=1}^n w_k(x)
	\mathbb{1}_{\left\{U_{k,m} \le F_m(y \mid x)\right\}}
	\right|
	=
	\sup_{z \in [0,1]}
	\left|
	z
	-
	\sum_{k=1}^n w_k(x)
	\mathbb{1}_{\left\{U_{k,m} \le z\right\}}
	\right|,
	\]	
	since $F_m(\cdot \mid x)$ is strictly increasing (Assumption~\ref{assump_cond_dist_Fm}) and maps $\mathbb{R}$ onto $(0,1)$. 
	This term corresponds to the variance-type term studied in \cite{meinshausen_quantile_2006}. 
	The authors showed, after applying Bonferroni's inequality, that this term $\sup_{z \in [0,1]}
	\left|
	z- \sum_{k=1}^n w_k(x) \mathbb{1}_{\{U_{k,m} \le z\}} \right|$ converges to zero in probability provided that,
	\[
	\sum_{k=1}^n w_k^2(x) \longrightarrow 0 
	\qquad \text{as } n \to \infty.
	\]
	This condition is satisfied by the generalized random forest weights $w_k(x)$ used in our setting. 
	Indeed, by Assumption~\ref{assump_wk}, we have
	\[
	\max_{1 \le k \le n} w_k(x) \le \frac{s}{n},
	\qquad \text{with } \frac{s}{n} \to 0 \text{ as } n \to \infty.
	\]
	Since $\sum_{k=1}^n w_k(x)=1$, it follows that
	\begin{equation}\label{resul_wk2_to_0}
		\sum_{k=1}^n w_k^2(x)
		\le 
		\max_{1 \le k \le n} w_k(x)
		\longrightarrow 0,
		\qquad n \to \infty.
	\end{equation}
	Consequently, for every $x \in \mathcal{X}$,
	\[
	\sup_{z \in [0,1]}
	\left|
	z - \sum_{k=1}^n w_k(x)\mathbb{1}_{\{U_{k,m} \le z\}}
	\right|
	\inprob 0.
	\]	
	We now turn to the first term in the upper bound of $|A_{n,m}(t)|$. 
	Recall that the conditional distribution function $F_{m}(\cdot \mid x)$ satisfies the same regularity assumptions imposed in \cite{meinshausen_quantile_2006} on the conditional distribution $F(\cdot \mid X=x)$. 
	As pointed out by the author, it therefore suffices to show that
	\[
	\left|
	\sum_{k=1}^{n} w_k(x)
	\Big(
	\mathbb{1}_{\left\{U_{k,m} \le F_m(y \mid X_{k,m})\right\}}
	-
	\mathbb{1}_{\left\{U_{k,m} \le F_m(y \mid x)\right\}}
	\Big)
	\right|
	\inprob 0,
	\]
	in order to conclude that
	\[
	\sup_{t \in \mathbb{R}} |A_{n,m}(t)|
	\inprob 0.
	\]
	This is the convergence that remains to be established. To this end, for each $k \in \{1,\dots,n\}$ and fixed $y$ such that $y=a_m(x)t+b_m(x),$ define
	\[
	D_{k,m}
	=
	\mathbb{1}_{\left\{U_{k,m} \le F_m(y \mid X_{k,m})\right\}}
	-
	\mathbb{1}_{\left\{U_{k,m} \le F_m(y \mid x)\right\}}.
	\]
	Since \( U_{k,m} \), \( k = 1, \ldots, n \), are uniformly distributed on \([0,1]\), it follows that
	\[
	\mathbb{E}\!\Big(D_{k,m} \Big)
	=
	F_m(y \mid X_{k,m}) - F_m(y \mid x).
	\]
	Using \eqref{resul_wk2_to_0} together with the independence of the \( U_{k,m} \), \( k = 1, \ldots, n \), as \( n \to \infty \),	 we obtain, based on a strategy similar to that of \cite{meinshausen_quantile_2006},	
	\[
	\left|
	\sum_{k=1}^{n} w_k(x)
	\Big(
	\mathbb{1}_{\left\{U_{k,m} \le F_m(y \mid X_{k,m})\right\}}
	-
	\mathbb{1}_{\left\{U_{k,m} \le F_m(y \mid x)\right\}}
	\Big)
	\right|
	\inprob
	\left|
	\sum_{k=1}^{n} w_k(x)
	\Big(F_m(y \mid X_{k,m}) - F_m(y \mid x)\Big)
	\right|.
	\]	
	We now control the deterministic term. By the triangle inequality and by Assumption~\ref{assump_cond_dist_Fm} (mainly the Lipschitz property), we have
	\begin{align*}
		\left|
		\sum_{k=1}^n w_k(x)
		\big(F_m(y \mid X_{k,m}) - F_m(y \mid x)\big)
		\right|
		&\le
		\sum_{k=1}^n w_k(x)
		\left|F_m(y \mid X_{k,m}) - F_m(y \mid x)\right| \\
		&\le
		L \sum_{k=1}^n w_k(x)\,\|X_{k,m} - x\|_2.
	\end{align*}
	By Lemma~\ref{lemme_sum_Wk_Liptchi_cond}, we have
	\[
	\sum_{k=1}^n w_k(x)\,\|X_{k,m} - x\|_2
	\inprob 0.
	\]
	Hence, for every fixed $y\in \mathbb{R}$,
	\[
	\left|
	\sum_{k=1}^n w_k(x)
	\big(F_m(y \mid X_{k,m}) - F_m(y \mid x)\big)
	\right|
	\inprob 0.
	\]	
	Combining the previous results, we conclude that
	\[
	\left|
	\sum_{k=1}^{n} w_k(x)
	\Big(
	\mathbb{1}_{\{U_{k,m} \le F_m(y \mid X_{k,m})\}}
	-
	\mathbb{1}_{\{U_{k,m} \le F_m(y \mid x)\}}
	\Big)
	\right|
	\inprob 0.
	\]	
	Therefore,
	\[
	\sup_{t \in \mathbb{R}} |A_{n,m}(t)|
	\inprob 0.
	\]	
	Finally, combining this result with the convergence of the bias term $B_{n,m}(t)$, we obtain that for every $x \in \mathcal{X}$ and $m=m(n)\to \infty$,
	\[
	\sup_{t \in \mathbb{R}}
	\left|
	\mathbb{F}_n^x(t) - G_{\xi_0(x)}(t)
	\right|
	\inprob 0.
	\]
\end{proof}

This Lemma~\ref{lemme2} allows us to conclude, by Lemma~2.2 in \cite{van_der_vaart_asymptotic_1998}, that for any bounded continuous function $f: \mathbb{R} \to \mathbb{R}$, we have
\begin{equation*}
	\mathbb{P}_n^x[f] \inprob G_{\xi_0(x)}[f] := \int_{\mathbb{R}} f(t) \, dG_{\xi_0(x)}(t).
\end{equation*}
\begin{lem}\label{lemme_sum_Wk_Liptchi_cond}
	Under the assumptions of Lemma~\ref{lemme2}, it holds that, for every fixed $x \in \mathcal{X}$,
	\[
	\sum_{k=1}^n w_k(x)\,\|X_{k,m} - x\|_2
	\inprob 0.
	\]
\end{lem}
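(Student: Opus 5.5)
The plan is to use the structure of the forest weights in (\ref{grf_weight}) to reduce the weighted average distance to an average of leaf diameters, and then use Assumption~\ref{assump_diam_convProba}. Since $w_k(x)=\frac1B\sum_{b=1}^B \mathbb{1}_{\{X_{k,m}\in R_b(x)\}}/|\{i: X_{i,m}\in R_b(x)\}|$, every index $k$ with $w_k(x)>0$ has $X_{k,m}$ in at least one leaf $R_b(x)$. Whenever $X_{k,m}\in R_b(x)$ we have $\|X_{k,m}-x\|_2\le \mathrm{diam}(R_b(x))$. Exchanging the sums over $k$ and $b$, and using that within each tree the leaf weights sum to one, I will derive the deterministic bound
\[
\sum_{k=1}^n w_k(x)\,\|X_{k,m}-x\|_2 \;\le\; \frac1B\sum_{b=1}^B \mathrm{diam}(R_b(x)).
\]
I will state this bound with $\mathrm{diam}$ measured within $\mathcal X$, that is, $\sup\{\|y-x\|_2: y\in R_b(x)\cap\mathcal X\}$. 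By the compactness of $\mathcal X$ in Assumption~\ref{assump_cond_dist_Fm}, this quantity is at most $D:=\mathrm{diam}(\mathcal X)<\infty$.

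Next I will show that the right-hand side tends to zero in probability. Fix $\varepsilon>0$ and split each diameter according to whether it exceeds $\varepsilon$:
\[
\frac1B\sum_{b=1}^B \mathrm{diam}(R_b(x)) \le \varepsilon + D\cdot\frac1B\sum_{b=1}^B \mathbb{1}_{\{\mathrm{diam}(R_b(x))>\varepsilon\}}.
\]
Taking expectations and applying Markov's inequality to the second term gives a bound of the form
\[
\mathbb P\Big[\tfrac1B\textstyle\sum_b \mathrm{diam}(R_b(x))>2\varepsilon\Big]\le \frac{D}{\varepsilon}\,\max_b \mathbb P[\mathrm{diam}(R_b(x))>\varepsilon].
\]
This holds with no independence assumption between trees. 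Assumption~\ref{assump_wk} gives $s\to\infty$, so Assumption~\ref{assump_diam_convProba} makes the right-hand side tend to zero. The trees are identically distributed given the data-generating mechanism, so the maximum over $b$ is the probability for a single tree and the bound is uniform in $B$. Since $\varepsilon$ is arbitrary, the convergence in probability follows.

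The main obstacle is making sure Assumption~\ref{assump_diam_convProba} applies to the forest actually grown on the block-maxima covariates $X_{k,m}$, whose distribution depends on $m=m(n)$, rather than to a fixed design. Here I will rely on the part of Assumption~\ref{assump_cond_dist_Fm} saying that the density of $X_{k,m}$ is bounded away from zero and infinity uniformly in $m$. The leaf-shrinkage statement in Assumption~\ref{assump_diam_convProba} should then be read, as in \cite{athey_generalized_2019} and \cite{gnecco_extremal_2024}, as holding uniformly over such designs. If that reading is not accepted, the same argument still works provided the diameter assumption is imposed directly on the trees built from $(X_{k,m})_k$, and I will note this explicitly. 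A secondary technical point is that if $\mathcal X$ were not compact, a truncation would be needed. Compactness removes this issue and gives the constant $D$ used above.
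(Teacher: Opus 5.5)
Your proof is correct, and its core is the same as the paper's. Writing the weights via (\ref{grf_weight}), exchanging the sums over $k$ and $b$, and bounding each leaf average of $\|X_{k,m}-x\|_2$ by $\mathrm{diam}(R_b(x))$ gives exactly the paper's deterministic bound $\frac1B\sum_{b=1}^B \mathrm{diam}(R_b(x))$.

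You differ only in how you pass to convergence in probability. The paper uses the union bound $\mathbb{P}\bigl(\frac1B\sum_b \mathrm{diam}(R_b(x))>\varepsilon\bigr)\le\sum_{b=1}^B\mathbb{P}(\mathrm{diam}(R_b(x))>\varepsilon)$. This needs no boundedness of $\mathcal X$, but it grows linearly in $B$ and so relies on $B$ being fixed. Your truncation-plus-Markov estimate instead uses the compactness of $\mathcal X$ from Assumption~\ref{assump_cond_dist_Fm}. It yields the bound $\frac{D}{\varepsilon}\max_b\mathbb{P}(\mathrm{diam}(R_b(x))>\varepsilon)$, which is uniform in $B$. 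It therefore also covers a number of trees growing with $n$, provided the trees are identically distributed.

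Your caveat that Assumption~\ref{assump_diam_convProba} must be read as a statement about trees grown on the block-maxima covariates $(X_{k,m})_k$ matches how the paper uses it, implicitly and without comment.
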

The proof of this lemma is given in Appendix~\ref{proof_lemme_sum_Wk_Liptchi_cond}.

\begin{lem}\label{lemma3} 
	Let $x \in \mathcal{X}.$
	Assume that Assumptions~\ref{assump1} to~\ref{assump_cond_dist_Fm} hold and that $\lim\limits_{n\to +\infty} m(n) = +\infty.$ Then, for all upper semi-continuous function $f: [-\infty,+\infty) \to \mathbb{R}$ that is bounded above,
	\[
	\limsup\limits_{n\to +\infty} \mathbb{P}_n^x[f] \leq G_{\xi_0(x)}[f] \quad \text{in probability.}
	\]
\end{lem}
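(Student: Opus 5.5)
The plan is to reduce Lemma~\ref{lemma3} to the weak convergence already obtained from Lemma~\ref{lemme2}, namely $\mathbb{P}_n^x[g]\inprob G_{\xi_0(x)}[g]$ for every bounded continuous $g:\mathbb{R}\to\mathbb{R}$. Two reductions are needed: from continuous to upper semi-continuous functions, and from bounded to functions that are only bounded above. The second is handled by truncation. Fix $f$ upper semi-continuous with $f\le M$. For $c>0$ set $f_c=\max(f,-c)$. Then $f\le f_c$, so $\mathbb{P}_n^x[f]\le \mathbb{P}_n^x[f_c]$, and $f_c$ is bounded and upper semi-continuous. By monotone convergence (decreasing, bounded above by $M$), $G_{\xi_0(x)}[f_c]\downarrow G_{\xi_0(x)}[f]$ as $c\to\infty$, possibly to $-\infty$. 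It therefore suffices to prove the claim for bounded upper semi-continuous $f$ and then let $c\to\infty$.

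For bounded upper semi-continuous $f$, I would use the standard Baire approximation. There is a decreasing sequence of bounded Lipschitz functions $g_j\downarrow f$ pointwise, for example
\[
g_j(t)=\sup_{s\in\mathbb{R}}\{f(s)-j|t-s|\}.
\]
Each $g_j$ is bounded continuous and $g_j\ge f$. Hence $\mathbb{P}_n^x[f]\le \mathbb{P}_n^x[g_j]\inprob G_{\xi_0(x)}[g_j]$. Interpreting ``$\limsup\le$ in probability'' as
\[
\mathbb{P}\bigl[\mathbb{P}_n^x[f]> G_{\xi_0(x)}[f]+\varepsilon\bigr]\to 0
\]
for every $\varepsilon>0$, we choose $j$ such that $G_{\xi_0(x)}[g_j]\le G_{\xi_0(x)}[f]+\varepsilon/2$, which is possible by monotone convergence. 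Then
\[
\mathbb{P}\bigl[\mathbb{P}_n^x[f]> G_{\xi_0(x)}[f]+\varepsilon\bigr]\le \mathbb{P}\bigl[\mathbb{P}_n^x[g_j]>G_{\xi_0(x)}[g_j]+\varepsilon/2\bigr]\to0.
\]
When $G_{\xi_0(x)}[f]=-\infty$ after truncation, the same argument shows that $\mathbb{P}[\mathbb{P}_n^x[f]>-A]\to0$ for every $A$.

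One technical point concerns the domain $[-\infty,+\infty)$. The normalized maxima $\tilde Z_{k,m}$ are real, so the value $f(-\infty)$ is never charged by $\mathbb{P}_n^x$. The limit $G_{\xi_0(x)}$ also puts no mass at $-\infty$, so it suffices to apply the argument to the restriction of $f$ to $\mathbb{R}$, which is still upper semi-continuous and bounded above.

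I expect the main obstacle to be this passage from the bounded continuous case (via Lemma~\ref{lemme2} and Lemma~2.2 of \cite{van_der_vaart_asymptotic_1998}) to semi-continuous test functions while keeping only convergence in probability. The approximation index $j$ and the truncation level $c$ must be fixed before letting $n\to\infty$, so the $\varepsilon/2$ bookkeeping above has to be done carefully. No property of the forest weights is needed beyond what Lemma~\ref{lemme2} already uses.
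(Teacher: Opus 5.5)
Your proof is correct, but it takes a different route from the paper's.

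\textbf{The paper's route.} The paper passes to the nonnegative lower semi-continuous function $\tilde f=M-f$ and uses the quantile representation $\mathbb{P}_n^x[\tilde f]=\int_0^1\tilde f\bigl(\mathbb{F}_n^{x\leftarrow}(u)\bigr)\,du$. From Lemma~\ref{lemme2} it derives the pointwise convergence in probability $\mathbb{F}_n^{x\leftarrow}(u)\inprob G_{\xi_0(x)}^{\leftarrow}(u)$. Along an arbitrary subsequence it extracts a further subsequence on which the quantile functions converge almost surely, using a diagonal argument over a dense set of levels together with monotonicity. It then applies lower semi-continuity and Fatou's lemma pathwise. Finally it returns to ``in probability'' through the subsequence criterion of \cite{kallenberg_foundations_2002}.

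\textbf{Your route.} You prove the Portmanteau upper bound directly. The truncation $f_c=\max(f,-c)$ and the Lipschitz regularisations $g_j\downarrow f_c$ reduce everything to the convergence $\mathbb{P}_n^x[g]\inprob G_{\xi_0(x)}[g]$ for bounded continuous $g$, which the paper records right after Lemma~\ref{lemme2}.

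\textbf{Comparison.} Each $\varepsilon$ requires only one approximating function, fixed before $n\to\infty$. So you need no subsequence extraction, no almost-sure coupling and no quantile inversion. The paper's argument, for its part, needs the continuity of $G_{\xi_0(x)}^{\leftarrow}$ to transfer CDF convergence to quantile convergence. The price of your approach is the explicit truncation step and the separate treatment of the case $G_{\xi_0(x)}[f]=-\infty$. The paper's Fatou argument avoids both, since it handles $\tilde f$ with values in $[0,+\infty]$ directly.

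\textbf{On the function class.} The statement's ``$f:[-\infty,+\infty)\to\mathbb{R}$'' is evidently meant as $f:\mathbb{R}\to[-\infty,+\infty)$. That is how the lemma is used in Proposition~\ref{propo1}, for $\ell_{B_i}$, which equals $-\infty$ wherever $z$ lies outside the support of every $\ell_\theta$, $\theta\in B_i$. Your remark about the value $f(-\infty)$ addresses the literal reading. Your truncation $\max(f,-c)$ is exactly what is needed for the intended one, and it remains upper semi-continuous and bounded there. So your argument covers the version the paper actually uses.
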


\begin{proof}
	Let $M$ denote the supremum of $f.$ Define $\tilde{f}=M-f,$ which is lower semicontinuous and nonnegative.  
	We have
	\begin{eqnarray}
		\notag
		\mathbb{P}_n^x[f]&=& \sum_{k=1}^{n}w_k(x)\left(M-\tilde{f}\right)(\tilde{Z}_{k,m})\\\notag
		&=& M\sum_{k=1}^{n}w_k(x)-\sum_{k=1}^{n}w_k(x)\tilde{f}(\tilde{Z}_{k,m})\\ \notag
		&=& M-\mathbb{P}_n^x[\tilde{f}] \quad \text{ since $\sum_{k=1}^{n}w_k(x)=1$},
	\end{eqnarray}
	and
	\begin{eqnarray}
		\notag
		G_{\xi_0(x)}[f]&=& \int_{\mathbb{R}}\left(M-\tilde{f}\right)(t)dG_{\xi_0(x)}(t)\\\notag
		&=& M-G_{\xi_0(x)}[\tilde{f}].
	\end{eqnarray} It is therefore sufficient to prove that
	\[\liminf\limits_{n\to +\infty}\mathbb{P}_n^x[\tilde{f}]\geq G_{\xi_0(x)}[\tilde{f}] \quad\text{ in probability.}\tag{$\star$}  \]
	Indeed,
	\begin{eqnarray}\notag
		\limsup\limits_{n \to +\infty}\mathbb{P}_n^x[f]\leq G_{\xi_0(x)}[f]&\Leftrightarrow & \limsup\limits_{n \to +\infty}\left(M-\mathbb{P}_n^x[\tilde{f}]\right)\leq M- G_{\xi_0(x)}[\tilde{f}]\\ \notag
		&\Leftrightarrow & M- \liminf\limits_{n \to +\infty}\mathbb{P}_n^x[\tilde{f}]\leq M- G_{\xi_0(x)}[\tilde{f}] \\ \notag
		&\Leftrightarrow & \liminf\limits_{n \to +\infty}\mathbb{P}_n^x[\tilde{f}]\geq G_{\xi_0(x)}[\tilde{f}].	
	\end{eqnarray}
	To establish $(\star),$ we use the representations
	\[
	\mathbb{P}_n^x[\tilde{f}] = \int_{0}^{1} \tilde{f}\big(F_n^{x\leftarrow}(u)\big)\,du
	\quad \text{and }\quad
	G_{\xi_0(x)}[\tilde{f}] = \int_{0}^{1} \tilde{f}\big(G_{\xi_0(x)}^{\leftarrow}(u)\big)\,du,
	\]
	with $F_n^{x\leftarrow}(u) = \inf\{t\in \mathbb{R} \mid F^x_n(t)\geq u\}.$  
	By Lemma \ref{lemme2}, we know that $\mathbb{F}_n^x(t) \inprob G_{\xi_0(x)}(t)$ for all $t \in \mathbb{R}.$ Furthermore, since $\mathbb{F}_n^x$ is nondecreasing and $G_{\xi_0(x)}$ is continuous, it follows that
	\begin{equation}\label{conv_quantile_proba}
		\mathbb{F}_n^{x\leftarrow}(u) ~\inprob~G_{\xi_0(x)}^{\leftarrow}(u), \quad \forall u \in (0,1).
	\end{equation}
	Let $(u_j)_{j \ge 1}$ be a dense sequence in the interval $(0,1)$. 
	Let $(n_k)$ be an arbitrary subsequence of $(\mathbb{F}_n^{x\leftarrow})_{n \ge 1}$. 
	According to the convergence in probability given by the relation \eqref{conv_quantile_proba}, we can extract from this arbitrary subsequence a diagonal sub-subsequence 
	$(n_{k_\ell})$  such that, almost surely
	\[
	\forall j \ge 1, \quad 
	F_{n_{k_\ell}}^{x\leftarrow}(u_j) \to G_{\xi_0(x)}^{\leftarrow}(u_j) \quad \text{as } \ell \to +\infty.
	\]
	The functions $u \mapsto F_{n_{k_\ell}}^{x\leftarrow}(u)$ are nondecreasing.  
	By density of the set $\{u_j\}$ and continuity of $G_{\xi_0(x)}^{\leftarrow}$, we then obtain that, for almost every $u \in (0,1)$,
	$
	F_{n_{k_\ell}}^{x\leftarrow}(u)\inas G_{\xi_0(x)}^{\leftarrow}(u).
	$

	Since $\tilde{f}$ is lower semi-continuous, we obtain for all $u \in (0,1)$
	\[
	\liminf\limits_{\ell \to +\infty}\tilde{f}\left(\mathbb{F}_{n_{k_\ell}}^{x\leftarrow}(u)\right)\geq \tilde{f}\left(G_{\xi_0(x)}^{\leftarrow}(u)\right) \quad \text{ a.s.},
	\]
	and integrating yields
	\begin{eqnarray}\label{eq1}\notag
		\int_{0}^{1}\tilde{f}\left(G_{\xi_0(x)}^{\leftarrow}(u)\right)du 
		&\leq& \int_{0}^{1}\liminf\limits_{\ell \to +\infty}\tilde{f}\left(\mathbb{F}_{n_{k_\ell}}^{x\leftarrow}(u)\right)du \quad \text{a.s.}\\\notag
		&\leq& \liminf\limits_{\ell \to +\infty}\int_{0}^{1}\tilde{f}\left(\mathbb{F}_{n_{k_\ell}}^{x\leftarrow}(u)\right)du \quad \text{by Fatou's lemma} ,
	\end{eqnarray}
	which gives
	\[
	G_{\xi_0(x)}[\tilde{f}] \leq \liminf\limits_{\ell \to +\infty}\mathbb{P}_{n_{k_\ell}}^x[\tilde{f}] \quad \text{a.s.}
	\]
	We have shown that every subsequence $(n_k)$ admits a sub-subsequence $(n_{k_\ell})$ along which
	$
	\liminf_{\ell \to \infty} \mathbb{P}_{n_{k_\ell}}^x[\tilde{f}] \ge G_{\xi_0(x)}[\tilde{f}] 
	\quad \text{almost surely.}
	$
	By Lemma~4.2 in \cite{kallenberg_foundations_2002}), it then follows that
	\[
	\liminf_{n \to \infty} \mathbb{P}_{n}^x[\tilde{f}] \ge G_{\xi_0(x)}[\tilde{f}] 
	\quad \text{in probability.}
	\]
	This completes the proof of Lemma~\ref{lemma3}.
\end{proof}

The Lemma~\ref{lemma4} and proposition~\ref{propo1}, which we will state below, are very useful for proving the convergence theorem.

\begin{lem}\label{lemma4}
	Let \(x\in\mathcal{X}\). We assume that Assumptions ~\ref{assump1}-\ref{assump_cond_dist_Fm} are satisfied, that \(\xi_0(x)>-1\), and that \(\displaystyle\lim_{n\to\infty}\frac{m(n)}{\log n}=+\infty\). Let \(Y_{k,m} = \ell_{\theta_0}\bigl((Z_{k,m}-b_m(x) )/a_m(x) \bigr)\) and \( f_m(x) = \mathbb{E}[Y_{k,m} \mid X_{k,m} = x] \). 	
	Then
	\[
	\mathbb{P}_n^x[\ell_{\theta_0}]=\sum_{k=1}^n w_k(x)\,Y_{k,m}\inprob G_{\xi_0(x)}[\ell_{\theta_0}]  \quad \text{as $n \to +\infty$,}.
	\]with $\theta_0 = (0,1,\xi_0(x)).$
\end{lem}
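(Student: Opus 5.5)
We prove the convergence by a bias--variance decomposition around the conditional means $f_m$. Write
\[
\mathbb{P}_n^x[\ell_{\theta_0}]-G_{\xi_0(x)}[\ell_{\theta_0}]
=\sum_{k=1}^n w_k(x)\bigl(Y_{k,m}-f_m(X_{k,m})\bigr)
+\sum_{k=1}^n w_k(x)\bigl(f_m(X_{k,m})-f_m(x)\bigr)
+\bigl(f_m(x)-G_{\xi_0(x)}[\ell_{\theta_0}]\bigr),
\]
using $\sum_k w_k(x)=1$. We call the three terms $T_1$, $T_2$ and $T_3$, and show that each tends to zero.

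\textbf{The variance term $T_1$.} We use honesty (Assumption~\ref{assump_wk}). Conditionally on the covariates and on the weights, the centred variables $Y_{k,m}-f_m(X_{k,m})$ are independent and have mean zero. Hence
\[
\mathbb{E}[T_1^2\mid X]=\sum_k w_k(x)^2\,\mathrm{Var}(Y_{k,m}\mid X_{k,m})\le C\sum_k w_k(x)^2 .
\]
The constant $C$ comes from hypothesis \eqref{Hypothese_H}. By \eqref{resul_wk2_to_0}, $\sum_k w_k(x)^2\le s/n\to0$. Chebyshev's inequality, applied conditionally and then integrated, gives $T_1\inprob 0$.

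\textbf{The localisation term $T_2$.} Equicontinuity of $\{f_m\}$ on the compact set $\mathcal X$ gives a modulus $\omega$, uniform in $m$, with $\omega(\delta)\to0$ as $\delta\to0$. Fix $\varepsilon>0$ and choose $\delta$ with $\omega(\delta)\le\varepsilon$. Splitting the sum according to whether $\|X_{k,m}-x\|_2\le\delta$ gives
\[
|T_2|\le \varepsilon+2\sup_{z\in\mathcal X}|f_m(z)|\;\delta^{-1}\sum_k w_k(x)\|X_{k,m}-x\|_2 .
\]
By Jensen's inequality and \eqref{Hypothese_H}, $\sup_{z}|f_m(z)|\le \sqrt C$. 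By Lemma~\ref{lemme_sum_Wk_Liptchi_cond}, the last sum tends to zero in probability. Since $\varepsilon$ is arbitrary, $T_2\inprob0$.

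\textbf{The deterministic bias term $T_3$.} We must show
\[
f_m(x)=\mathbb{E}\bigl[\ell_{\theta_0}(\tilde Z_{m})\mid X=x\bigr]\to G_{\xi_0(x)}[\ell_{\theta_0}],
\]
where $\tilde Z_m$ has distribution function $F_m(a_m(x)\,\cdot+b_m(x)\mid x)$. By \eqref{limit_of_Fm}, this distribution converges weakly to $G_{\xi_0(x)}$. Since $\ell_{\theta_0}$ is continuous on the interior of the support of $G_{\xi_0(x)}$, convergence of expectations follows once $\{\ell_{\theta_0}(\tilde Z_m)\}_m$ is uniformly integrable. Uniform integrability follows from the uniform second-moment bound \eqref{Hypothese_H}, via de la Vallée-Poussin's criterion. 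One caveat: $\ell_{\theta_0}=-\infty$ outside the support, and \eqref{Hypothese_H} forces this event to have probability zero. We therefore apply the continuous mapping theorem with Skorokhod representation on the open support.

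\textbf{Main obstacle.} We expect $T_3$ to be the delicate step. Without \eqref{Hypothese_H}, one would need the argument of \cite{dombry_existence_2015}: control of $\ell_{\theta_0}$ near the lower endpoint when $\xi_0(x)>0$, and near the upper endpoint when $-1<\xi_0(x)<0$, using the rate $m/\log n\to\infty$ to rule out block maxima falling outside the limiting support simultaneously for all $k\le n$. In that case the plan is as follows. First, bound $\mathbb{P}(\min_k \tilde Z_{k,m}\le -1/\xi_0+\eta)$ by $n F_x^m(\cdot)$, which decays like $n\,e^{-cm}\to0$ exactly when $m/\log n\to\infty$. Second, treat the truncated part by weak convergence. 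With \eqref{Hypothese_H} as assumed, the uniform integrability route suffices. Combining $T_1,T_2,T_3\to0$ yields the lemma.
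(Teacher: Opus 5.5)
Your argument is correct, and most of it follows the paper. Your $T_1$ is exactly the term treated in Lemma~\ref{lemma6}: a conditional $L^2$ bound from \eqref{Hypothese_H} together with $\sum_k w_k^2(x)\le s/n$. Your control of the far indices through $\delta^{-1}\sum_k w_k(x)\|X_{k,m}-x\|_2$ and Lemma~\ref{lemme_sum_Wk_Liptchi_cond} is also the paper's.

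The genuine difference is in the bias part $V_n=T_2+T_3$.
\begin{itemize}
\item The paper compares $f_m(X_{k,m})$ directly with a limit function. It claims $f_m(x')\to\phi(x'):=G_{\xi_0(x')}[\ell_{\theta_0}]$ for every $x'$ near $x$. It then upgrades this to uniform convergence on a ball by equicontinuity, and uses continuity of $\phi$ to obtain \eqref{conv_unif_voisi_x}.
\item You insert $f_m(x)$ instead. Then $T_2$ needs only equicontinuity at $x$, with no limit function at all. And $T_3$ needs the weak-convergence-plus-uniform-integrability argument only at the single point $x$, where the normalization $(a_m(x),b_m(x))$ and the parameter $\theta_0(x)$ built into $Y_{k,m}$ are the correct ones.
\end{itemize}

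This makes your route more robust, not just shorter. Because $Y_{k,m}$ is normalized at the fixed $x$ (as it must be for $T_1$ to be centred), the paper's convergence $f_m(x')\to G_{\xi_0(x')}[\ell_{\theta_0}]$ at $x'\neq x$ is not justified. Under $F_{x'}^m$, the variable $(Z-b_m(x))/a_m(x)$ need not converge to $G_{\xi_0(x')}$. The continuity of $\phi$ is also only asserted. Your triangle inequality
\[
|f_m(x')-\phi(x)|\le|f_m(x')-f_m(x)|+|f_m(x)-\phi(x)|
\]
gives the same key bound without needing either claim. The paper's route would in principle give local uniform convergence of $f_m$, but that extra information is never used.

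Two side remarks:
\begin{itemize}
\item You read honesty as joint conditional independence of the $Z_{k,m}$ given all covariates and the weights. That is stronger than Assumption~\ref{assump_wk} literally says, but it is exactly what Lemma~\ref{lemma6} needs as well.
\item You are right that, under \eqref{Hypothese_H}, the rate $m/\log n\to\infty$ plays no role. This holds for the paper's proof as much as for yours.
\end{itemize}
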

The proof of this lemma is thecnical and is given in Appendix \ref{proof_of_lemma5}. 

\begin{proposition}\label{propo1}
	Let $x \in \mathcal{X}$ and let $K \subset \varTheta$ be a compact neighborhood of $\theta_0(x)$, with $\theta_0(x) = (0,1,\xi_0(x))$. Under the conditions of Theorem~\ref{theo_consistency}, we have:
	\[
	\tilde{\theta}_n^K(x) \inprob \theta_0(x) \quad \text{as $n \to +\infty,$}
	\]
	where, $\tilde{\theta}_n^K$ is defined in \eqref{theta_K_n}.
\end{proposition}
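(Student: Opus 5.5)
We follow the Wald-type argument of Proposition~2 in \cite{dombry_existence_2015}, adapted to the weighted empirical measure $\mathbb{P}_n^x$. Fix $x$ and a compact neighborhood $K$ of $\theta_0=\theta_0(x)$, and write $G=G_{\xi_0(x)}$. The key deterministic input is that the Kullback--Leibler criterion $\theta\mapsto G[\ell_\theta]$ on $K$ has a unique maximizer at $\theta_0$. Indeed, $G[\ell_{\theta_0}]$ is finite for $\xi_0>-1$, and for $\theta\neq\theta_0$ we have $G[\ell_\theta]<G[\ell_{\theta_0}]$ by Gibbs' inequality and identifiability of the GEV family. When the support of $G_\theta$ does not contain that of $G$, the value is $-\infty$, which is harmless. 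The goal is to show that for every open neighborhood $V$ of $\theta_0$,
\[
\mathbb{P}\Bigl[\sup_{\theta\in K\setminus V}\tilde L_n(\theta;x)<\tilde L_n(\theta_0;x)\Bigr]\to1 .
\]
Since $\tilde\theta_n^K(x)$ maximizes $\tilde L_n(\cdot;x)$ over $K$, this gives $\mathbb{P}[\tilde\theta_n^K(x)\in V]\to1$, which is the claim.

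The first step handles the value at $\theta_0$. By Lemma~\ref{lemma4}, $\tilde L_n(\theta_0;x)=\mathbb{P}_n^x[\ell_{\theta_0}]\inprob G[\ell_{\theta_0}]$. This is the only place where $m/\log n\to\infty$ enters. It is needed because $\ell_{\theta_0}$ is unbounded below, so Lemma~\ref{lemme2} alone does not suffice.

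The second step gives an upper bound away from $\theta_0$. Set $\delta=G[\ell_{\theta_0}]-\sup_{\theta\in K\setminus V}G[\ell_\theta]$. This quantity is positive because $\theta\mapsto G[\ell_\theta]$ is upper semicontinuous on the compact set $K\setminus V$, by Fatou's lemma and the boundedness above of $\ell_\theta$ uniformly on $K$ (note $\sigma$ is bounded below on $K$). For each $\theta'\in K\setminus V$ and $r>0$, define the function $f_{\theta',r}(z)=\sup_{\theta\in B(\theta',r)\cap K}\ell_\theta(z)$. It is upper semicontinuous and bounded above, taking the value $-\infty$ outside the support, where we truncate at a large negative level $-M$ to keep it real-valued. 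By monotone convergence, as $r\downarrow0$ and $M\uparrow\infty$, $G[f_{\theta',r}]$ decreases to $G[\ell_{\theta'}]$ (or below $-M$). Hence we can choose $r(\theta')$ and $M$ with $G[f_{\theta',r(\theta')}]<G[\ell_{\theta_0}]-\delta/2$.

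The third step is a covering argument. Cover $K\setminus V$ by finitely many such balls $B(\theta'_j,r_j)$, $j=1,\dots,J$. Then
\[
\sup_{\theta\in K\setminus V}\tilde L_n(\theta;x)\le\max_{j}\mathbb{P}_n^x[f_{\theta'_j,r_j}].
\]
Lemma~\ref{lemma3} gives $\limsup_n\mathbb{P}_n^x[f_{\theta'_j,r_j}]\le G[f_{\theta'_j,r_j}]$ in probability for each $j$. More precisely, for each $\varepsilon>0$ the probability that $\mathbb{P}_n^x[f_{\theta'_j,r_j}]>G[f_{\theta'_j,r_j}]+\varepsilon$ tends to $0$. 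Because $J$ is finite, taking $\varepsilon=\delta/8$ yields, with probability tending to one, $\sup_{K\setminus V}\tilde L_n\le G[\ell_{\theta_0}]-3\delta/8$. At the same time Step 1 gives $\tilde L_n(\theta_0;x)\ge G[\ell_{\theta_0}]-\delta/8$ with probability tending to one. Combining the two bounds proves the displayed inequality.

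The main obstacle is Step 2, namely the uniform control of $\ell_\theta$ near parameters whose GEV support boundary depends on $\theta$. For $\xi$ near $\xi_0\le0$ or near a finite endpoint, one must check that $f_{\theta',r}$ is genuinely upper semicontinuous on $[-\infty,\infty)$ and bounded above. For this I would use the bound $\ell_\theta(z)\le -\log\sigma+\sup_y g_\xi(y)$ (as in Dombry's Proposition~1), which is finite uniformly for $\xi$ in a compact subset of $(-1,\infty)$. The identifiability and strict-inequality claim for $G[\ell_\theta]$ also has to be verified in the cases where $G[\ell_\theta]=-\infty$, which are handled by the truncation at $-M$.
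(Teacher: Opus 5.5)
Your proposal is correct and follows essentially the same Wald-type argument as the paper. Both use strict maximization of $\theta\mapsto G_{\xi_0(x)}[\ell_\theta]$ at $\theta_0$, the shrinking-ball approximation $G_{\xi_0(x)}[\ell_{B(\theta,\epsilon)}]\downarrow G_{\xi_0(x)}[\ell_\theta]$ (the paper imports these as Lemmas~5--6 of \cite{dombry_existence_2015}, while you re-derive them via Gibbs' inequality and monotone convergence), a finite cover of $K$ minus a neighborhood of $\theta_0$, Lemma~\ref{lemma3} on each ball, and Lemma~\ref{lemma4} at $\theta_0$. Your truncation at $-M$ and explicit probability bookkeeping are, if anything, slightly more careful than the paper, whose Lemma~\ref{lemma3} is stated for real-valued $f$ even though $\ell_{B_i}$ can equal $-\infty$; just cap $\delta$ at a finite value to cover the case $\sup_{K\setminus V}G_{\xi_0(x)}[\ell_\theta]=-\infty$.
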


\begin{proof}
	Let $x \in \mathcal{X}.$
	From Lemmas 5 and 6 of \cite{dombry_existence_2015}, we have the following results:
	\begin{itemize}
		\item[(i)] For all $\theta \in \varTheta$, we have $G_{\xi_0(x)}[\ell_{\theta}] \le G_{\xi_0(x)}[\ell_{\theta_0}]$, with equality if and only if $\theta = \theta_0$.
		\item[(ii)] For all $B \subset \varTheta$, define $\ell_B(z) = \sup_{\theta \in B} \ell_{\theta}(z)$ for all $z \in \mathbb{R}$.  For all $\theta \in \varTheta$, let $B(\theta, \epsilon)$ denote the open ball in $\varTheta$ centered at $\theta$ with radius $\epsilon>0$. Then,
		\[
		\lim_{\epsilon \to 0} G_{\xi_0(x)}[\ell_{B(\theta,\epsilon)}] = G_{\xi_0(x)}[\ell_{\theta}].
		\]
	\end{itemize}	
	From (i) and (ii), for all $\theta \in K$ with $\theta \neq \theta_0$, there exists $\epsilon_\theta > 0$ such that
	\begin{equation}\label{ineg_import}
		G_{\xi_0(x)}[\ell_{B(\theta, \epsilon_\theta)}] < G_{\xi_0(x)}[\ell_{\theta_0}].
	\end{equation}	
	Fix $\delta>0$ and define $\Delta = \{\theta \in K : \|\theta - \theta_0\| \ge \delta\}$. $\Delta$ is compact and can be covered by the open balls $\{B(\theta, \epsilon_\theta), \theta \in \Delta\}$. Let $B_i = B(\theta_i, \epsilon_{\theta_i})$, $i=1,\dots,p$, be a finite subcover (by the Borel–Lebesgue theorem).	Since $\tilde{L}_n(\theta;x) = \mathbb{P}_n^x[\ell_\theta]$, we have
	
	\begin{eqnarray}
		\notag
		\sup\limits_{\theta \in\Delta} \tilde{L}_n(\theta;x) &\leq&\sup\limits_{\theta \in \cup_{i=1 }^{p}B_i} \tilde{L}_n(\theta;x)\\ \notag
		&\leq& \max_{1\leq i\leq p} \{ \sup \limits_{\theta \in B_i}  \tilde{L}_n(\theta;x)\}\\  \label{key11}
		&\leq&   \max_{1\leq i\leq p} \{ \sup \limits_{\theta \in B_i}\sum_{k=1}^{n}w_k(x)\ell_{\theta}(\tilde{Z}_{k,m})  \}
	\end{eqnarray}
	For each $i \in \{1,\dots,p\}$ and $k \in \{1,\dots,n\}$, we have
	\[
	\forall \theta \in B_i,\quad\ell_\theta(\tilde{Z}_{k,m}) \le \sup_{\theta \in B_i} \ell_\theta(\tilde{Z}_{k,m}),
	\]
	so that	
	\begin{eqnarray}
		\notag
		\sum_{k=1}^{n} w_k(x)\ell_{\theta}(\tilde{Z}_{k,m})& \leq& \sum_{k=1}^{n} w_k(x) \sup \limits_{\theta \in B_i}\ell_{\theta}(\tilde{Z}_{k,m}) \\ \notag
		& \leq& \sum_{k=1}^{n} w_k(x) \ell_{B_i} (\tilde{Z}_{k,m}). 
	\end{eqnarray}
	Hence, $\forall i \in [|1,p|]$
	\begin{eqnarray}\notag
		\sup \limits_{\theta \in B_i}\sum_{k=1}^{n}w_k(x)\ell_{\theta}(\tilde{Z}_{k,m}) & \leq& \sum_{k=1}^{n} w_k(x) \ell_{B_i} (\tilde{Z}_{k,m}) \\ \notag
		& \leq& \mathbb{P}_n^x[\ell_{B_i}]
	\end{eqnarray}
	(\ref{key11}) therefore becomes
	\[
	\sup_{\theta \in \Delta} \tilde{L}_n(\theta;x) \le \max_{1 \le i \le p} \mathbb{P}_n^x[\ell_{B_i}].
	\]
	Since the function $\ell_{B_i}$ is upper semi-continuous and bounded from above, the assumptions of Lemma~\ref{lemma3} are satisfied, and  it follows that
	\begin{eqnarray} \notag
		\limsup_{n \to +\infty}\left(\sup\limits_{\theta \in\Delta} \tilde{L}_n(\theta;x)\right)
		&\leq& \limsup_{n \to +\infty}\left(\max_{1\leq i\leq p} \mathbb{P}_n^x[\ell_{B_i}] \right)\\ \notag
		&\leq&\max_{1\leq i\leq p} \left( \limsup_{n \to +\infty}\mathbb{P}_n^x[\ell_{B_i}] \right) \\ \notag
		&\leq&\max_{1\leq i\leq p} \left( G_{\xi_0(x)}[\ell_{B_i}] \right)  \quad \text{in probability, by Lemma \ref{lemma3}}\\ \notag
		&<& G_{\xi_0(x)}[\ell_{\theta_0} ]   \quad \text{in probability, by equation \eqref{ineg_import}}.
	\end{eqnarray}
	Moreover, for all $\theta \in K$, we have $\tilde{L}_n(\theta;x) \le \sup\limits_{\theta \in K} \tilde{L}_n(\theta;x)$, and in particular
	$
	\tilde{L}_n(\theta_0;x) \le \sup\limits_{\theta \in K} \tilde{L}_n(\theta;x),
	$ so that
	\begin{eqnarray}\notag
		\liminf_{n \to +\infty} \left(\sup\limits_{\theta \in K} \tilde{L}_n(\theta; x)\right)&\geq &  \liminf_{n \to +\infty}\left( \tilde{L}_n(\theta_0,x)\right)  \\ \notag
		&\geq & G_{\xi_0(x)}[\ell_{\theta_0}] \quad\text{ in probability, by Lemma \ref{lemma4}}.
	\end{eqnarray}
	Since $\tilde{\theta}_n^K(x)$ maximizes $\tilde{L}_n$ over $K$, we have
	$$\liminf\limits_{n \to +\infty} \left( \tilde{L}_n(\tilde{\theta}_n^{K}(x))\right)\geq  G_{\xi_0(x)}[\ell_{\theta_0}] > \limsup\limits_{n \to +\infty}\left(\sup\limits_{\theta \in\Delta} \tilde{L}_n(\theta; x)\right).$$
	which implies $\tilde{\theta}_n^{K}(x) \in K \backslash \Delta$ for large $n.$ That is,
	$
	\|\tilde{\theta}_n^K(x) - \theta_0(x)\| < \delta
	$
	for $n$ sufficiently large. Since $\delta$ is arbitrary, this shows that
	\[
	\tilde{\theta}_n^K(x) \inprob \theta_0(x), \quad n \to \infty.
	\]
	
\end{proof}
\begin{proof}[\textbf{Proof of Theorem~\ref{theo_consistency}}]
	Let $x \in \mathcal{X}$ and consider $K \subset \varTheta$ a compact neighborhood of $\theta_0(x)$, and let 
	$\tilde{\theta}_n^K(x) = \big(\tilde{\mu}_n^K(x), \tilde{\sigma}_n^K(x), \tilde{\xi}_n^K(x)\big)$
	be defined as in Proposition~\ref{propo1}. 	
	It suffices to take $	\hat{\theta}_n(x) = \big(a_m(x) \tilde{\mu}_n^K(x) + b_m(x), \, a_m(x) \tilde{\sigma}_n^K(x), \, \tilde{\xi}_n^K(x)\big)$  to obtain the results (\ref{equation_existence}) and (\ref{equation_consistence}) of Theorem~\ref{theo_consistency}. Indeed, with $\hat{\theta}_n(x) = (\hat{\mu}_n(x), \hat{\sigma}_n(x), \hat{\xi}_n(x))$ considered,
	we have
	\[
	\left( \frac{\hat{\mu}_n(x) - b_m(x) }{a_m(x) }, \frac{\hat{\sigma}_n(x)}{a_m(x) }, \hat{\xi}_n(x) \right)
	= \left(\tilde{\mu}_n^K(x), \tilde{\sigma}_n^K(x), \tilde{\xi}_n^K(x) \right)
	= \tilde{\theta}_n^K(x).
	\]
	By Proposition~\ref{propo1}, $\tilde{\theta}_n^K(x) \inprob \theta_0(x)$, which is equivalent to
	\[ 
	\hat{\xi}_n(x) \inprob \xi_0(x), \quad 
	\frac{\hat{\mu}_n(x)-b_m(x) }{a_m(x) } \inprob 0, \quad 
	\frac{\hat{\sigma}_n(x)}{a_m(x) } \inprob 1 \quad \text{as } n \to \infty,
	\] 
	and thus (\ref{equation_consistence}) holds. 	
	Moreover, since  $\tilde{\theta}_n^K(x)=\arg\max\limits_{\theta\in K}\tilde{L}_n(\theta;x),$  
	$\tilde{L}_n(\theta;x)$ attains a local maximum at $\tilde{\theta}_n^K(x)$ whenever $\tilde{\theta}_n^K(x) \in \text{Int}(K)$ (see proof of Theorem 2 in \cite{dombry_existence_2015}). As $\theta_0(x) \in \text{Int}(K)$, we have $\tilde{\theta}_n^K(x) \in \text{Int}(K)$ for $n$ sufficiently large. Therefore, $\tilde{L}_n(\theta;x)$ admits a local maximum at $\tilde{\theta}_n^K(x)$ for large $n$, and by Lemma~\ref{lemma1}, $\hat{\theta}_n(x)$ is a maximum likelihood estimator for $n$ sufficiently large. This proves (\ref{equation_existence}) and completes the proof of Theorem~\ref{theo_consistency}.
	
\end{proof}

\section{Simulation study}\label{section_simulation}
\subsection{Experiment 1}\label{expe1}
The aim of this first simulation study is to empirically demonstrate the convergence of the proposed weighted maximum likelihood estimator and to illustrate the theoretical results established in Section~\ref{section_mainContribution}. To this end, we independently generate a training sample
$
\{(X_i,Y_i)\}_{i=1}^{N},
$
according to the conditional model
\begin{equation}\label{cond_modelExp1}
	Y=\mu(X)+\sigma(X)\varepsilon,
	\qquad
	\varepsilon\sim\mathrm{GEV}(0,1,\xi),
\end{equation}
where $\sigma(X)>0.$
The covariate vector
\[
X_i=(X_i^{(1)},\ldots,X_i^{(p)})
\]
is independently generated from the uniform distribution on $[-1,1]^p$, with $p=20$. 
Consequently, conditional on $X=x$, the response variable follows a generalized extreme value (GEV) distribution with parameters $\mu(x)$, $\sigma(x)$, and $\xi$, namely
$
Y\,|\,X=x
\sim
\mathrm{GEV}\bigl(\mu(x),\sigma(x),\xi\bigr).
$
We consider the following model specification throughout this simulation study:
\[
\mu(x)=1+\mathbb{1}_{\{x^{(1)}\geq0\}},
\qquad 
\sigma(x)=4-\left((x^{(1)})^2-(x^{(2)})^2\right).
\]
Only the first two covariates contribute to the conditional distribution, while the remaining variables are treated as noise variables. Since the data are generated directly from a conditional GEV distribution, no block maxima construction is required in this first experiment, which is equivalent to considering a block size equal to one. The influence of the block size will be investigated in a second experiment. To evaluate the performance of the proposed estimator, we generate an independent test sample
$
\{x_i\}_{i=1}^{M},
$
using a Halton sequence, which provides a nearly uniform coverage of the covariate space. The estimator is then evaluated for training sample sizes ranging from $100$ to $4000$ observations.
For each parameter
$
\alpha\in\{\mu,\sigma,\xi\},
$
we consider the integrated variance defined as
\[
IVar(\alpha)
=
\frac{1}{M}
\sum_{i=1}^{M}
\mathrm{Var}\!\left(\widehat{\alpha}(x_i)\right),
\]
where $\mathrm{Var}(\widehat{\alpha}(x_i))$ denotes the empirical variance of the estimator at the prediction point $x_i$, computed over different Monte Carlo replications.
We also consider the integrated mean squared error (denoted by MISE), defined as
\[
MISE(\alpha)
=
IVar(\alpha)+IBias^2(\alpha).
\]
where $IBias$  denotes the integrated squared bias, given by
$IBias^2(\alpha)=\frac{1}{M}\sum_{i=1}^{M}\left(\mathbb{E}\!\left[\widehat{\alpha}(x_i)\right]-\alpha(x_i)\right)^2.
$
\begin{figure}[h]
	\centering
	\includegraphics[width=0.99\linewidth, height=0.5\textheight]{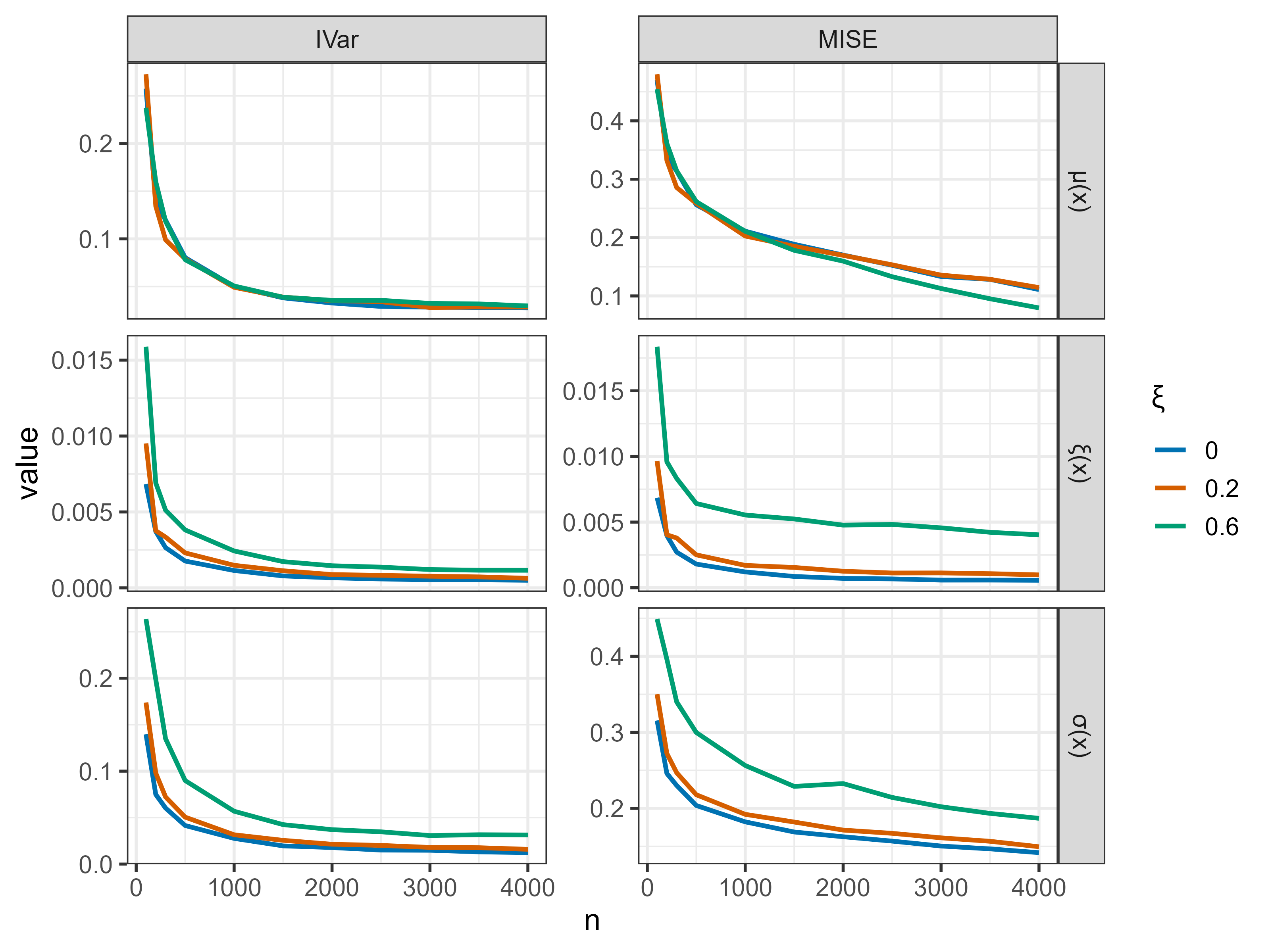}
	\caption{Integrated variance (IVar) and MISE as functions of the sample size $n.$	}
	\label{fig:figueexperience1}
\end{figure}
 Throughout this study, we use the default tuning parameters of the generalized random forest algorithm, as recommended by \cite{athey_generalized_2019}, namely $min.node.size=5$ and $num.trees=2000$.
Figure~\ref{fig:figueexperience1} shows the evolution of the integrated variance and the MISE of the proposed estimators as the sample size increases for three values of the shape parameter,
$
\xi\in\{0,\;0.2,\;0.6\}.
$
The results provide clear empirical evidence of the convergence of the proposed estimator. For all three model parameters, the integrated variance decreases rapidly as the sample size increases, resulting in a steady decline of the MISE. As expected, the estimation becomes more challenging as the shape parameter increases, reflecting the greater difficulty of estimating heavier-tailed distributions. Despite this increased difficulty, both criteria decrease systematically with the sample size, thereby confirming the stability and consistency of the proposed estimator, in agreement with the asymptotic results established in Section~\ref{section_mainContribution}.
\begin{figure}[h]
	\centering
	\includegraphics[width=0.99\linewidth, height=0.5\textheight]{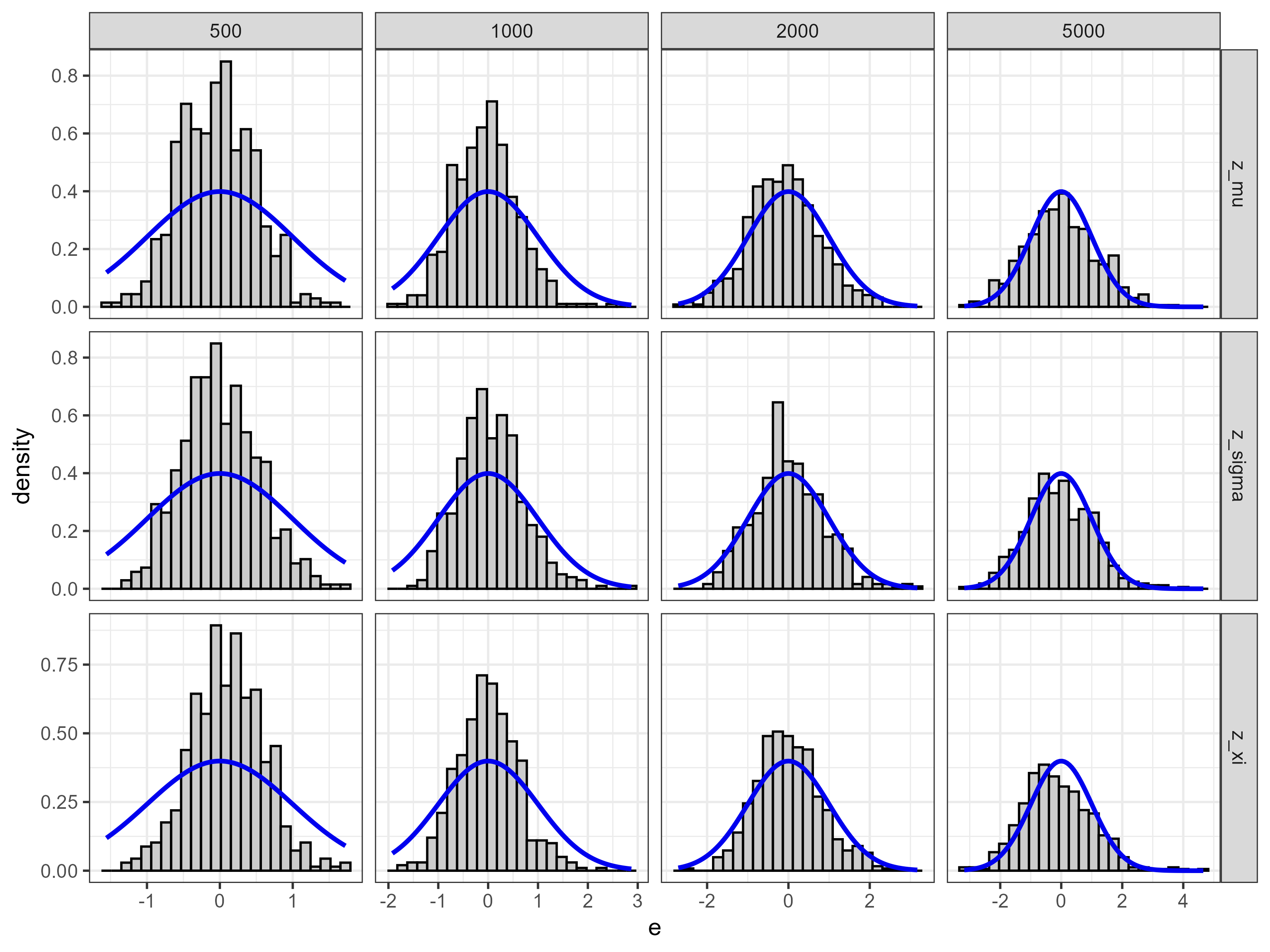}
	\caption{Histograms of the standardized estimation errors
		$z$
		for the estimators of $\mu(x_0)$, $\sigma(x_0)$ and $\xi(x_0)$, with
		$x_0=(0.5,\ldots,0.5)$ and different sample sizes. The blue curve represents the density of the standard normal distribution $\mathcal{N}(0,1)$.}
	\label{fig:figuenormassymp}
\end{figure}

To empirically illustrate the asymptotic normality of the proposed estimators, we consider the same setting as that described in \eqref{cond_modelExp1}, with the shape parameter fixed at $\xi=0.2$. The location function $\mu(x)$ and the scale function $\sigma(x)$ are identical to those used in the previous experiment. We consider $p=5$ and a fixed point in the covariate space,
$
x_0=(0.5,\ldots,0.5).
$
For each sample size
$
n\in\{500,1000,2000,5000\},
$
we generate $L$ independent training samples. At each replication, the proposed model is fitted to the simulated data, and the parameters $\mu(x_0)$, $\sigma(x_0)$, and $\xi(x_0)$ are estimated at the point $x_0$. For each parameter $\alpha\in\{\mu,\sigma,\xi\}$, we compute the normalized errors
\[
e_l=\sqrt{n}\left(\widehat{\alpha}^{(l)}(x_0)-\alpha(x_0)\right),
\qquad l=1,\ldots,L,
\]
where $\alpha(x_0)$ denotes the true value of the parameter $\alpha$ at the point $x_0$, while $\widehat{\alpha}^{(l)}(x_0)$ denotes the estimate of this parameter at the same point $x_0$, obtained from the $l$-th sample. Since the asymptotic variance is unknown, these errors are then centered and standardized using their empirical mean $\bar{e}$ and empirical standard deviation $s_e$, leading to the variables
\[
z_l=\frac{e_l-\bar{e}}{s_e}.
\]
Figure~\ref{fig:figuenormassymp} presents, for each of the three parameters and for the different sample sizes, the histograms of the standardized variables $z_l$, with the density of the standard normal distribution $\mathcal{N}(0,1)$ superimposed (shown in blue).
For the smallest sample size ($n=500$), the empirical distributions still exhibit a slight asymmetry and deviate moderately from the theoretical normal density. However, as the sample size increases, the histograms become progressively more symmetric, and their shape approaches that of the normal distribution. This convergence is particularly pronounced for $n=2000$ and $n=5000$, where the empirical distributions of the estimators of $\mu(x)$, $\sigma(x)$, and $\xi(x)$ closely reproduce the characteristic bell shape of a Gaussian distribution. This behavior is consistently observed for all three parameters.  These results provide empirical evidence of an approximately Gaussian behavior of the normalized estimators for increasing sample sizes.

\subsection{Experiment 2}
In this second experiment, we evaluate the ability of the proposed method to address the extreme quantile regression challenges discussed in the introduction of this work. To this end, we consider the conditional model
$
Y \mid X=x \sim \gamma(x)\,\mathcal{T}_{\nu(x)},
$
where
\[
\gamma(x)=1+2\pi\varphi(2x_1,2x_2),
\qquad
\nu(x)=3+\frac{7}{1+\exp(4x_1+1.2)},
\]
with $\varphi$ denoting the density of the centered bivariate normal distribution with unit variance and correlation coefficient equal to $0.75$. The covariate vector
$
X=(X^{(1)},\ldots,X^{(p)})
$
is generated from the uniform distribution on $[-1,1]^p$, and a training sample
$
\{(X_i,Y_i)\}_{i=1}^{N}
$
is then simulated according to this model.
Within the block maxima framework, the choice of the block size is a crucial step, as it directly determines the trade-off between the approximation bias induced by the GEV model and the variance of the resulting estimator. To date, there is no universally accepted procedure for selecting an optimal block size. Several methods have been proposed in specific settings, notably based on statistical tests (see, for example, \cite{vidagbandji_generalized_2025} and \cite{,vidagbandji_penalized_2026-1}). In practice, however, the block size is most often chosen according to natural time units, such as years, seasons, or months, depending on the temporal resolution of the data.
\begin{figure}[h!]
	\centering
	\includegraphics[width=0.99\linewidth, height=0.45\textheight]{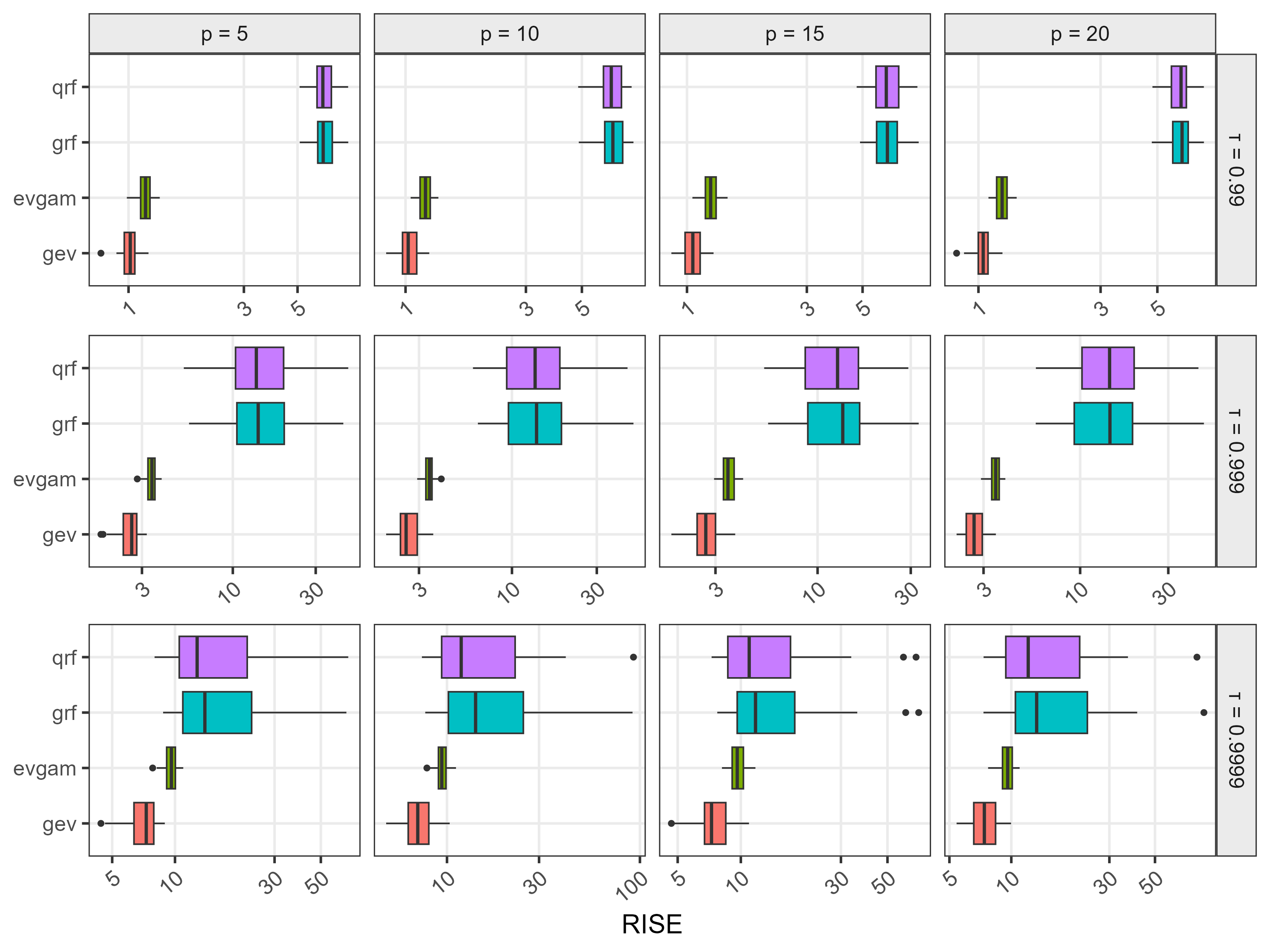}
	\caption{Boxplots of the RISE over 50  replications for different quantile regression methods across probability levels and covariate dimensions.}
	\label{fig:figueexperience21}
\end{figure}

In this study, we rely on the convergence result established in Section~\ref{section_mainContribution}, which requires that
$
\lim_{n\rightarrow\infty}\frac{m(n)}{\log n}=+\infty.
$
Accordingly, we set the block size to
$
m=\lceil n^{\frac{1}{3}}\rceil,
$
where $\lceil\cdot\rceil$ denotes the ceiling function. The number of block maxima used for inference is fixed at $n=2000$, resulting in an initial sample of size
$
N=m\times n.
$
The performance of the proposed method is evaluated on an independent test sample
$
\{x_i\}_{i=1}^{M},
$
generated using a Halton sequence over $[-1,1]^p$, with $M=200$. We compare our approach with several existing quantile regression methods, namely Quantile Regression Forest (qrf) \cite{meinshausen_quantile_2006}, Generalized Random Forest (grf) \cite{athey_generalized_2019}
, and the generalized additive model for extreme values (evgam) proposed by \cite{youngman_evgam_2022}.
 For all forest-based methods, we use the default tuning parameters recommended by \cite{athey_generalized_2019}, namely $num.trees=2000$ and $min.node.size=5$.  
 For an estimated conditional quantile function
 $
 x\mapsto\widehat{\mathcal{Q}}_{x}(\tau),
 \quad \tau\in(0,1),
 $
 we evaluate its predictive performance using the empirical Integrated Squared Error (ISE) computed over the test sample,
 $
 \mathrm{ISE}
 =
 \frac{1}{M}
 \sum_{i=1}^{M}
 \left(
 \widehat{\mathcal{Q}}_{x_i}(\tau)
 -
 \mathcal{Q}_{x_i}(\tau)
 \right)^2,
 $
 where $\mathcal{Q}_{x}(\tau)$ denotes the true conditional quantile function of the data-generating model and $M$ is the size of the test sample. As a performance criterion, we report the Root Integrated Squared Error (RISE), defined as
 $\mathrm{RISE}=\sqrt{\mathrm{ISE}}.$
 \begin{figure}[h]
 	\centering
 	\includegraphics[width=0.99\linewidth, height=0.5\textheight]{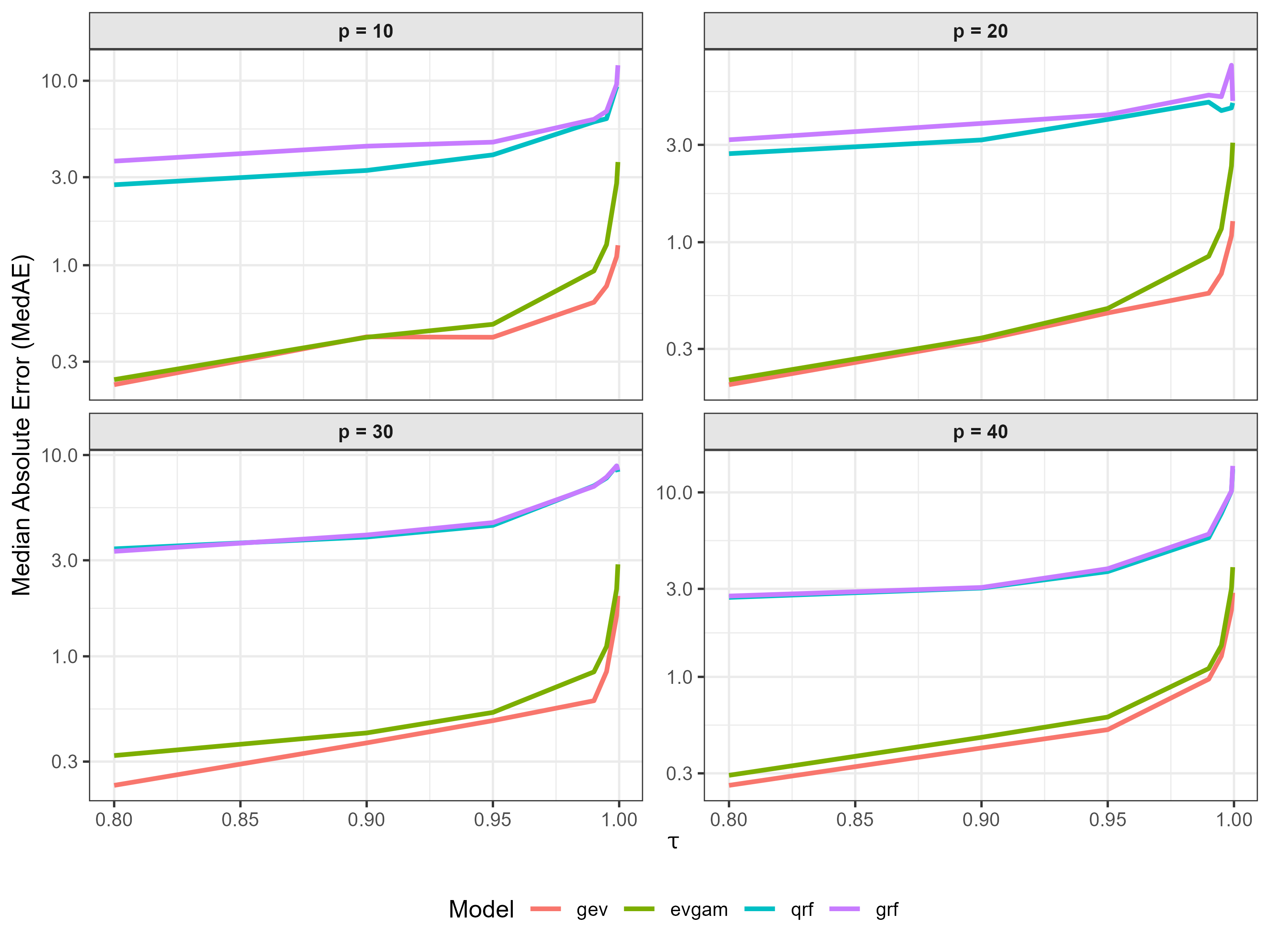}
 	\caption{Median Absolute Error of competing quantile regression methods across probability levels and covariate dimensions.}
 	\label{fig:figueexperience22}
 \end{figure}
 
 Figure~\ref{fig:figueexperience21} presents the boxplots of the RISE, computed over $50$ independent Monte Carlo replications, for the different methods under comparison. The results are reported for three extreme quantile levels, $\tau\in\{0.99,\,0.999,\,0.9999\}$, and four covariate dimensions ($p\in \{5,10,15,20\}$). This figure clearly highlights the superiority of the proposed method (denoted  gev) over the competing approaches. It outperforms both evgam, which is also based on Extreme Value Theory and generalized additive models, and the methods relying solely on statistical learning, namely grf and qrf. Although evgam yields substantially better results than qrf and grf, highlighting the benefit of a model specifically designed for extreme values, it remains consistently less accurate than the proposed approach. This improvement can be attributed to the localized weighted maximum likelihood estimation, which allows the parameters of the GEV distribution to adapt more flexibly to the covariates than the additive specification adopted by evgam. Finally,  we observe that the performance gap between the proposed method and the competing approaches becomes more pronounced as the target quantile level increases. This result shows that incorporating Extreme Value Theory is particularly beneficial for estimating very high conditional quantiles, a setting in which classical quantile regression methods quickly reach their limitations. Overall, these results confirm that the proposed method provides more accurate and more stable estimates of extreme conditional quantiles.

Finally, we consider the Median Absolute Error (MedAE) as an additional performance measure. It is defined as the median of the absolute differences between the estimated and true conditional quantiles,
$
\left|
\widehat{\mathcal{Q}}_{x_i}(\tau)-\mathcal{Q}_{x_i}(\tau)
\right|,
$
computed over the test sample $\{x_i\}_{i=1}^{M}$. Figure~\ref{fig:figueexperience22} displays the evolution of the MedAE as a function of the probability level for different covariate dimensions, $p\in\{10,20,30,40\}$. The conclusions are consistent with those obtained using the RISE criterion. As expected, the MedAE increases as the target probability level approaches one, illustrating the increasing difficulty of estimating more extreme conditional quantiles. Nevertheless, the proposed method consistently outperforms evgam, grf, and qrf across all simulation settings. This superiority is maintained even in high-dimensional settings, demonstrating the ability of the proposed approach to accurately estimate extreme conditional quantiles when the underlying quantile function is highly nonlinear and the covariate space is of high dimension.

\section{Real data analysis study}\label{section_realData}  

We applied our model using value-weighted return data from $10$ industry portfolios, constructed from all firms listed on the NYSE, AMEX, and NASDAQ. This dataset, obtained from the Kenneth French Data Library, contains 26,274 observations, corresponding to daily returns from July $1926$ to June $2026$ for the following ten industry portfolios: Consumer Nondurables (denoted NoDur), Consumer Durables (denoted Durbl), Manufacturing (denoted Manuf), Energy (denoted Enrgy), Business Equipment (denoted HiTec), Telephone and Television Transmission (denoted Telcm), Shops, Healthcare (denoted Hlth), Utilities(denoted Utils), and Other. For further details on these economic sectors and access to the value-weighted return data, the reader is referred to the Kenneth French Data Library\footnote{See \url{https://mba.tuck.dartmouth.edu/pages/faculty/ken.french/Data_Library/det_10_ind_port.html} for more detail.}.

We are interested in extreme losses. To this end, we transform the simple returns into log-returns and multiply them by $(-1)$ to obtain a measure of losses. If $r_{t,j}$ denotes the return of portfolio $j$ at time $t$, we consider the transformation
$X_{t,j}=-\log\left(\frac{r_{t,j}}{100}+1\right).
$ The Ljung–Box tests applied to these logarithmic losses lead to the rejection of the null hypothesis of no autocorrelation for all the portfolios considered. These results, presented in Appendix~\ref{Appendix_diagnostic}, suggest that the logarithmic loss series exhibit conditional temporal dependence. Consequently, the i.i.d. assumption required by the theoretical framework of our model is not satisfied by the raw data. Filtering both the temporal dependence  is therefore necessary before applying the extreme quantile regression model. In our application, we consider a more general specification that accounts simultaneously for dependence in the conditional mean and conditional variance. For each portfolio $j$, we specify an ARMA(2,2)-GARCH(1,1) model:
\begin{equation}
	X_{t,j}
	=
	\mu_{t,j}
	+
	\sigma_{t,j}\varepsilon_{t,j},
\end{equation}
where the mean component follows an ARMA(2,2) model,
\begin{equation*}
	\mu_{t,j}
	=
	c_j
	+\phi_{1,j}X_{t-1,j}
	+\phi_{2,j}X_{t-2,j}
	+\theta_{1,j}\eta_{t-1,j}
	+\theta_{2,j}\eta_{t-2,j},
\end{equation*}
with
\begin{equation*}
	\eta_{t,j}
	=
	\sigma_{t,j}\varepsilon_{t,j},
\end{equation*}
while the conditional variance follows a GARCH(1,1) process,
\begin{equation*}
	\sigma_{t,j}^{2}
	=
	\omega_j
	+\alpha_j\eta_{t-1,j}^{2}
	+\beta_j\sigma_{t-1,j}^{2}.
\end{equation*}
The parameters are estimated marginally for each of the ten portfolios. The associated standardized residual is then defined by
\begin{equation*}
	\widehat{\varepsilon}_{t,j}
	=
	\frac{X_{t,j}-\widehat{\mu}_{t,j}}
	{\widehat{\sigma}_{t,j}}.
\end{equation*}
The Ljung--Box tests are applied to the standardized marginal residual series for lags ranging from $1$ to $3$. The results are summarized in Table~\ref{tab:diagnostic_tests} in Appendix~\ref{Appendix_diagnostic}. For each of the portfolios considered, the test fails to reject the null hypothesis of no serial autocorrelation. These results therefore suggest that the standardized residuals do not exhibit significant temporal dependence.
\begin{figure}
	\centering
	\includegraphics[width=0.9999\linewidth, height=0.5\textheight]{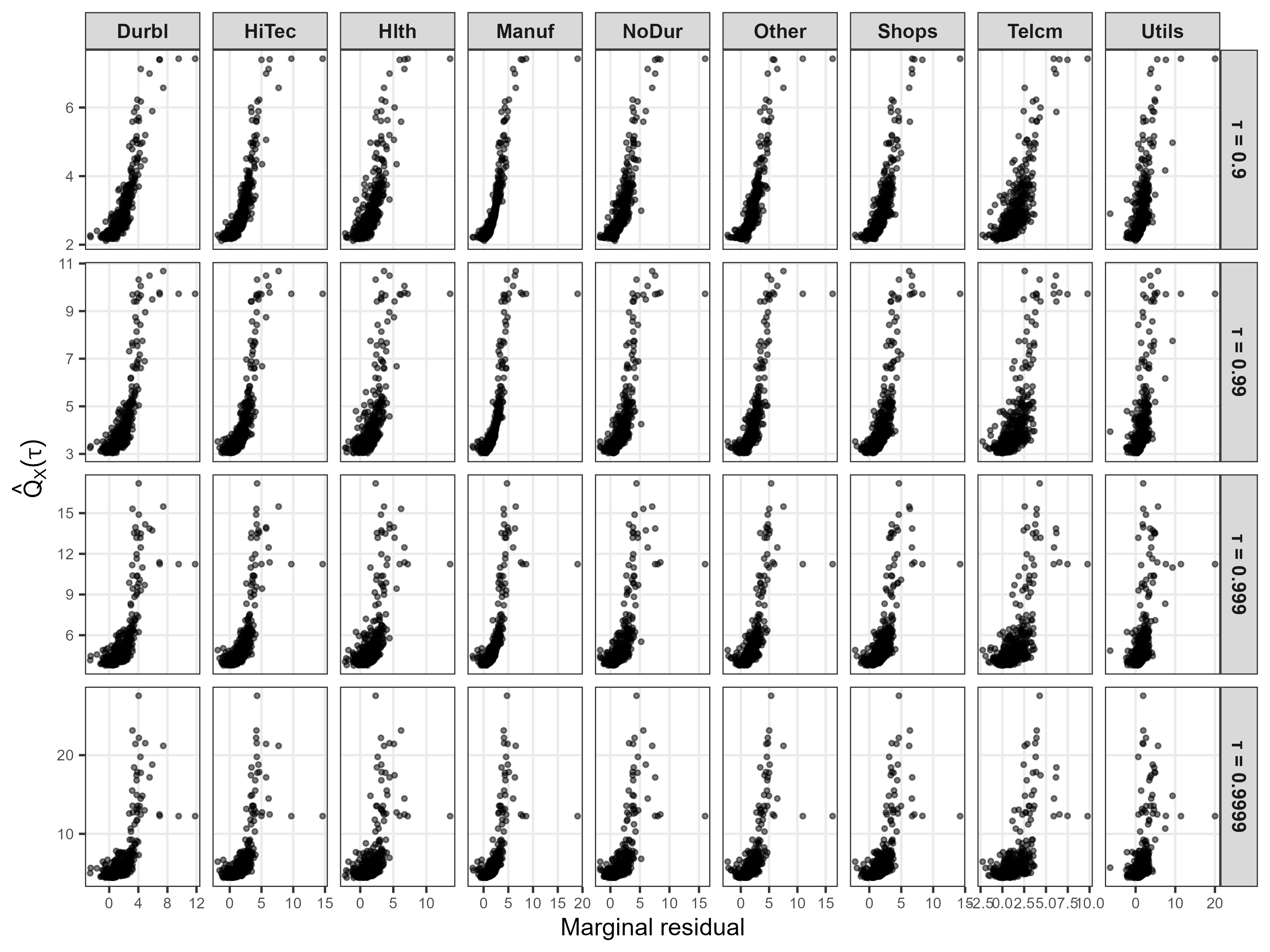}
	\caption{Relationship between estimated conditional quantiles and Portfolio Residuals across Sectors.}
	\label{fig:plotquantportfolio}
\end{figure}

The objective of this step is not to model the temporal dynamics of the losses, but rather to obtain marginal series whose serial dependence is sufficiently reduced to allow for the analysis of the block maxima approach. We therefore use these residuals in the remainder of our analysis. We consider the residuals associated with the Energy sector portfolio as the response variable,
$
Y_t=\widehat{\varepsilon}_{t,\mathrm{Enrgy}},
$ and the residuals of the other nine sector portfolios as covariates, collected in the vector
$
X_t=
\left(
\widehat{\varepsilon}_{t,\mathrm{NoDur}},
\widehat{\varepsilon}_{t,\mathrm{Durbl}},
\widehat{\varepsilon}_{t,\mathrm{Manuf}},
\widehat{\varepsilon}_{t,\mathrm{HiTec}},
\widehat{\varepsilon}_{t,\mathrm{Telcm}},
\widehat{\varepsilon}_{t,\mathrm{Shops}},
\widehat{\varepsilon}_{t,\mathrm{Hlth}},
\widehat{\varepsilon}_{t,\mathrm{Utils}},
\widehat{\varepsilon}_{t,\mathrm{Other}}
\right).
$ The filtered observations \((Y_t,X_t)\) are then partitioned into monthly blocks. For each block \(B_k\), with \(k=1,\ldots,1200\), the response variable is defined as the maximum of the standardized residuals associated with the Energy sector losses:
$
Z_k=\max_{t\in B_k}\widehat{\varepsilon}_{t,\mathrm{Enrgy}},
$ where \(B_k\) denotes the \(k\)-th monthly block. The explanatory variables consist of the standardized residuals of the other sector portfolios, collected in \(X_k\), observed on the same day as the maximum residual of the Energy portfolio. 
Thus, we obtain the sample of monthly maxima with concomitant covariates,
$
\{(Z_k,X_k),\,k=1,\ldots,1200\}.$

First, we train the model on \(60\%\) of the observations \((Z_k,X_k)_{k=1,2,\ldots}\) and estimate the conditional quantiles on the remaining \(40\%\). Figure \ref{fig:plotquantportfolio} illustrates the evolution of the estimated conditional quantiles as a function of the marginal residuals of the different sectors. The figure highlights a positive and nonlinear relationship between the marginal residuals and the estimated conditional quantiles across all sectors. As \(\tau\) close to \(1\), both the estimated quantile values and their dispersion increase markedly. Differences across sectors also emerge, reflecting heterogeneity in tail behavior. These results suggest that the dependence between the marginal residuals and the estimated conditional quantiles becomes more pronounced in the extreme regions.

Finally, we quantitatively evaluate the predictive performance of the model. To this end, we divide the monthly maxima data into five subsamples (folds) in order to implement a 5-fold cross-validation procedure. For each fold \(k\), the model is fitted using the observations belonging to the other four folds, while the observations in fold \(k\) are used to evaluate its predictive performance. The prediction error is assessed using the metric proposed by \cite{wang_estimation_2013}, defined as
$
R_{n'}\left(\widehat{\mathcal{Q}}(\tau)\right)
=
\frac{
	\displaystyle\sum_{i=1}^{n'}
	\mathbb{1}_{\{Y_i<\widehat{\mathcal{Q}}_{X_i}(\tau)\}}
	-n'\tau
}{
	\sqrt{n'\tau(1-\tau)}
},$ where \(\widehat{\mathcal{Q}}_{x}(\tau)\) denotes the estimated conditional quantile at level \(\tau\), evaluated on the test sample \(\{(X_i,Y_i)\}_{i=1,\ldots,n'}\). To reduce the variability associated with the random partitioning of the data, the five-fold cross-validation procedure is repeated ten times using different partitions. For each repetition, we compute the absolute value of the prediction error obtained for each fold. The final predictive performance of each model is then evaluated by averaging these absolute errors over all folds and across the ten repetitions. The results of this performance evaluation, together with those obtained from the evgam, grf, and qrf models, are reported in Table\ref{tab:quantile_errors}. As shown in the table, the gev model achieves the best predictive performance among the models considered.
\begin{table}[htbp]	
	\renewcommand{\arraystretch}{1.5}
	
	\centering

	\begin{tabular}{lcccc}
		\toprule
		Probability level ($\tau$) & $\mathrm{gev}$ &$\mathrm{evgam}$ &
$\mathrm{grf}$ & $\mathrm{qrf}$ \\
		\midrule
		0.8  & 4.50  & 4.65  & 4.49  & 4.69  \\
		0.9  & 10.5  & 10.9  & 11.1  & 11.1  \\
		0.99 & 39.0  & 41.7  & 40.8  & 40.6  \\
		0.995  & 66.0  & 69.7  & 67.9  & 68.5  \\
		0.999  & 135   & 142   & 139   & 140   \\
		0.9995 & 192   & 203   & 198   & 200   \\
		0.9999 & 411   & 446   & 434   & 439   \\
		\bottomrule
	\end{tabular}
		\caption{Comparison of Mean Absolute Quantile Estimation Errors from Repeated 5-Fold Cross-Validation Using the $R_{n'}(\widehat{\mathcal{Q}}(\tau))$ Metric.}
			\label{tab:quantile_errors}
\end{table}

\section{Conclusion}
In this paper, we introduced a flexible and theoretically grounded framework for estimating extreme conditional quantiles in high-dimensional and complex non-linear settings. By uniting the theoretical rigor of extreme value theory, specifically through the block maxima approach, with the non-parametric predictive power of generalized random forests, our method effectively overcomes two fundamental barriers in extremal quantile regression: modeling complex covariate interactions without restrictive parametric assumptions, and accurately extrapolating into sparse tail regions. The primary contribution of this work is theoretical. Under a conditional max-domain of attraction assumption and suitable localization and regularity properties of the forest weights, we established the existence and consistency of the local weighted maximum likelihood estimator for the GEV parameters and the corresponding extreme conditional quantiles. This theoretical foundation fills a critical gap in the literature on statistical learning methods applied to block maxima models, providing formal guarantees for localized unpenalized estimators. 
Through extensive simulation studies, we have empirically validated our theoretical results, confirming both the consistency and the approximate asymptotic normality of the proposed estimators as the sample size increases. Comparative experiments have shown that our method consistently outperforms other approaches, including classical quantile regression forests (\cite{meinshausen_quantile_2006}), generalized random forests (\cite{athey_generalized_2019}), and generalized additive models for extreme values (\cite{youngman_evgam_2022}), particularly at very high probability levels and in high-dimensional predictor spaces. Furthermore, an empirical application to sector portfolio losses in U.S. stock markets has illustrated the practical value of our approach in capturing the nonlinear dependencies of distribution tails and dynamic risk structures in real financial data.

Several promising avenues for further research emerge from this work. A natural and important theoretical extension is to formally prove the asymptotic normality of the forest-weighted maximum likelihood estimator. Establishing such asymptotic properties would allow us to construct valid confidence intervals and perform formal hypothesis tests for extreme conditional quantiles. Although our theoretical framework assumes an appropriate growth rate for the block size $m(n)$, the development of adaptive, data-driven selection procedures for $m,$ allowing for a balance between approximation bias and estimation variance in the presence of covariates, remains an open problem whose solution poses a computational challenge. Finally, extending the proposed methodology to account for dependent or non-stationary data, such as spatiotemporal environmental processes or high-frequency financial time series, would broaden its scope of application to risk assessment in climate science and quantitative finance.

\section*{Acknowledgements}
The authors gratefully thank Clément Dombry for insightful discussions and valuable comments that helped improve this work.

\appendix
\section*{Appendix}
\section{Proof of lemma~\ref{lemme_sum_Wk_Liptchi_cond}}\label{proof_lemme_sum_Wk_Liptchi_cond}
Under the assumptions of Lemma~\ref{lemme2}, consider each tree 
$b = 1,\dots,B$. 
For a fixed $x \in \mathcal{X}$, let $R_b(x)$ denote the leaf of tree $b$ 
containing $x$, and let $|R_b(x)|$ be the number of observations in that leaf. 
The diameter of $R_b(x)$ is defined as
\[
\operatorname{diam}(R_b(x))
=
\sup\{\|y - x\|_2 : y \in R_b(x)\}.
\] 
By construction of the random forest, for $x\in\mathcal{X},$ the weights assigned to the 
observation $X_k$ can be written as
\[
w_k(x)
=
\frac{1}{B}
\sum_{b=1}^B
\frac{\mathbb{1}_{\{X_{k,m} \in R_b(x)\}}}{|R_b(x)|}.
\]
Therefore,
\[
\sum_{k=1}^n w_k(x)\|X_{k,m} - x\|_2
=
\frac{1}{B}
\sum_{b=1}^B
\frac{1}{|R_b(x)|}
\sum_{X_{k,m} \in R_b(x)}
\|X_{k,m} - x\|_2.
\]
For each tree $b$, every point in the leaf $R_b(x)$ lies at a distance 
from $x$ bounded above by the diameter of the leaf. Hence, for all 
$b \in \{1,\dots,B\}$,
\[
\frac{1}{|R_b(x)|}
\sum_{X_{k,m} \in R_b(x)}
\|X_{k,m} - x\|_2
\le
\operatorname{diam}(R_b(x)).
\]
It follows that
\[
\sum_{k=1}^n w_k(x)\|X_{k,m} - x\|_2
\le
\frac{1}{B}
\sum_{b=1}^B
\operatorname{diam}(R_b(x)).
\]
Assumption~\ref{assump_diam_convProba} ensures that, for each tree $b$, 
$\operatorname{diam}(R_b(x))
\xrightarrow{p}
0
\quad \text{as } s \to \infty,
$
and therefore as $n \to \infty$. 
Since the number of trees $B$ is fixed, the average diameter also converges 
to zero in probability. Indeed, for any $\varepsilon > 0$,
\[
\mathbb{P}\!\left(
\frac{1}{B}
\sum_{b=1}^B
\operatorname{diam}(R_b(x))
>
\varepsilon
\right)
\le
\sum_{b=1}^B
\mathbb{P}\!\left(
\operatorname{diam}(R_b(x)) > \varepsilon
\right)
\longrightarrow 0.
\]
Consequently,
\[
\sum_{k=1}^n w_k(x)\|X_{k,m} - x\|_2
\inprob 0.
\]

\section{Proof of lemma \ref{lemma4}}\label{proof_of_lemma5}

The proof of  lemma \ref{lemma4}, which is somewhat technical, is facilitated by Lemma~\ref{lemma6} below. 
\begin{lem}\label{lemma6}
	Let \(x \in \mathcal{X}\). Suppose that the assumptions of Lemma \ref{lemma4} are satisfied and let \(T_{k,m}= Y_{k,m} - \mathbb{E}[Y_{k,m}\mid X_{k,m}]\). Then
	\[
	\sum_{k=1}^n w_k(x)\,T_{k,m}\xrightarrow{\mathbb{P}} 0.
	\]
\end{lem}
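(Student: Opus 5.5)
We prove Lemma~\ref{lemma6} by a conditional second-moment (Chebyshev) argument. Throughout we condition on the $\sigma$-field $\mathcal{G}_n$ generated by the covariates $(X_{1,m},\dots,X_{n,m})$ and by the randomness used to build the forest.

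\textbf{Step 1: conditional structure.} By the honesty part of Assumption~\ref{assump_wk}, each weight $w_k(x)$ is, given $X_{k,m}$, independent of $Z_{k,m}$. We use this in the strengthened form that, conditionally on $\mathcal{G}_n$, the weights are fixed and the $T_{k,m}$ have mean zero. Second, the blocks $B_{k,m}$ are disjoint and the original sample is i.i.d. Hence, conditionally on the covariates, the pairs $(X_{k,m},Z_{k,m})$ are independent across $k$, and so are the $T_{k,m}$. This is the same independence already used for the $U_{k,m}$ in Lemma~\ref{lemme2}. By definition $\mathbb{E}[T_{k,m}\mid X_{k,m}]=0$.

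\textbf{Step 2: variance bound.} Using the conditional independence and zero conditional means, the cross terms vanish, so
\[
\mathbb{E}\Big[\Big(\sum_{k=1}^n w_k(x)T_{k,m}\Big)^2\,\Big|\,\mathcal{G}_n\Big]
=\sum_{k=1}^n w_k(x)^2\,\mathbb{E}\big[T_{k,m}^2\mid X_{k,m}\big].
\]
Since $\mathbb{E}[T_{k,m}^2\mid X_{k,m}]\le \mathbb{E}[Y_{k,m}^2\mid X_{k,m}]\le C$ by hypothesis~\eqref{Hypothese_H}, the right-hand side is at most $C\sum_k w_k(x)^2$. By \eqref{resul_wk2_to_0}, this is bounded by $C\,s/n$, a deterministic quantity tending to $0$.

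\textbf{Step 3: conclusion.} By the conditional Chebyshev inequality, for every $\varepsilon>0$,
\[
\mathbb{P}\Big(\Big|\sum_k w_k(x)T_{k,m}\Big|>\varepsilon\,\Big|\,\mathcal{G}_n\Big)\le \frac{C s}{n\varepsilon^2}.
\]
Taking expectations gives the unconditional bound, which tends to $0$. This proves the lemma.

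\textbf{Main obstacle.} The delicate step is Step 1, namely justifying that the cross terms $\mathbb{E}[w_kw_jT_{k,m}T_{j,m}]$ vanish. The weights depend on the whole sample, so stating honesty only as ``$w_k$ independent of $Z_{k,m}$ given $X_{k,m}$'' does not by itself make $w_kw_j$ independent of $(T_{k,m},T_{j,m})$. I would first try to read Assumption~\ref{assump_wk} in its standard honest-forest form, as in Wager and Athey: the weights are measurable with respect to the covariates and an independent splitting subsample, and the responses in the estimation subsample enter only through the leaf averages. Under that reading the conditioning on $\mathcal{G}_n$ is legitimate, and the rest is routine.
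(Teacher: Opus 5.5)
Your proposal is correct and follows essentially the same route as the paper. Both arguments use a second-moment bound in which the cross terms vanish by independence across blocks and $\mathbb{E}[T_{k,m}\mid X_{k,m}]=0$, the diagonal terms are controlled by \eqref{Hypothese_H}, and $\sum_k w_k^2(x)\le s/n$ gives $L^2$, hence in-probability, convergence. Conditioning on $\mathcal{G}_n$ simply makes explicit the measurability of the weights that the paper uses tacitly when it pulls $w_i(x)w_j(x)$ out of the conditional expectations given $\sigma(X_{i,m},X_{j,m})$. The honesty clause of Assumption~\ref{assump_wk}, read literally as a statement about each $w_k(x)$ separately, does not by itself justify that step for the cross terms.
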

\begin{proof}
	Let \(T_n = \sum_{k=1}^n w_k(x)T_{k,m}\). By the property of conditional expectation, we have
	\begin{eqnarray}\notag
		\mathbb{E}[T_{k,m}^2\mid X_{k,m}] &=&\mathbb{E}\left[Y_{k,m}^2+[\mathbb{E}(Y_{k,m}\mid X_{k,m})]^2-2Y_{k,m}\mathbb{E}(Y_{k,m}\mid X_{k,m})\Big| X_{k,m}\right]\\\notag
		&=&\mathbb{E}\left[Y_{k,m}^2\Big| X_{k,m}\right]+ \mathbb{E}\left[[\mathbb{E}(Y_{k,m}\mid X_{k,m})]^2\Big| X_{k,m}\right]-2\left[\mathbb{E}(Y_{k,m}\Big| X_{k,m})\right]^2\\\notag
		&\leq&\mathbb{E}\left[Y_{k,m}^2\Big| X_{k,m}\right]-\mathbb{E}\left[Y_{k,m}\Big| X_{k,m}\right]^2\\\notag
		&\leq&\mathbb{E}\left[Y_{k,m}^2\Big| X_{k,m}\right]\\\notag
		&\leq&C\quad \text{by equation (\ref{Hypothese_H}) of  Assumption~\ref{assump_cond_dist_Fm}.}
	\end{eqnarray}	
	Therefore, since the $(T_{k,m},X_{k,m})$ are independent for $k=1,\cdots,n$, we have
	\begin{eqnarray}\notag
		\mathbb{E}[T_n^2] &\leq& \mathbb{E}\left[\sum_{k=1}^n w_k^2(x) \mathbb{E}(T_{k,m}^2\mid X_{k,m})\right] +2\mathbb{E}\left[\sum_{i<j} w_i(x)w_j(x) \mathbb{E}(T_{i,m}T_{j,m}\mid \sigma (X_{i,m},X_{j,m}))\right]\\\notag
		&\le& C\mathbb{E}\left[\sum_{k=1}^n w_k^2(x) \right] +2\mathbb{E}\left[\sum_{i<j} w_i(x)w_j(x) \mathbb{E}(T_{i,m}\mid X_{i,m})\mathbb{E}(T_{j,m}\mid X_{j,m})\right]\\\notag
		&\le& C\mathbb{E}\left[\sum_{k=1}^n w_k^2(x) \right] \quad\text{since } \quad\mathbb{E}(T_{j,m}\mid X_{j,m})=0.
	\end{eqnarray}	
	Since \(\sum_{k=1}^n w_k(x)^2 \le \max_k w_k(x) \le s/n\), it follows that
	\[
	\mathbb{E}[T_n^2]  \le C\frac{s}{n} \to 0.
	\]	
	Therefore, \(T_n \xrightarrow{L^2} 0\) and consequently \(T_n \xrightarrow{\mathbb{P}} 0\).
	
\end{proof}
\begin{proof}{\textbf{(Proof of lemma \ref{lemma4})}}
	We have:  
	\[
	\sum_{k=1}^n w_k(x) Y_{k,m} - G_{\xi_0(x)}[\ell_{\theta_0}] =  \underbrace{\sum_{k=1}^n w_k(x)\left( Y_{k,m} - \mathbb{E}[Y_{k,m}\mid X_{k,m}]\right)}_{T_n} + \underbrace{\sum_{k=1}^n w_k(x) \bigl( \mathbb{E}[Y_{k,m}\mid X_{k,m}] - G_{\xi_0(x)}[\ell_{\theta_0}]\bigr) }_{V_n},
	\]  
	By Lemma \ref{lemma6}, we have \(T_n \xrightarrow{\mathbb{P}} 0\). It remains to show the convergence in probability of \(V_n\) to 0.\\	
	To this end, we use the local uniform convergence of  
	\( f_m(x) = \mathbb{E}[Y_{k,m} \mid X_{k,m} = x] \) toward the continuous function  
	\( \phi(x) := G_{\xi_0(x)}[\ell_{\theta_0}] \).	
	
	First, we establish the pointwise convergence of \( f_m(\cdot) \) to \( \phi(\cdot) \). For each fixed \( x \in \mathcal{X} \), Assumption \ref{assump1} ensures that the conditional distribution of \( (Z_{k,m} - b_m(x))/a_m(x) \) given \( X_{k,m} = x \) converges to \( G_{\xi_0(x)} \). By the continuous mapping theorem, we obtain the convergence in distribution of \( Y_{k,m} \) given \( X_{k,m} = x' \) toward \( \ell_{\theta_0}(Z) \), where \( Z \sim G_{\xi_0(x')} \). Moreover, equation (\ref{Hypothese_H}) provides a uniform \( L^2 \) bound, hence the sequence is uniformly integrable. Consequently,
	\begin{equation}\label{conv_ponctuel}
		f_m(x) := \mathbb{E}[Y_{k,m} \mid X_{k,m} = x] \longrightarrow \phi(x), \quad \forall x \in \mathcal{X}. 
	\end{equation}
	We now prove uniform convergence in a neighborhood of \( x \). The limit function \( \phi \) is continuous on the compact set \( \mathcal{X} \). Let \( \varepsilon > 0 \). By continuity of \( \phi \) at \( x \), there exists \( \delta_1 > 0 \) such that for all \( x' \) with \( \|x' - x\| < \delta_1 \),
	\begin{equation}\label{continut_phi}
		|\phi(x') - \phi(x)| < \frac{\varepsilon}{2}.
	\end{equation}
	On the other hand, the pointwise convergence \eqref{conv_ponctuel} together with the equicontinuity of \( \{f_m\} \)  implies uniform convergence on every compact set. In particular, on the closed ball, we have
	\[
	\sup_{\|x' - x\|_2 \leq \delta_1} |f_m(x') - \phi(x')| \longrightarrow 0 \quad \text{as } m \to \infty.
	\]
	Thus, there exists \( N_1 \) such that for all \( n \geq N_1 \) (and hence \( m = m(n) \) sufficiently large),
	\begin{equation}\label{conv_unif_compact}
		\sup_{\|x' - x\| \leq \delta_1} |f_m(x') - \phi(x')| < \frac{\varepsilon}{2}.
	\end{equation}
	Combining \eqref{continut_phi} and \eqref{conv_unif_compact}, for all \( n \geq N_1 \) and all \( x' \) such that \( \|x' - x\|_2 < \delta_1 \), we obtain
	\[
	|f_m(x') - \phi(x)| 
	\leq |f_m(x') - \phi(x')| + |\phi(x') - \phi(x)| 
	< \varepsilon.
	\]
	Setting \( \delta = \delta_1 \), we obtain the following property: for every \( \varepsilon > 0 \), there exist \( \delta > 0 \) and \( N \) such that for all \( n \geq N \) and all \( x' \) satisfying \( \|x' - x\| < \delta \),
	\[
	|f_m(x') - \phi(x)| < \varepsilon,
	\]
	that is,
	\begin{equation}\label{conv_unif_voisi_x}
		\Big|\mathbb{E}[Y_{k,m} \mid X_{k,m} = x'] - G_{\xi_0(x)}[\ell_{\theta_0}]\Big| < \varepsilon.
	\end{equation}
	Define \( A_n = \{k : \|X_{k,m} - x\|_2 \geq \delta\} \). We have the following decomposition
	\[
	\begin{aligned}
		|V_n| 
		&\leq \sum_{k \notin A_n} w_k(x) \Big|\mathbb{E}[Y_{k,m} \mid X_{k,m}] - G_{\xi_0(x)}[\ell_{\theta_0}]\Big|   + \sum_{k \in A_n} w_k(x) \Big|\mathbb{E}[Y_{k,m} \mid X_{k,m}] - G_{\xi_0(x)}[\ell_{\theta_0}]\Big|.
	\end{aligned}
	\]	
	The first term is bounded by \( \varepsilon \). Indeed,
	\begin{align}
		\sum_{k \notin A_n} w_k(x) 
		\Big|\mathbb{E}[Y_{k,m} \mid X_{k,m}] - G_{\xi_0(x)}[\ell_{\theta_0}]\Big|
		&\leq \varepsilon \sum_{k \notin A_n} w_k(x)\quad \text{ by equation \eqref{conv_unif_voisi_x}} \notag \\
		&\leq \varepsilon,\notag
	\end{align}
	since \( \sum_{k \notin A_n} w_k(x) \leq 1 \).
	For the second term, we use the following uniform bound provided by the Cauchy--Schwarz inequality and equation (\ref{Hypothese_H}):
	\begin{align}
		\Big|\mathbb{E}[Y_{k,m} \mid X_{k,m}]\Big|
		&\leq \sqrt{\mathbb{E}[Y_{k,m}^2 \mid X_{k,m}]} \notag \\
		&\leq \sqrt{C}.\notag
	\end{align}
	Similarly, the function \( x \mapsto |G_{\xi_0(x)}[\ell_{\theta_0}]| \) is bounded on the compact set $\mathcal{X}$. Hence, there exists a constant \( M \) such that
	\[
	\Big|\mathbb{E}[Y_{k,m} \mid X_{k,m}] - G_{\xi_0(x)}[\ell_{\theta_0}]\Big| \leq M.
	\]
	Therefore,
	\begin{align}
		\sum_{k \in A_n} w_k(x) 
		\Big|\mathbb{E}[Y_{k,m} \mid X_{k,m}] - G_{\xi_0(x)}[\ell_{\theta_0}]\Big|
		\leq M \sum_{k \in A_n} w_k(x).\notag
	\end{align}
	Consequently,
	\[
	|V_n| \leq \varepsilon + M \sum_{k \in A_n} w_k(x).
	\]	
	Moreover, for every \( k \in A_n \), we have  
	\[
	1 \leq \frac{1}{\delta} \|X_{k,m} - x\|_2,
	\]
	hence
	\[
	\sum_{k \in A_n} w_k(x)
	\leq \frac{1}{\delta} \sum_{k=1}^n w_k(x)\|X_{k,m} - x\|_2.
	\]
	Thus,
	\begin{align}
		\mathbb{P}\left(|V_n| > 2\varepsilon\right)
		&\leq \mathbb{P}\left(\varepsilon < M \sum_{k \in A_n} w_k(x)\right) \notag \\
		&\leq \mathbb{P}\left(\varepsilon < \frac{M}{\delta} 
		\sum_{k=1}^n w_k(x)\|X_{k,m} - x\|_2\right) \notag \\
		&\to 0,\notag
	\end{align}
	since 
	\[
	\sum_{k=1}^n w_k(x)\|X_{k,m} - x\|_2 
	\xrightarrow{\mathbb{P}} 0
	\]
	by Lemma \ref{lemme_sum_Wk_Liptchi_cond}.	
	We conclude that
	$
	V_n \xrightarrow{\mathbb{P}} 0.
	$
	In summary,
	\[
	\sum_{k=1}^n w_k(x) Y_{k,m} 
	- G_{\xi_0(x)}[\ell_{\theta_0}]
	= T_n + V_n,
	\]
	with \( V_n \xrightarrow{\mathbb{P}} 0 \) and \( T_n \xrightarrow{\mathbb{P}} 0 \). Therefore,
	\[
	\mathbb{P}_n^x[\ell_{\theta_0}] 
	\xrightarrow{\mathbb{P}} 
	G_{\xi_0(x)}[\ell_{\theta_0}].
	\]
	
\end{proof}

\section{Diagnostic Test Results for Sector Portfolios}\label{Appendix_diagnostic}

\begin{table}[h]
	\centering
	\renewcommand{\arraystretch}{1.5}
	
	\begin{tabular}{|l|c|c||c|c|}
		\hline
	\multirow{2}{*}{	\textbf{Portfolio}}
		&
		\multicolumn{2}{c||}{\textbf{Ljung-Box Test on the negative log-returns}}
		&
		\multicolumn{2}{c|}{\makecell{\textbf{ Ljung–Box test on the residuals of}\\\textbf{ the ARMA(2,2)–GARCH(1,1) model}}}
		\\ 	
		\hhline{~|----|}
		&
		\textbf{Statistic}
		&
		\textbf{p-value}
		&
		\textbf{Statistic}
		&
		\textbf{p-value}
		\\
	\hline
		NoDur & $734.1349$    & $ 10^{-16}$ 
		&1.459903 & $ 0.6915552$
		\\
	
			Durbl &$ 674.8170$ & $7.90 \times 10^{-16}$
		&$3.390246$ &  $0.3352782$
		\\
	
		Manuf & $697.8783$ & $10^{-16}$ 
		&$1.885603$ & $ 0.5964862$
		\\
	
		Enrgy & $319.3888$ & $1.11 \times 10^{-14}$
		&$2.446144$  & $0.4851027$
		\\
	
	HiTec &$634.4345$  & $4.44 \times 10^{-16}$
		&$4.687644$ &  $0.1961512$
		\\
	Telcm & $209.1109$& $6.91 \times 10^{-10}$ 
		&
		$4.155771$ &  $0.2451275$
		\\
		Shops &$797.1680 $ & $ 10^{-16}$ 
		&$4.541575$ &  $0.2086117$
		\\
	Hlth & $452.2590$ & $10^{-15}$
		&$2.204502$ &   $0.5310623$
		\\

	Utils & $113.6355$ & $ 10^{-9}$ 
		&
$	1.688870$ &  $0.6394110$
		\\
	
	Other & $735.8855 $ & $ 10^{-16}$
		&
	$2.454387$ &   $0.4835908$
		\\
		\hline
	\end{tabular}
	
	\caption{Ljung–Box test results for marginal losses and standardized residuals}
	\label{tab:diagnostic_tests}
	
\end{table}

\bibliographystyle{apalike}
 \bibliography{reference} 	


\begin{thebibliography}{41}
\ifx \bisbn   \undefined \def \bisbn  #1{ISBN #1}\fi
\ifx \binits  \undefined \def \binits#1{#1}\fi
\ifx \bauthor  \undefined \def \bauthor#1{#1}\fi
\ifx \batitle  \undefined \def \batitle#1{#1}\fi
\ifx \bjtitle  \undefined \def \bjtitle#1{#1}\fi
\ifx \bvolume  \undefined \def \bvolume#1{\textbf{#1}}\fi
\ifx \byear  \undefined \def \byear#1{#1}\fi
\ifx \bissue  \undefined \def \bissue#1{#1}\fi
\ifx \bfpage  \undefined \def \bfpage#1{#1}\fi
\ifx \blpage  \undefined \def \blpage #1{#1}\fi
\ifx \burl  \undefined \def \burl#1{\textsf{#1}}\fi
\ifx \doiurl  \undefined \def \doiurl#1{\url{https://doi.org/#1}}\fi
\ifx \betal  \undefined \def \betal{\textit{et al.}}\fi
\ifx \binstitute  \undefined \def \binstitute#1{#1}\fi
\ifx \binstitutionaled  \undefined \def \binstitutionaled#1{#1}\fi
\ifx \bctitle  \undefined \def \bctitle#1{#1}\fi
\ifx \beditor  \undefined \def \beditor#1{#1}\fi
\ifx \bpublisher  \undefined \def \bpublisher#1{#1}\fi
\ifx \bbtitle  \undefined \def \bbtitle#1{#1}\fi
\ifx \bedition  \undefined \def \bedition#1{#1}\fi
\ifx \bseriesno  \undefined \def \bseriesno#1{#1}\fi
\ifx \blocation  \undefined \def \blocation#1{#1}\fi
\ifx \bsertitle  \undefined \def \bsertitle#1{#1}\fi
\ifx \bsnm \undefined \def \bsnm#1{#1}\fi
\ifx \bsuffix \undefined \def \bsuffix#1{#1}\fi
\ifx \bparticle \undefined \def \bparticle#1{#1}\fi
\ifx \barticle \undefined \def \barticle#1{#1}\fi
\bibcommenthead
\ifx \bconfdate \undefined \def \bconfdate #1{#1}\fi
\ifx \botherref \undefined \def \botherref #1{#1}\fi
\ifx \url \undefined \def \url#1{\textsf{#1}}\fi
\ifx \bchapter \undefined \def \bchapter#1{#1}\fi
\ifx \bbook \undefined \def \bbook#1{#1}\fi
\ifx \bcomment \undefined \def \bcomment#1{#1}\fi
\ifx \oauthor \undefined \def \oauthor#1{#1}\fi
\ifx \citeauthoryear \undefined \def \citeauthoryear#1{#1}\fi
\ifx \endbibitem  \undefined \def \endbibitem {}\fi
\ifx \bconflocation  \undefined \def \bconflocation#1{#1}\fi
\ifx \arxivurl  \undefined \def \arxivurl#1{\textsf{#1}}\fi
\csname PreBibitemsHook\endcsname

\bibitem[\protect\citeauthoryear{Athey et~al.}{2019}]{athey_generalized_2019}
\begin{barticle}
\bauthor{\bsnm{Athey}, \binits{S.}},
\bauthor{\bsnm{Tibshirani}, \binits{J.}},
\bauthor{\bsnm{Wager}, \binits{S.}}:
\batitle{Generalized random forests}.
\bjtitle{The Annals of Statistics}
\bvolume{47}(\bissue{2}),
\bfpage{1148}--\blpage{1178}
(\byear{2019})
\doiurl{10.1214/18-AOS1709}
\end{barticle}
\endbibitem

\bibitem[\protect\citeauthoryear{Boudrissa
  et~al.}{2017}]{boudrissa_modelling_2017}
\begin{barticle}
\bauthor{\bsnm{Boudrissa}, \binits{N.}},
\bauthor{\bsnm{Cheraitia}, \binits{H.}},
\bauthor{\bsnm{Halimi}, \binits{L.}}:
\batitle{Modelling maximum daily yearly rainfall in northern {Algeria} using
  generalized extreme value distributions from 1936 to 2009}.
\bjtitle{Meteorological Applications}
\bvolume{24}(\bissue{1}),
\bfpage{114}--\blpage{119}
(\byear{2017})
\doiurl{10.1002/met.1610}
\end{barticle}
\endbibitem

\bibitem[\protect\citeauthoryear{Beyerlein}{2014}]{beyerlein_quantile_2014-1}
\begin{barticle}
\bauthor{\bsnm{Beyerlein}, \binits{A.}}:
\batitle{Quantile {Regression}--{Opportunities} and {Challenges} {From} a
  {User}'s {Perspective}}.
\bjtitle{American Journal of Epidemiology}
\bvolume{180}(\bissue{3}),
\bfpage{330}--\blpage{331}
(\byear{2014})
\doiurl{10.1093/aje/kwu178}
\end{barticle}
\endbibitem

\bibitem[\protect\citeauthoryear{Breiman}{2001}]{breiman_random_2001}
\begin{barticle}
\bauthor{\bsnm{Breiman}, \binits{L.}}:
\batitle{Random forests}.
\bjtitle{Machine learning}
\bvolume{45},
\bfpage{5}--\blpage{32}
(\byear{2001})
\end{barticle}
\endbibitem

\bibitem[\protect\citeauthoryear{Bremnes}{2004}]{bremnes_probabilistic_2004}
\begin{barticle}
\bauthor{\bsnm{Bremnes}, \binits{J.B.}}:
\batitle{Probabilistic {Forecasts} of {Precipitation} in {Terms} of {Quantiles}
  {Using} {NWP} {Model} {Output}}.
\bjtitle{Monthly Weather Review}
\bvolume{132}(\bissue{1}),
\bfpage{338}--\blpage{347}
(\byear{2004})
\doiurl{10.1175/1520-0493(2004)132<0338:PFOPIT>2.0.CO;2}
\end{barticle}
\endbibitem

\bibitem[\protect\citeauthoryear{Buhai}{2005}]{buhai_quantile_2005}
\begin{barticle}
\bauthor{\bsnm{Buhai}, \binits{S.}}:
\batitle{Quantile regression: overview and selected applications}.
\bjtitle{Ad Astra}
\bvolume{4}(\bissue{4}),
\bfpage{1}--\blpage{17}
(\byear{2005})
\end{barticle}
\endbibitem

\bibitem[\protect\citeauthoryear{Chernozhukov and
  Fernandez-Val}{2011}]{chernozhukov_inference_2011}
\begin{barticle}
\bauthor{\bsnm{Chernozhukov}, \binits{V.}},
\bauthor{\bsnm{Fernandez-Val}, \binits{I.}}:
\batitle{Inference for {Extremal} {Conditional} {Quantile} {Models}, with an
  {Application} to {Market} and {Birthweight} {Risks}}.
\bjtitle{The Review of Economic Studies}
\bvolume{78}(\bissue{2}),
\bfpage{559}--\blpage{589}
(\byear{2011})
\doiurl{10.1093/restud/rdq020}
\end{barticle}
\endbibitem

\bibitem[\protect\citeauthoryear{Calabrese and
  Giudici}{2015}]{calabrese_estimating_2015}
\begin{barticle}
\bauthor{\bsnm{Calabrese}, \binits{R.}},
\bauthor{\bsnm{Giudici}, \binits{P.}}:
\batitle{Estimating bank default with generalised extreme value regression
  models}.
\bjtitle{The Journal of the Operational Research Society}
\bvolume{66}(\bissue{11}),
\bfpage{1783}--\blpage{1792}
(\byear{2015})
\end{barticle}
\endbibitem

\bibitem[\protect\citeauthoryear{Chernozhukov}{2005}]{chernozhukov_extremal_2005}
\begin{botherref}
\oauthor{\bsnm{Chernozhukov}, \binits{V.}}:
Extremal quantile regression.
Annals of Statistics,
806--839
(2005)
\doiurl{10.1214/009053604000001165}
\end{botherref}
\endbibitem

\bibitem[\protect\citeauthoryear{Chaudhuri and
  Loh}{2002}]{chaudhuri_nonparametric_2002}
\begin{botherref}
\oauthor{\bsnm{Chaudhuri}, \binits{P.}},
\oauthor{\bsnm{Loh}, \binits{W.-Y.}}:
Nonparametric estimation of conditional quantiles using quantile regression
  trees.
Bernoulli,
561--576
(2002)
\end{botherref}
\endbibitem

\bibitem[\protect\citeauthoryear{Coles}{2001}]{coles_introduction_2001-1}
\begin{bbook}
\bauthor{\bsnm{Coles}, \binits{S.}}:
\bbtitle{An {Introduction} to {Statistical} {Modeling} of {Extreme} {Values}}.
\bsertitle{Springer {Series} in {Statistics}}.
\bpublisher{Springer},
\blocation{London}
(\byear{2001}).
\doiurl{10.1007/978-1-4471-3675-0}
\end{bbook}
\endbibitem

\bibitem[\protect\citeauthoryear{Dombry and
  Ferreira}{2019}]{dombry_maximum_2019}
\begin{barticle}
\bauthor{\bsnm{Dombry}, \binits{C.}},
\bauthor{\bsnm{Ferreira}, \binits{A.}}:
\batitle{Maximum likelihood estimators based on the block maxima method}.
\bjtitle{Bernoulli}
\bvolume{25}(\bissue{3}),
\bfpage{1690}--\blpage{1723}
(\byear{2019})
\doiurl{10.3150/18-BEJ1032}
\end{barticle}
\endbibitem

\bibitem[\protect\citeauthoryear{Daouia et~al.}{2013}]{daouia_kernel_2013-1}
\begin{barticle}
\bauthor{\bsnm{Daouia}, \binits{A.}},
\bauthor{\bsnm{Gardes}, \binits{L.}},
\bauthor{\bsnm{Girard}, \binits{S.}}:
\batitle{On kernel smoothing for extremal quantile regression}.
\bjtitle{Bernoulli}
\bvolume{19}(\bissue{5B}),
\bfpage{2557}--\blpage{2589}
(\byear{2013})
\doiurl{10.3150/12-BEJ466}
\end{barticle}
\endbibitem

\bibitem[\protect\citeauthoryear{Dombry}{2015}]{dombry_existence_2015}
\begin{barticle}
\bauthor{\bsnm{Dombry}, \binits{C.}}:
\batitle{Existence and consistency of the maximum likelihood estimators for the
  extreme value index within the block maxima framework}.
\bjtitle{Bernoulli}
\bvolume{21}(\bissue{1}),
\bfpage{420}--\blpage{436}
(\byear{2015})
\doiurl{10.3150/13-BEJ573}
\end{barticle}
\endbibitem

\bibitem[\protect\citeauthoryear{Farkas et~al.}{2024}]{farkas_generalized_2024}
\begin{barticle}
\bauthor{\bsnm{Farkas}, \binits{S.}},
\bauthor{\bsnm{Heranval}, \binits{A.}},
\bauthor{\bsnm{Lopez}, \binits{O.}},
\bauthor{\bsnm{Thomas}, \binits{M.}}:
\batitle{Generalized pareto regression trees for extreme event analysis}.
\bjtitle{Extremes}
(\byear{2024})
\doiurl{10.1007/s10687-024-00485-1}
\end{barticle}
\endbibitem

\bibitem[\protect\citeauthoryear{Fakoor et~al.}{2023}]{fakoor_flexible_2023}
\begin{barticle}
\bauthor{\bsnm{Fakoor}, \binits{R.}},
\bauthor{\bsnm{Kim}, \binits{T.}},
\bauthor{\bsnm{Mueller}, \binits{J.}},
\bauthor{\bsnm{Smola}, \binits{A.J.}},
\bauthor{\bsnm{Tibshirani}, \binits{R.J.}}:
\batitle{Flexible model aggregation for quantile regression}.
\bjtitle{Journal of Machine Learning Research}
\bvolume{24}(\bissue{162}),
\bfpage{1}--\blpage{45}
(\byear{2023})
\end{barticle}
\endbibitem

\bibitem[\protect\citeauthoryear{Fisher and
  Tippett}{1928}]{fisher_limiting_1928}
\begin{barticle}
\bauthor{\bsnm{Fisher}, \binits{R.A.}},
\bauthor{\bsnm{Tippett}, \binits{L.H.C.}}:
\batitle{Limiting forms of the frequency distribution of the largest or
  smallest member of a sample}.
\bjtitle{Mathematical Proceedings of the Cambridge Philosophical Society}
\bvolume{24}(\bissue{2}),
\bfpage{180}--\blpage{190}
(\byear{1928})
\doiurl{10.1017/S0305004100015681}
\end{barticle}
\endbibitem

\bibitem[\protect\citeauthoryear{Gnedenko}{1943}]{gnedenko_sur_1943}
\begin{barticle}
\bauthor{\bsnm{Gnedenko}, \binits{B.}}:
\batitle{Sur {La} {Distribution} {Limite} {Du} {Terme} {Maximum} {D}'{Une}
  {Serie} {Aleatoire}}.
\bjtitle{Annals of Mathematics}
\bvolume{44}(\bissue{3}),
\bfpage{423}--\blpage{453}
(\byear{1943})
\doiurl{10.2307/1968974}
\end{barticle}
\endbibitem

\bibitem[\protect\citeauthoryear{Gardes and
  Stupfler}{2015}]{gardes_estimating_2015}
\begin{barticle}
\bauthor{\bsnm{Gardes}, \binits{L.}},
\bauthor{\bsnm{Stupfler}, \binits{G.}}:
\batitle{Estimating extreme quantiles under random truncation}.
\bjtitle{TEST}
\bvolume{24}(\bissue{2}),
\bfpage{207}--\blpage{227}
(\byear{2015})
\doiurl{10.1007/s11749-014-0403-5}
\end{barticle}
\endbibitem

\bibitem[\protect\citeauthoryear{Gardes and
  Stupfler}{2019}]{gardes_integrated_2019}
\begin{barticle}
\bauthor{\bsnm{Gardes}, \binits{L.}},
\bauthor{\bsnm{Stupfler}, \binits{G.}}:
\batitle{An integrated functional {Weissman} estimator for conditional extreme
  quantiles}.
\bjtitle{REVSTAT-Statistical Journal}
\bvolume{17}(\bissue{1}),
\bfpage{109}--\blpage{144}
(\byear{2019})
\end{barticle}
\endbibitem

\bibitem[\protect\citeauthoryear{Gnecco et~al.}{2024}]{gnecco_extremal_2024}
\begin{botherref}
\oauthor{\bsnm{Gnecco}, \binits{N.}},
\oauthor{\bsnm{Terefe}, \binits{E.M.}},
\oauthor{\bsnm{Engelke}, \binits{S.}}:
Extremal {Random} {Forests}.
Journal of the American Statistical Association,
1--24
(2024)
\doiurl{10.1080/01621459.2023.2300522}
\end{botherref}
\endbibitem

\bibitem[\protect\citeauthoryear{Haan and
  Ferreira}{2006}]{de_haan_extreme_2006}
\begin{bbook}
\bauthor{\bsnm{Haan}, \binits{L.}},
\bauthor{\bsnm{Ferreira}, \binits{A.}}:
\bbtitle{Extreme {Value} {Theory}}.
\bsertitle{Springer {Series} in {Operations} {Research} and {Financial}
  {Engineering}}.
\bpublisher{Springer},
\blocation{New York, NY}
(\byear{2006}).
\doiurl{10.1007/0-387-34471-3}
\end{bbook}
\endbibitem

\bibitem[\protect\citeauthoryear{Hallock and
  Koenker}{2001}]{hallock_quantile_2001}
\begin{barticle}
\bauthor{\bsnm{Hallock}, \binits{K.F.}},
\bauthor{\bsnm{Koenker}, \binits{R.}}:
\batitle{Quantile regression}.
\bjtitle{The Journal of Economic Perspectives}
\bvolume{15}(\bissue{4}),
\bfpage{143}
(\byear{2001})
\end{barticle}
\endbibitem

\bibitem[\protect\citeauthoryear{Kallenberg}{2002}]{kallenberg_foundations_2002}
\begin{bbook}
\bauthor{\bsnm{Kallenberg}, \binits{O.}}:
\bbtitle{Foundations of {Modern} {Probability}}.
\bsertitle{Probability and {Its} {Applications}}.
\bpublisher{Springer},
\blocation{New York, NY}
(\byear{2002}).
\doiurl{10.1007/978-1-4757-4015-8}
\end{bbook}
\endbibitem

\bibitem[\protect\citeauthoryear{Koenker and
  Bassett}{1978}]{koenker_regression_1978}
\begin{barticle}
\bauthor{\bsnm{Koenker}, \binits{R.}},
\bauthor{\bsnm{Bassett}, \binits{G.}}:
\batitle{Regression {Quantiles}}.
\bjtitle{Econometrica}
\bvolume{46}(\bissue{1}),
\bfpage{33}
(\byear{1978})
\doiurl{10.2307/1913643}
\end{barticle}
\endbibitem

\bibitem[\protect\citeauthoryear{Koenker
  et~al.}{2017}]{chernozhukov_extremal_2017}
\begin{bbook}
\bauthor{\bsnm{Koenker}, \binits{R.}},
\bauthor{\bsnm{Chernozhukov}, \binits{V.}},
\bauthor{\bsnm{He}, \binits{X.}},
\bauthor{\bsnm{Peng}, \binits{L.}}:
\bbtitle{Handbook of {Quantile} {Regression}}.
\bpublisher{Chapman and Hall/CRC},
\blocation{New York}
(\byear{2017}).
\doiurl{10.1201/9781315120256}
\end{bbook}
\endbibitem

\bibitem[\protect\citeauthoryear{Lin and Jeon}{2006}]{lin_random_2006}
\begin{barticle}
\bauthor{\bsnm{Lin}, \binits{Y.}},
\bauthor{\bsnm{Jeon}, \binits{Y.}}:
\batitle{Random forests and adaptive nearest neighbors}.
\bjtitle{Journal of the American Statistical Association}
\bvolume{101}(\bissue{474}),
\bfpage{578}--\blpage{590}
(\byear{2006})
\end{barticle}
\endbibitem

\bibitem[\protect\citeauthoryear{Meinshausen and
  Ridgeway}{2006}]{meinshausen_quantile_2006}
\begin{botherref}
\oauthor{\bsnm{Meinshausen}, \binits{N.}},
\oauthor{\bsnm{Ridgeway}, \binits{G.}}:
Quantile regression forests.
Journal of machine learning research
\textbf{7}(6)
(2006)
\end{botherref}
\endbibitem

\bibitem[\protect\citeauthoryear{Pasche and Engelke}{2024}]{pasche_neural_2024}
\begin{barticle}
\bauthor{\bsnm{Pasche}, \binits{O.C.}},
\bauthor{\bsnm{Engelke}, \binits{S.}}:
\batitle{Neural networks for extreme quantile regression with an application to
  forecasting of flood risk}.
\bjtitle{The Annals of Applied Statistics}
\bvolume{18}(\bissue{4}),
\bfpage{2818}--\blpage{2839}
(\byear{2024})
\doiurl{10.1214/24-AOAS1907}
\end{barticle}
\endbibitem

\bibitem[\protect\citeauthoryear{Smith}{1985}]{smith_maximum_1985}
\begin{barticle}
\bauthor{\bsnm{Smith}, \binits{R.L.}}:
\batitle{Maximum likelihood estimation in a class of nonregular cases}.
\bjtitle{Biometrika}
\bvolume{72}(\bissue{1}),
\bfpage{67}--\blpage{90}
(\byear{1985})
\doiurl{10.1093/biomet/72.1.67}
\end{barticle}
\endbibitem

\bibitem[\protect\citeauthoryear{Vidagbandji
  et~al.}{2025}]{vidagbandji_generalized_2025}
\begin{barticle}
\bauthor{\bsnm{Vidagbandji}, \binits{L.M.}},
\bauthor{\bsnm{Berred}, \binits{A.}},
\bauthor{\bsnm{Bertelle}, \binits{C.}},
\bauthor{\bsnm{Amanton}, \binits{L.}}:
\batitle{Generalized random forest for extreme quantile regression}.
\bjtitle{Communications in Statistics - Simulation and Computation}
\bvolume{0}(\bissue{0}),
\bfpage{1}--\blpage{24}
(\byear{2025})
\doiurl{10.1080/03610918.2025.2543854}
\end{barticle}
\endbibitem

\bibitem[\protect\citeauthoryear{Vidagbandji
  et~al.}{2026}]{vidagbandji_penalized_2026-1}
\begin{barticle}
\bauthor{\bsnm{Vidagbandji}, \binits{L.M.}},
\bauthor{\bsnm{Berred}, \binits{A.}},
\bauthor{\bsnm{Bertelle}, \binits{C.}},
\bauthor{\bsnm{Amanton}, \binits{L.}}:
\batitle{Penalized {Estimation} of {GEV} {Parameters} for {Extreme} {Quantile}
  {Regression}}.
\bjtitle{Journal of Statistical Theory and Practice}
\bvolume{20}(\bissue{2}),
\bfpage{55}
(\byear{2026})
\doiurl{10.1007/s42519-026-00568-1}
\end{barticle}
\endbibitem

\bibitem[\protect\citeauthoryear{Velthoen
  et~al.}{2023}]{velthoen_gradient_2023}
\begin{barticle}
\bauthor{\bsnm{Velthoen}, \binits{J.}},
\bauthor{\bsnm{Dombry}, \binits{C.}},
\bauthor{\bsnm{Cai}, \binits{J.-J.}},
\bauthor{\bsnm{Engelke}, \binits{S.}}:
\batitle{Gradient boosting for extreme quantile regression}.
\bjtitle{Extremes}
\bvolume{26}(\bissue{4}),
\bfpage{639}--\blpage{667}
(\byear{2023})
\doiurl{10.1007/s10687-023-00473-x}
\end{barticle}
\endbibitem

\bibitem[\protect\citeauthoryear{van~der
  Vaart}{1998}]{van_der_vaart_asymptotic_1998}
\begin{bbook}
\bauthor{\bsnm{Vaart}, \binits{A.W.}}:
\bbtitle{Asymptotic {Statistics}}.
\bsertitle{Cambridge {Series} in {Statistical} and {Probabilistic}
  {Mathematics}}.
\bpublisher{Cambridge University Press},
\blocation{Cambridge}
(\byear{1998}).
\doiurl{10.1017/CBO9780511802256}
\end{bbook}
\endbibitem

\bibitem[\protect\citeauthoryear{Wager and Athey}{2018}]{wager_estimation_2018}
\begin{barticle}
\bauthor{\bsnm{Wager}, \binits{S.}},
\bauthor{\bsnm{Athey}, \binits{S.}}:
\batitle{Estimation and inference of heterogeneous treatment effects using
  random forests}.
\bjtitle{Journal of the American Statistical Association}
\bvolume{113}(\bissue{523}),
\bfpage{1228}--\blpage{1242}
(\byear{2018})
\end{barticle}
\endbibitem

\bibitem[\protect\citeauthoryear{Wang and Li}{2013}]{wang_estimation_2013}
\begin{barticle}
\bauthor{\bsnm{Wang}, \binits{H.J.}},
\bauthor{\bsnm{Li}, \binits{D.}}:
\batitle{Estimation of {Extreme} {Conditional} {Quantiles} {Through} {Power}
  {Transformation}}.
\bjtitle{Journal of the American Statistical Association}
\bvolume{108}(\bissue{503}),
\bfpage{1062}--\blpage{1074}
(\byear{2013})
\doiurl{10.1080/01621459.2013.820134}
\end{barticle}
\endbibitem

\bibitem[\protect\citeauthoryear{Wang et~al.}{2012}]{wang_estimation_2012}
\begin{barticle}
\bauthor{\bsnm{Wang}, \binits{H.J.}},
\bauthor{\bsnm{Li}, \binits{D.}},
\bauthor{\bsnm{He}, \binits{X.}}:
\batitle{Estimation of {High} {Conditional} {Quantiles} for {Heavy}-{Tailed}
  {Distributions}}.
\bjtitle{Journal of the American Statistical Association}
\bvolume{107}(\bissue{500}),
\bfpage{1453}--\blpage{1464}
(\byear{2012})
\doiurl{10.1080/01621459.2012.716382}
\end{barticle}
\endbibitem

\bibitem[\protect\citeauthoryear{Wang and Tsai}{2009}]{wang_tail_2009}
\begin{barticle}
\bauthor{\bsnm{Wang}, \binits{H.}},
\bauthor{\bsnm{Tsai}, \binits{C.-L.}}:
\batitle{Tail index regression}.
\bjtitle{Journal of the American Statistical Association}
\bvolume{104}(\bissue{487}),
\bfpage{1233}--\blpage{1240}
(\byear{2009})
\end{barticle}
\endbibitem

\bibitem[\protect\citeauthoryear{Youngman}{2019}]{youngman_generalized_2019}
\begin{barticle}
\bauthor{\bsnm{Youngman}, \binits{B.D.}}:
\batitle{Generalized {Additive} {Models} for {Exceedances} of {High}
  {Thresholds} {With} an {Application} to {Return} {Level} {Estimation} for
  {U}.{S}. {Wind} {Gusts}}.
\bjtitle{Journal of the American Statistical Association}
\bvolume{114}(\bissue{528}),
\bfpage{1865}--\blpage{1879}
(\byear{2019})
\doiurl{10.1080/01621459.2018.1529596}
\end{barticle}
\endbibitem

\bibitem[\protect\citeauthoryear{Youngman}{2022}]{youngman_evgam_2022}
\begin{barticle}
\bauthor{\bsnm{Youngman}, \binits{B.D.}}:
\batitle{evgam: {An} {R} package for generalized additive extreme value
  models}.
\bjtitle{Journal of Statistical Software}
\bvolume{103},
\bfpage{1}--\blpage{26}
(\byear{2022})
\end{barticle}
\endbibitem

\bibitem[\protect\citeauthoryear{Zhou}{2009}]{zhou_existence_2009}
\begin{barticle}
\bauthor{\bsnm{Zhou}, \binits{C.}}:
\batitle{Existence and consistency of the maximum likelihood estimator for the
  extreme value index}.
\bjtitle{Journal of Multivariate Analysis}
\bvolume{100}(\bissue{4}),
\bfpage{794}--\blpage{815}
(\byear{2009})
\doiurl{10.1016/j.jmva.2008.08.009}
\end{barticle}
\endbibitem

\end{thebibliography}
 \end{document}